\newtheorem{prop}{Proposition}
\newtheorem{conj}{Conjecture}
\def\@fpheader{\relax}
\newcommand\be{\begin{equation}}
\newcommand\ee{\end{equation}}
\newcommand\bea{\begin{eqnarray}}
\newcommand\eea{\end{eqnarray}}
\newcommand\ba{\begin{array}}
\newcommand\ea{\end{array}}
\newcommand\eref[1]{(\ref{#1})}
\newcommand\bc{\begin{center}}
\newcommand\ec{\end{center}}
\newcommand{\kae}{K{\"a}hler\xspace}
\renewcommand\comment[1]{}
\def\IC{\mathbb{C}}
\def\IP{\mathbb{P}}
\def\ZZ{\mathbb{Z}}
\let\e=\epsilon
\let\l=\lambda
\def\mB#1{\makebox[2em][r]{$#1$}}
\renewcommand{\TH}[1]{{\color{Magenta}$^!$%
               \marginpar{\leavevmode\color{Magenta}\raggedright%
                           \scriptsize{\sf TH:} #1}}}
 \numberwithin{equation}{section}
\title{
Machine Learned Calabi--Yau Metrics and Curvature}
\author[a]{Per Berglund}
\author[a]{\!\!, Giorgi Butbaia}
\author[b]{\!\!, Tristan H\"ubsch}
\author[c]{\!\!, Vishnu Jejjala}
\author[c]{\!\!, \\ Dami\'an Mayorga Pe\~na}
\author[d]{\!\!, Challenger Mishra}
\author[d]{\!\!, Justin Tan}
\affiliation[\,a]{Department of Physics and Astronomy, University of New Hampshire, Durham, NH 03824, USA}
\affiliation[\,b]{Department of Physics and Astronomy, Howard University, Washington, DC 20059, USA}
\affiliation[\,c]{Mandelstam Institute for Theoretical Physics, School of Physics, NITheCS, and CoE-MaSS,\\
University of the Witwatersrand, Johannesburg, WITS 2050, South Africa}
\affiliation[\,d]{Department of Computer Science \& Technology, University of Cambridge, Cambridge CB3 0FD, UK}
\emailAdd{Per.Berglund@unh.edu}
\emailAdd{Giorgi.Butbaia@unh.edu}
\emailAdd{thubsch@howard.edu}
\emailAdd{v.jejjala@wits.ac.za}
\emailAdd{damian.mayorgapena@wits.ac.za}
\emailAdd{cm2099@cam.ac.uk}
\emailAdd{jt796@cam.ac.uk}
\abstract{
Finding Ricci-flat (Calabi--Yau) metrics is a long standing problem in geometry with deep implications for string theory and phenomenology.
A new attack on this problem uses neural networks to engineer approximations to the Calabi--Yau metric within a given K\"ahler class.
In this paper we investigate numerical Ricci-flat metrics over smooth and singular K3 surfaces and Calabi--Yau threefolds.
Using these Ricci-flat metric approximations for the Cefal\'u family of quartic twofolds and the Dwork family of quintic threefolds, we study characteristic forms on these geometries.
We observe that the numerical stability of the numerically computed topological characteristic is heavily influenced by the choice of the neural network model, in particular, we briefly discuss a different neural network model, namely spectral networks, which correctly approximate the topological characteristic of a Calabi--Yau. Using persistent homology, we show that high curvature regions of the manifolds form clusters near the singular points.
For our neural network approximations, we observe a Bogomolov--Yau type inequality $3c_2 \geq c_1^2$ and observe an identity when our geometries have isolated $A_1$ type singularities.
We sketch a proof that $\chi(X~\smallsetminus~\mathrm{Sing}\,{X}) + 2~|\mathrm{Sing}\,{X}| = 24$ also holds for our numerical approximations.
}
\def\layersep{2.0cm}
\def\widthNN{4}
\begin{document}

\maketitle
\parskip=5pt

\section{Introduction, rationale, and summary}
\label{s:IRS}
Ricci-flat metrics satisfy the Einstein equations without any energy-momentum tensor source, and so describe empty spacetime.
They are thus of fundamental physical interest and have been studied intensely for over a century.
The closely related \kae metrics on compact, complex geometries of Euclidean signature --- the Calabi--Yau spaces --- describe (to lowest order) the extra spacelike dimensions in string compactifications.

While many Calabi--Yau properties are accessed purely via topology, knowledge of the Ricci-flat metric is crucial for certain explicit computations, such as finding $\alpha^\prime$ corrections, fixing the K\"ahler potential, determining Yukawa couplings, and deducing aspects of supersymmetry breaking in the low-energy effective field theory~\cite{Candelas:1985en,Candelas:1987rx,Green:1987mn}.
All of these are prerequisites for calculating the proverbial electron mass in a Standard Model derived from string compactification.

Yau's proof~\cite{yau1977calabi, Yau:1978} of the Calabi conjecture~\cite{Calabi:1954} is famously not constructive.
We know that a Calabi--Yau manifold has a unique Ricci-flat metric in each \kae class, but not the actual expressions for such metrics.
In recent years, there has been revived interest in obtaining Ricci-flat Calabi--Yau metrics.
On the analytical side, explicit expressions have been obtained for the metric on certain K3 manifolds~\cite{Kachru:2018van, Kachru:2020tat}.\footnote{
The analytic expression for the flat metric on the Calabi--Yau onefold was written by Clifford in 1873~\cite{volkert2013space}.
}
On the computational side, various machine learning techniques have been employed to obtain numerical as well as spectral approximations to flat metrics~\cite{Ashmore:2019wzb, Anderson:2020hux, Douglas:2020hpv, Jejjala:2020wcc, Larfors:2021pbb, Ashmore:2021ohf, Larfors:2022nep, Ashmore:2020ujw, Ashmore:2021qdf}. 
In some of these machine driven approaches, neural networks have provided new representations to study metrics over manifolds with special holonomy. Although there are guarantees to a neural network's ability to approximate reasonably well behaved functions with arbitrary accuracies, there are further mathematical guarantees for neurocomputing stemming from the powerful Kolmogorov--Arnold representation theorem from the middle of the previous century. 
To date, the state of the art in approximating Ricci-flat Calabi--Yau metrics using machine learning methods is the \texttt{cymetric} code~\cite{Larfors:2021pbb,cymetric}, which permits us to obtain approximate flat metrics for complete intersection (CICYs), as well as toric Calabi--Yau spaces derived from triangulations of reflexive polytopes from the Kreuzer--Skarke list~\cite{Kreuzer:2000xy}.
Past efforts to obtain flat metrics include the so-called Donaldson's algorithm~\cite{donaldson2001,donaldson2005some,Douglas:2006rr} as well as applications of the Gauss--Seidel method in~\cite{Headrick:2005ch}, the optimization of energy functionals~\cite{Headrick:2009jz}, and examination of scaling properties~\cite{Cui:2019uhy, Douglas:2021zdn}.
(See also~\cite{Douglas:2006hz, Braun:2007sn, Braun:2008jp, Anderson:2010ke, Anderson:2011ed}.)

In this paper, we apply machine learning techniques to singular and non-singular K3 manifolds and quintic threefolds.
One of the goals of this work is to characterize how good the machine learned metrics are for phenomenology.
In order to perform this assessment, we consider computations of topological quantities such as the Euler character and the Chern classes as obtained from the metric --- as well as the corresponding curvature \emph{distributions}.
Thereby, we seek to determine which parts of the geometry contribute most significantly to the various topological quantities.
To benchmark these observations, it is useful to compare the Fubini--Study metric to the machine learned metric.

We also propose a different neural network model, called \emph{spectral networks}, for approximating the Ricci-flat metric. We find that the numerical invariants computed using the spectral networks exhibit higher numerical stability than standard fully-connected networks which directly use the homogeneous coordinates as input. Furthermore, we briefly discuss the final loss achieved by these networks and find that for Fermat quartic, the lowest $\sigma$-loss is below $10^{-3}$, which is at the same level of accuracy as the method described in~\cite{Headrick:2009jz} using $k=8$.

\comment{\sf\slshape
\begin{itemize}
\item Numerical computations of the metric: what's known and with which we may compare\TH{!`more refs!!!}~\cite{Kachru:2020tat, Headrick:2005ch}
\item {\color{red} List main results \& sketch a road-map through the paper.}
\begin{enumerate}
\item Reproduction/checking of topological data ($\chi_E$) using Fubini--Study and the learned Ricci-flat metric for K3
\item Fubini--Study metric computations vs. (learned) Ricci-flat metric for K3
\item Singularities mess up the (learned) Ricci-flat metric for K3
\item Cautionary note, for using the Ricci-flat computation w/o knowing if the hypersurface is singular
\item Compute $\chi$ for Fermat/Dwork quintics, Tian--Yau with Fubini--Study and learned
\item Toric vs.\ CICY for quintic
\item Compute Yukawas for compactification on Tian--Yau
\item Somewhere we should define $e(J)$
\end{enumerate}
\end{itemize}
}

The organization of this paper is as follows.
Section~\ref{sec:models} describes the deformation families of Calabi--Yau twofolds and threefolds we investigate in this paper and the considered curvature related features.
Section~\ref{sec:nm} briefly summarizes the numerical methods we apply.
Section~\ref{sec:results} considers the machine learned metrics.
Certain details of numerical computations are discussed in the appendices.

\section{The testbed models and their curvature}
\label{sec:models}
We consider several simple, one parameter deformation families of Calabi--Yau twofolds and threefolds.
In each case, we focus on a few curvature related features for which we compare the results obtained with a numerical approximation to the Ricci-flat metric, with those obtained using the pullback of the Fubini--Study metric, as well as with the known exact results. 

\subsection{The deformation families}
\label{s:deformations}
\paragraph{The Cefal{\'u} pencil:}
Consider a complex one parameter deformation family of quartics in $\IP^3$~\cite{Catanese:2021aa}:
\begin{equation}
   \IP^3\supset X_\l\,{:=}\,\big\{p_\l(z)\,{=}\,0\big\}:\quad
   p_\l(z)\!:=\!\sum_{i=0}^3z_i^{4}
                     -\frac\l3\left(\sum_{i=0}^3z_i^{2}\right)^2 ~.
 \label{e:Cefalu}
\end{equation}
The Cefal{\'u} hypersurface~\cite{Catanese:2021aa} is the $\l\,{=}\,1$ case.
We call the general $\lambda$ deformation, the Cefal\'u family (pencil) of quartics.
 While this deformation family of hypersurfaces provides for a rather more detailed analytic analysis~\cite{Candelas:1990rm,Candelas:1990qd,Nahm:1999ps,Wendland:2003ma},
 we focus on a few immediate results for the purpose of comparing with numerical computations of the metric and various metric characteristics on these K3 surfaces.
For each $\l\,{\in}\,\IC$, the defining polynomial $p_\l(z)$ is manifestly invariant under all permutations of the $z_i$, as well as sign changes $z_i\,{\mapsto}\,-z_i$, separately for each $i=0,1,2,3$.
Subject to preserving the holomorphic two-form,
 $\Omega\,{:=}\oint\frac{(z\,{\rm d}^3z)}{p_\l(z)}$, \textit{viz.}, the Calabi--Yau condition, and modulo the $\IP^3$-projectivization, this generates an $S_4\times(\ZZ_2)^2$ symmetry.
 Of the various possible quotients, here we only need $X_0/\ZZ_2$.
 The overall situation is sketched in Figure~\ref{f:Cefalu}, and includes both the $\l$-plane of the hypersurfaces~\eqref{e:Cefalu} as well as the $\l$-plane of their $\ZZ_2$ quotients.
\begin{figure}[htb]
 \begin{center}
   \begin{tikzpicture}
     \path[use as bounding box](-7,-1.7)--(7.5,2.5);
     \fill[yellow!30](-7,-1)--(4,-2)--(7,1.5)--(0,1.5);
      \path(-5.5,-.8)node{\large$\l$};
     \fill[gray!10!cyan!20, opacity=.75](-7,1)--(4,1)--(7,2)--(0,2);
     \draw[gray, -stealth, ultra thick](0,0)--(-10:5.2);
     \draw[densely dotted, ->, thick](0,0)--(0,1.4);
      \path(.1,.9)node[left]{\footnotesize$\big/\ZZ_2$};
     \draw[red!30!gray, double, dotted, thick](0,1.5)--(-10:1.125);
      \path[red!30!gray](.6,.6)node{$\scriptstyle\spadesuit$};
     \draw[densely dotted, ->, thick](-5,-.8)--(-5,1.1);
      \path(-4.9,.2)node[left]{\footnotesize$\big/\ZZ_2$};
     \draw[blue!80!black, dashed, -stealth, very thick]
         (4.35,.75)to[out=175,in=0](.1,1.5);
      \path(-10:5.2)node[right]{$\Re(\l)$};
       \path(2.1,1.33)node[blue!75!black, right, rotate=7]
         {\footnotesize K{\"a}hler class variation~\cite{Headrick:2005ch}};
      \path(0,.2)node[left]{\small$X_0{:=}\{p_0(z)\!=\!0\}$~~};
      \path[red!75!black]{(-10:1.125)+(.1,-.1)}
         node[left, rotate=7]{\small$X_{\frac34}{:=}\{p_{\frac34}(z)\!=\!0\}$~~};
      \path[blue!75!black]{(-10:1.5)+(0,.1)}
         node[right, rotate=13]{\small$X_1{:=}\{p_{1}(z)\!=\!0\}\!\approx\!
                                 \{T^4/\ZZ_2\}$};
      \path[green!60!black]{(-10:2.25)+(.2,-.1)}
         node[left, rotate=15]{\small$X_{\frac32}{:=}\{p_{\frac32}(z)\!=\!0\}$~~};
      \path[magenta!75!black]{(-10:4.5)+(.15,.05)}
         node[left, rotate=22]{\small$X_3{:=}\{p_{3}(z)\!=\!0\}$~~};
      \path(0,1.7)node[left]
           {\small$X_0/\ZZ_2$~\cite{Headrick:2005ch}~~};
     \filldraw[fill=white, very thick](0,0)circle(.7mm);
      \filldraw[fill=red,thick](-10:1.125)circle(.6mm);
      \filldraw[fill=blue!60,thick](-10:1.5)circle(.6mm);
      \filldraw[fill=green,thick](-10:2.25)circle(.6mm);
      \filldraw[fill=purple,thick](-10:4.5)circle(.6mm);
      \filldraw[fill=red!30,thick](0,1.5)circle(.6mm);
    \end{tikzpicture}
 \end{center}
 \caption{The Cefal\'u (complex structure) deformation family of quartics~\eqref{e:Cefalu}
  (lower $\l$-plane), and their $\ZZ_2$ quotients (upper plane),
  together with the identifications
  $X_1\,{\approx}\,\{T^4/\ZZ_2\}$ and 
  $X_{\frac34}\,{\approx}\,\big(X_0/\ZZ_2\big)$,
  the latter identification labeled by ``$\scriptstyle\spadesuit$''.}
 \label{f:Cefalu}
\end{figure}

Variation of $\l$ parametrizes a deformation of the complex structure of the hypersurface $X_\l$, while the subsequently discussed numerical computation of the various metric characteristics is explicitly designed, as in~\eqref{eq:ddphi_model},
to preserve the K{\"a}hler class of the embedding $\IP^3$.
The defining polynomial $p_\l(z)$ fails to be transverse only for
  $\l\,{\to}\,\l^\sharp\,{\in}\,\{\frac34,\, 1,\, \frac32,\, 3\}$
  in the finite\footnote{At $\l\,{\to}\,\infty$, the vanishing of $p_\infty\,{=}\,\big(\sum_i z_i^{~2}\big){}^2$ defines an ``\emph{everywhere singular}'' hypersurface (more properly, a \emph{scheme}), $X_\infty$, since $\vec\partial p_\infty(z)$ vanishes wherever $p_\infty(z)$ does.
  Also, denoting $Z_\infty\,{:=}\,\{\sum_i z_i^{~2}\,{=}\,0\}$, we have that $X_\infty=Z_\infty\,{\cup}\,Z_\infty$ is an everywhere doubled space, which is singular at $X^\sharp_\infty=Z_\infty\,{\cap}\,Z_\infty$ --- indeed, everywhere.
  In this respect, the $\lim_{\l\to\infty}X_\l$ limit is an extremely degenerate case of Tyurin degenerations~\cite{Berglund:2022dgb}.} $\l$-plane ($|\l|\,{<}\,\infty$),
 and the quartic hypersurface, $X_\l$, singularizes there and has, respectively: $8$, $16$, $12$, and $4$ isolated singular points.
 Each of these singular points is an $A_1$-singularity (\textit{i.e.}, \emph{node} or \emph{double point}): the gradient of the defining equation~\eqref{e:Cefalu} vanishes there, but the Hessian (matrix of second derivatives in local coordinates) is regular; for any other $A$-$D$-$E$ (so-called Du~Val) singular point the local Hessian would vanish.\footnote{All the relevant details and facts about these singularities and their (complex structure variation) deformations and (K{\"a}hler class variation) desingularizations are found in Reid's comprehensive survey~\cite{Reid:2022aa}.}
 The singularities of each $X_{\l^\sharp}$ are thus all hypersurface singular points of the form $(0,0,0)\,{\in}\,\{xy\,{=}\,z^2\}\,{\subset}\,\IC^3$ in local coordinates, and are also equivalently described as discrete quotient (orbifold) singularities of the form $(0,0)\,{\in}\,\IC^2/\ZZ_2$.
Deforming any of the singular hypersurfaces $X_{\l^\sharp}\to X_{\l^\sharp+\e}$ for $|\e|\,{\ll}\,1$ changes its complex structure and smooths it by replacing each of its nodes with an $S^2$-like so-called \emph{vanishing cycle} of radius ${\propto}\,|\e|$.

 Alternatively and without changing either $\l$ or the complex structure in general, each node can also be desingularized by a blow-up. This surgically replaces each node with a copy of the (again $S^2$-like) so-called \emph{exceptional set}, which admits a compatible complex structure and is biholomorphic\footnote{In fact, it is the total space of the $\mathcal{O}_{\IP^1}(-2)$ line bundle over this $\IP^1$ that replaces the excised singularity compatibly with the global complex structure of the quartic hypersurface.
 The tangent bundle being $T_{\IP^1}\,{=}\,\mathcal{O}_{\IP^1}(2)$, the adjunction theorem implies that the patching preserves $c_1$. Also, the self-intersection of this exceptional $\IP^1$ is then $-2$.}
 to the \emph{complex} projective space, $\IP^1$ --- and changes the overall K{\"a}hler class by contributing\footnote{In the overall K{\"a}hler metric, $g^{(0)}_{ab}$, defined without/before the blow-up, the exceptional set replacing a point is \emph{null}; $g^{(0)}_{ab}$ continues to ``see'' it as a point.
 To correct this, one varies the metric by adding a local contribution inherent to the exceptional set.} a (variable) multiple of the K{\"a}hler class inherent to the exceptional $\IP^1$.
 Such K{\"a}hler class variations were studied in~\cite{Headrick:2005ch} and were shown by numerical computation to connect the orbifolds $T^4/\ZZ_2$ and $X_0/\ZZ_2$.
 By desingularizing $T^4/\ZZ_2$ via blowup and successively increasing the size of the exceptional sets, these eventually intersect and form new $\ZZ_2$-orbifold singularities, the so-obtained singular space identifiable with $X_0/\ZZ_2$.

Explicit $\l$-deformation connects $X_1\,{\to}\,X_{\frac34}$ by varying the complex structure while holding the K{\"a}hler class constant.
 One would expect the vanishing cycles of $X_{1-\e}$ to grow as $\e\,{\in}\,[0,\frac14)$, come closer to each other and intersect as $\e\,{\to}\,\frac14$, creating the singularities of $X_{\frac34}$.
 On the other hand, the number, type and highly symmetric distribution of singularities in the hypersurfaces $X_{\frac34}$ and $X_1$ suggests identifying these with the two global orbifolds considered in~\cite{Headrick:2005ch}, resulting in an interesting double connection (the ``$\approx$'' symbols denote a likely but not rigorously proven identification):
\begin{equation}
   \{X_0/\ZZ_2\}
     \begin{tikzpicture}
      \path[use as bounding box](0,0)--(0,.9);
      \draw[blue!75!black, densely dashed, <-, thick]
         (0,.3)to[out=30,in=180]++(.8,.3)--++(2.9,0)to[out=0,in=150]++(.8,-.3);
      \path[blue!75!black](2.3,.8)node{\scriptsize K{\"a}hler class variation~\cite{Headrick:2005ch}};
                \end{tikzpicture}
    \approx X_{\frac34} \xrightarrow[\text{cpx.\;str.\;deform}]{~\l~} X_1 \approx \{T^4/\ZZ_2\} ~.
 \label{e:8->16}
\end{equation}
For the \emph{crepant} ($c_1$-preserving~\cite{hubsch1992calabi}) desingularization of both of these orbifolds, the Euler characteristic may be computed by
\begin{equation}
  \chi\big( \widetilde{M/G} \big) =
  \frac1{|G|}\big(\chi(M)-\chi(F)\big) + \chi(N) ~,
 \label{e:OrbiEu}
\end{equation}
where $M$ is a smooth manifold with the discrete group action $G$ for which $F$ is the fixed point set and $N$ the desingularizing surgical replacement of $F$; for a complete and detailed refinement see~\cite{Atiyah:1989ty} and~\cite[\S\,4.5]{hubsch1992calabi}.
Assuming a similar identification with a global finite quotient to be possible also for $X_{\frac32}$ (and $X_3$), since these have $12$ ($4$) isolated $A_1$-singular points, both with $|G\,{=}\,\ZZ_2|\,{=}\,2$ and where $N$ consists of $12$ ($4$) isolated exceptional $\IP^1$s with $\chi(\IP^1)\,{=}\,2$, we compute
\begin{subequations}
  \label{e:OrbiEu2}
\begin{alignat}9
   \frac12\big(\chi(M_{\frac32})-12\big)+12{\cdot}2
    &=24=\chi(\widetilde{X_{\frac32}}) \quad
    &&\Longrightarrow&\quad \chi(M_{\frac32})&=12 ~;\\
   \frac12\big(\chi(M_3)-4\big)+4{\cdot}2
    &=24=\chi(\widetilde{X_3}) \quad
    &&\Longrightarrow&\quad \chi(M_3)&=36 ~.
\end{alignat}
\end{subequations}
Here, $M_{\frac32}$ ($M_3$) denotes a nonsingular complex surface with a $\ZZ_2$ action that has $12$ ($4$) fixed points, the $\ZZ_2$-orbifolds of which may be identified with $X_{\frac32}$ ($X_3$).
Whereas in~\eqref{e:8->16}, we have $M_{\frac34}\,{\approx}\,X_0$ (the Fermat quartic) and $M_1\,{\approx}\,T^4$, we do not have any obvious candidate for $M_{\frac32}$ and $M_3$, but note that these would have to have $h^{2,0}\,{\geqslant}\,1$, and that precisely one holomorphic volume form must remain after the $\ZZ_2$ quotient.
Since $X_{\frac34+\e}$, $X_{1+\e}$, $X_{\frac32+\e}$ and $X_{3+\e}$ are all smoothed by a ``$\l^\sharp\,{\to}\,\l^\sharp\,{+}\,\e$'' (complex structure) deformation, they are conceptual mirrors of the (K{\"a}hler class variation) desingularizations indicated in~\eqref{e:OrbiEu} and~\eqref{e:OrbiEu2}.

\paragraph{The Dwork pencils:}
Analogous to the Dwork pencil of quintics~\cite{Candelas:1990rm}
\begin{equation}
   \IP^4\supset Z_\psi\,{:=}\,\big\{Q_\psi(z)\,{=}\,0\big\}:\quad
   Q_\psi(z)\!:=\!\sum_{i=0}^4z_i^{~5}
                     -5\psi\prod_{i=0}^4z_i ~,
 \label{e:Dwork5}
\end{equation}
we also consider the Dwork pencil of quartics
\begin{equation}
   \IP^3\supset Y_\psi\,{:=}\,\big\{q_\psi(z)\,{=}\,0\big\}:\quad
   q_\psi(z)\!:=\!\sum_{i=0}^3z_i^{~4}
                     -4\psi\prod_{i=0}^3z_i ~.
 \label{e:Dwork}
\end{equation}
Following the by now well known analysis of the quintic~\eqref{e:Dwork5}, it is easy to show that $\psi\,{\simeq}\,\alpha\psi$, with $\alpha^4\,{=}\,1$, so $\mathrm{Arg}(\psi)\,{\in}\,[0,\pi/2]$ provides a fundamental domain, subject to identifying the edges $\{\mathrm{Arg}(\psi)\,{=}\,0\}\,{\simeq}\,\{\mathrm{Arg}(\psi)\,{=}\,\pi/2\}$, thus forming a cone.
The two families~\eqref{e:Cefalu} and~\eqref{e:Dwork} are related: evidently, $Y_0\,{=}\,X_0$.
Furthermore, $q_\psi(z)$ fails to be transverse only for $\psi\,{=}\,\alpha$ with $\alpha^4\,{=}\,1$, where $Y_\alpha$ has $16$ isolated $A_1$-singular points, $(1,\alpha\beta^{2i}\gamma^{3j},\alpha\beta^i\gamma^{2j},\alpha\beta^i\gamma^{3j})$ for $\alpha^4\,{=}\,\beta^4\,{=}\,\gamma^4\,{=}\,1$ and $i,j=0,\dots,3$.
Thus, $Y_1$ has the same number and type of isolated singular points as $X_1$ (albeit in different locations), and as in Figure~\ref{f:Cefalu}, we identify $Y_1\,{\approx}\,X_1\,{\approx}\,\{T^4/\ZZ_2\}$.
This implies that within the K3 complex structure moduli space, the $\l$-plane and the $\psi$-cone have two points in common: $(\l\,{=}\,0)=(\psi\,{=}\,0)$ and $(\l\,{=}\,1)=(\psi\,{=}\,1)$.
 Finally, we note that the $Y_\infty$ model is a complex projective \emph{tetrahedron}, the union of four $\IP^2$s that meet in six $\IP^1$s that meet in four points.
 Unlike $X_\infty$ (which is singular everywhere), $Y_\infty$ is singular only at the union of those six $\IP^1$s.

\subsection{General remarks on \kae geometry}

Since it is a complex, \kae manifold, the Calabi--Yau metric is Hermitian and can be obtained from a \kae potential $K(z,\bar{z})$:
\begin{equation}
g_{a\bar{b}}=\partial_a \partial_{\bar{b}} K(z,\bar{z}) ~.
\end{equation}
The corresponding \kae form, written in terms of the metric, reads
\begin{equation}
J=\frac{\rm i}{2} g_{a\bar{b}}\, dz^a \wedge d\bar{z}^{\bar{b}} ~.
\end{equation}
As a consequence of K\"ahlerity, the only non-zero Christoffel symbols are those for which holomorphic and antiholomorphic indices do not mix:
\begin{equation}
\Gamma^a_{bc}=\overline{\Gamma^{\bar{a}}_{\bar{b}\bar{c}}}
=(\partial_b g_{c\bar{d}})g^{\bar{d}a} ~.
\end{equation}
From this, we readily compute the non-zero components of the Riemann tensor: 
\begin{equation}
R^a_{b\bar{c}d}=-\bar{\partial}_{\bar{c}}\Gamma^a_{bd} ~,\quad R^a_{\bar{b}cd}=\bar{\partial}_{\bar{b}}\Gamma^a_{cd} ~, 
\end{equation}
together with 
\begin{equation}
R^a_{b\bar{c}d}=\overline{R^{\bar{a}}_{\bar{b}c\bar{d}}} ~,\quad R^a_{\bar{b}cd}=\overline{R^{\bar{a}}_{b\bar{c}\bar{d}}} ~. 
\end{equation}
The only non-vanishing entries for the Ricci tensor of a \kae metric are given by
\be
R_{a\bar{b}}=R^c_{ca\bar{b}}=-\partial_a\partial_{\bar{b}}\, \log{\rm det}\, g ~.
\ee
We may sometimes write ${\rm det}\, g =|g|$.
$R$ is closed but not necessarily exact.
It serves to define the Ricci form ${\rm Ric}(J)={\rm i} R_{a\bar{b}} dz^{a}\wedge d\bar{z}^{\bar{b}}$ which is closed by construction.
Since $c_1(J)$ is required by the Calabi--Yau condition to be zero, the Ricci form of a Calabi--Yau is also exact. For further details, the reader should consult the classic references~\cite{candelas1988lectures, hubsch1992calabi, Ballmann}.

The Riemann tensor can be used to construct the curvature form
\begin{equation}
\mathcal{R}^a_b=R^a_{bm\bar{n}}dz^m \wedge d\bar{z}^{\bar{n}} ~.    
\end{equation}
We then have 
\begin{align}
    {\rm Tr}\, \mathcal{R}\,\, & = R^a_{am\bar{n}}dz^m \wedge d\bar{z}^{\bar{n}}=-{\rm i}\,{\rm Ric}(J)\,,  \\
     {\rm Tr}\, \mathcal{R}^2 & = R^a_{bm_1\bar{n}_1}R^b_{am_2 \bar{n}_2}dz^{m_1} \wedge d\bar{z}^{\bar{n}_1}\wedge dz^{m_2} \wedge d\bar{z}^{\bar{n}_2}\,,  \\
      {\rm Tr}\, \mathcal{R}^3 & = R^a_{bm_1\bar{n}_1}R^b_{cm_2 \bar{n}_2} R^c_{am_3 \bar{n}_3}dz^{m_1} \wedge d\bar{z}^{\bar{n}_1}\wedge dz^{m_2} \wedge d\bar{z}^{\bar{n}_2}\wedge dz^{m_3} \wedge d\bar{z}^{\bar{n}_3}\,,  
\end{align}
which can be used to obtain the various Chern characteristic forms $c_i\in\Omega^{i,i}(\mathcal{M})$, resulting from the expansion 
\begin{gather}
	c(t) = \det\left(1+\frac{{\rm i}t}{2\pi}J\right) = c_0 + c_1 t + c_2 t^2 + \dots\,.
\end{gather}
Written in terms of the curvature form, the corresponding Chern forms are given by
\begin{align}
    c_0 &= 1 ~,\\
    c_1 &= \frac{\rm i}{2\pi}{\rm Tr}\, \mathcal{R} ~,\\
    c_2 &= \frac{1}{2(2\pi)^2}({\rm Tr}\, \mathcal{R}^2-({\rm Tr}\, \mathcal{R})^2) ~,\label{eq:c2}\\
    c_3 &= \frac{1}{3}c_1\wedge c_2 +\frac{1}{3(2\pi)^2}c_1 \wedge {\rm Tr}\, \mathcal{R}^2-\frac{\rm i}{3 (2\pi)^3}{\rm Tr}\, \mathcal{R}^3 ~.\label{eq:c3}
\end{align}
For complex $n$-dimensional manifolds, $X$, $\int_Xc_n$ is the Euler characteristic so the top Chern class, $c_n$, is also the Euler (curvature) density $e(J)=c_n(J)$.
For $\dim(X)=2$ CY, $c_2$ may be further identified with the standard volume form multiple of the Kretschmann invariant of Ricci-flat metric, the tensor norm-square of the Riemann tensor; see~\eqref{e:chiK3Kretschmann}, below. As an additional check on Ricci-flatness, this approximation has been studied in~\cite{Headrick:2005ch}.

Restricting to the Calabi--Yau twofold and threefold examples of relevance to this paper, we have that the Euler densities simplify for the Ricci-flat metric due to the condition $c_1=0$.
For K3 we expect the Euler density to be 
\begin{equation}\label{e:chiK3Kretschmann}
    c_2(J^\textsf{CY})= \frac{1}{2(2\pi)^2}{\rm Tr}\, \mathcal{R}^2 = \frac{1}{8\pi^2}\sqrt{g}~R_{a\bar{b}c\bar{d}}\,R^{a\bar{b}c\bar{d}}~{\rm d}^4\!z ~,
\end{equation}
and similarly for a Calabi--Yau threefold
\begin{equation}
    c_3(J^\textsf{CY})=-\frac{\rm i}{3 (2\pi)^3}{\rm Tr}\, \mathcal{R}^3 ~.
\end{equation}
Note however that the above expressions only hold for the Ricci-flat Calabi--Yau metric.
Since we only have numerical approximations to this and since we're interested in checking how well-defined is the resulting metric approximated using neural networks (See Section~\ref{sec:spectralNNs}), our curvature density estimates are always obtained by means of~\eqref{eq:c2} and~\eqref{eq:c3}.

\subsection{Topological checks and the curvature distribution}
In the pursuit of computing the Ricci-flat metric by varying an initial choice such as the Fubini--Study metric on the embedding projective space, it behooves to verify that computationally feasible and otherwise known quantities, such as the Euler number, continue to be evaluated accurately.

For complex surfaces, the second Chern class is the Euler density, which is given by the following:
\begin{equation}
    \chi_{_{\rm E}} = \int_{\rm K3} c_2 = \frac{1}{2(2\pi)^2}\int_{\rm K3} ({\rm Tr}\, \mathcal{R}^2-({\rm Tr}\, \mathcal{R})^2) = \int_{\rm K3}\!{\rm d}^4\!z\, \sqrt{g}\rho - \frac{1}{2(2\pi)^2}\int_{\rm K3} \left({\rm Tr}\, \mathcal{R}\right)^2,
 \label{e:chiK3}
\end{equation}
where $\rho$ is the Kretschmann scalar, which may be more familiar from general relativity, where it is used to distinguish coordinate singularities from physical singularities, and is analogous to the $F_{\mu\nu}F^{\mu\nu}$ term in gauge theory.
 In nearly singular hypersurfaces $X_{\l^\sharp+\e}$, the Euler density receives significant contributions from the vicinity of the vanishing cycles, these being heavily curved and nearly singular. For small enough but nonzero $\e$, these regions are also well separated since they limit to isolated singular points in the $\e\,{\to}\,0$ limit, and it is possible to exclude the contribution from these large curvature regions. 
 
The numerically computed Euler density distribution indeed turns out to be heavily peaked near zero, indicating that a relatively large portion of the hypersurface harbors little curvature. The distribution however also has a long and thin ``tail,'' indicating rather small (rarely sampled) but highly curved regions. The bulk of the contributions sampled at essentially randomly distributed points therefore misses these ``curvature peaks.'' For example, in the actually singular hypersurfaces at the special choices $\l\,{=}\,\l^\sharp$, the Euler number computation using the Fubini--Study metric finds:
\begin{enumerate}\itemsep=-1pt\vspace*{-2mm}
 \item At $\l=\frac34$, $\chi(X_{\frac34})_\textsf{FS}\approx8$,
  missing the contribution of $8$ singular points;
 \item At $\l=1$, $\chi(X_1)_\textsf{FS}\approx-8$,
  missing the contribution of $16$ singular points;
 \item At $\l=\frac32$, $\chi(X_{\frac32})_\textsf{FS}\approx0$,
  missing the contribution of $12$ singular points;
 \item At $\l=3$, $\chi(X_{\frac34})_\textsf{FS}\approx16$,
  missing the contribution of $4$ singular points.
 \vspace*{-2mm}
\end{enumerate}
This is consistent with the fact that all singular points are $A_1$-singularities, the blow-up of each of which contributes $\chi(\IP^1)\,{=}\,2$, completing the result to $\chi(X_\l)\,{=}\,24$ (the ``stringy,'' \textit{i.e.}, \emph{equivariant} Euler number~\cite{Atiyah:1989ty}) even for these singular quartics.

 The numerical computation of the Euler density and its integral (the Euler number) to a predetermined precision shows that a significantly larger number of sampling points is required for the nearly singular $X_\l$ than for $X_\l$ with $\l$ far from the four special values, $\l^\sharp$. 

\subsection{Some useful results for (singular) K3s}
The singular K3 manifolds considered in this work are $\mathbb{P}^3_{\mathbb{C}}$ embedded projective surfaces with isolated singularities. Since the singularities prevent definition of smooth forms (such as curvature form $J$  on $X$), we may consider the smooth locus $X_s\subseteq X$ and induce a curvature form $J$ using the pullback of Fubini--Study metric on $\mathbb{P}_\IC^3$. This action, however bears the cost of $c_n \in \Omega^{n,n}(X_s)$ no longer carrying the topological information of $X$.  In order to study the topological Euler characteristic of the resulting variety, we instead consider Chern--Schwartz--MacPherson classes $c_\text{SM}$~\cite{10.2307/1971080, 10.1155/S1073792894000498, 10.2307/118036}. In particular, we have the following relation between the $c_\text{SM}(X)$ and the topological Euler characteristic $\chi(X)$ of a possibly singular variety $X$~\cite{helmer2016algorithms}:
\begin{gather}\label{eq:csmEuler}
	\chi(X)	 = \deg{c_\text{SM}(X)} ~.
\end{gather}
Furthermore, we may relate the Fulton class $c_F(X)$ to the Chern--Schwartz--MacPherson characteristic class $c_\text{SM}(X)$ using~\cite{aluffi2019chern, parusinski1998characteristic}:
\begin{gather}\label{eq:milnorClass}
	(-1)^{\dim{X}}\deg(c_F(X) - c_\text{SM}(X)) = \sum_{p\in \mathrm{Sing}~{X}}\mu_X(p) ~,
\end{gather}
where $\mu_X(p)$ is the Milnor number of the singularity $p\in \mathrm{Sing}~X$. The actual Euler number can be computed with aid of Proposition~\ref{prop:one}, which we prove.

\begin{prop}\label{prop:pfaffian} Let $X\subseteq \mathbb{P}_\IC^3$ be a possibly singular projective surface with curvature form $J$ defined on the smooth locus $X_s$ of $X$. If $|\mathrm{Sing}~X|<\infty$ and the singularities are of type $A_1$, then:
\begin{gather}
	\int_{X_s}e(J) + 2|\mathrm{Sing}~{X}|= \deg{c_F(X)} ~,
\end{gather}
where $c_F(X)$ is the Fulton class of $X$ and $e(J)$ is given by Pfaffian: $\mathrm{Pf}(J)/(2\pi)^2$. \label{prop:one}
\end{prop}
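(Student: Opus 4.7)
My plan is to reduce the claim to a combination of the Milnor--class identity~\eqref{eq:milnorClass} and a Chern--Gauss--Bonnet calculation on the smooth locus. First I would exploit that every $A_1$ surface singularity $\{x^2+y^2+z^2=0\}$ has Milnor number $\mu_X(p)=1$, since its Jacobian ideal is the maximal ideal. Substituting this and $\dim X=2$ into~\eqref{eq:milnorClass} yields $\deg c_F(X)-\deg c_{\mathrm{SM}}(X)=|\mathrm{Sing}\,X|$. Combining with~\eqref{eq:csmEuler} and the additivity $\chi(X)=\chi(X_s)+|\mathrm{Sing}\,X|$ (each isolated singular point contributes $1$), I obtain $\deg c_F(X)=\chi(X_s)+2|\mathrm{Sing}\,X|$, so the proposition is equivalent to the open-surface Chern--Gauss--Bonnet identity $\int_{X_s} e(J) = \chi(X_s)$.

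For this remaining identity I would pass to the minimal resolution $\pi:\tilde X\to X$, which replaces each $A_1$ point $p$ by an exceptional curve $E_p\cong\IP^1$ and identifies $X_s$ with $\tilde X\setminus E$, $E=\bigsqcup_p E_p$. On the smooth compact $\tilde X$ the ordinary Chern--Gauss--Bonnet gives $\int_{\tilde X} e(\tilde\omega)=\chi(\tilde X)=\chi(X_s)+2|\mathrm{Sing}\,X|$ for any smooth \kae form $\tilde\omega$. Writing $\pi^*J=\tilde\omega+dd^c\varphi$ for a quasi-plurisubharmonic potential $\varphi$ (bounded on $\tilde X\setminus E$ but singular along $E$), Bott--Chern theory furnishes a transgression form $\eta$ on $\tilde X\setminus E$ with $e(\pi^*J)-e(\tilde\omega)=d\eta$. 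Applying Stokes' theorem to $\tilde X\setminus U_\varepsilon$ for tubular neighbourhoods $U_\varepsilon\supset E$ and sending $\varepsilon\to0$ then gives $\int_{X_s} e(J) = \chi(\tilde X) - \lim_{\varepsilon\to0}\int_{\partial U_\varepsilon}\eta$, so the task reduces to showing that this limiting boundary transgression equals $2|\mathrm{Sing}\,X|$.

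The hard part is exactly that boundary limit: $\eta$ diverges along $E$ because $\pi^*J$ degenerates there, so uniform control on the link $\partial U_\varepsilon$ is required. I expect this to be tractable because each $A_1$ singularity admits the universal local model $\IC^2/\ZZ_2$ with minimal resolution $\mathrm{Tot}(\mathcal O_{\IP^1}(-2))$ and lens-space link $L(2,1)\cong\mathbb{RP}^3$; an explicit computation on this model (e.g.\ starting from the $\ZZ_2$-averaged flat metric and comparing to the Burns--Eguchi--Hanson profile) should show that the limiting transgression picks up exactly $\chi(\IP^1)=2$ per exceptional curve. Equivalently, applying Chern--Gauss--Bonnet-with-boundary to $\tilde X\setminus U_\varepsilon$ identifies the link contribution at each node with this same universal constant $2$. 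Because the correction is purely local around each $A_1$ point, the identity is expected to hold for any curvature form $J$ that extends as a bounded closed $(1,1)$-form across the singularities --- in particular for the machine-learned approximations studied elsewhere in this paper, not only the Fubini--Study pullback.
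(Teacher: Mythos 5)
Your combinatorial bookkeeping is correct and is essentially the same as the paper's: from~\eqref{eq:milnorClass} with $\mu_X(p)=1$ and $\dim X=2$, from~\eqref{eq:csmEuler}, and from $\chi(X)=\chi(X_s)+|\mathrm{Sing}\,X|$, one gets $\deg c_F(X)=\chi(X_s)+2\,|\mathrm{Sing}\,X|$, so the proposition is equivalent to the open Gauss--Bonnet identity $\int_{X_s}e(J)=\chi(X_s)$. The paper gets precisely this identity by citing Aluffi's stratified Pfaffian/Gauss--Bonnet theorem, which evaluates $\frac{1}{4\pi^2}\int_{X_s}\mathrm{Pf}(J)$ as $\chi(X_s)+\sum_{p}\mathrm{Eu}_p(X)$, and then by computing that the local Euler obstruction $\mathrm{Eu}_p(X)=1-(-1)^{\dim X}\mu_{X\cap H}(p)$ vanishes for an $A_1$ surface point (a generic hyperplane section is a nodal curve germ with $\mu=1$). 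So up to that point you and the paper agree.

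The genuine gap is in what you offer in place of that citation. Passing to the minimal resolution and writing $\int_{X_s}e(J)=\chi(\tilde X)-\lim_{\varepsilon\to 0}\int_{\partial U_\varepsilon}\eta$ merely renames the problem: the entire nontrivial content of the proposition is now the claim that the limiting transgression over the link $\mathbb{RP}^3$ equals $2$ per node, and you assert this (``should show'', ``I expect this to be tractable'') rather than prove it. Worse, the step is not only an unperformed computation but is missing an idea: the transgression $\eta$ is built from the actual induced metric $J$ near the node, not from a model metric, so a calculation on the $\ZZ_2$-averaged flat metric or the Eguchi--Hanson profile does not by itself control $\lim_{\varepsilon\to0}\int_{\partial U_\varepsilon}\eta$ for the Fubini--Study restriction to the cone $\{xy=z^2\}$ --- one must additionally prove that the defect is local and depends only on the tangent cone (equivalently, that it is insensitive to which bounded closed $(1,1)$-extension of the metric one compares against). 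That locality/universality statement is exactly what the cited Aluffi result supplies, in the guise of the local Euler obstruction; without it your argument either does not close or silently assumes the conclusion. Your closing remark that the identity should therefore hold for the machine-learned metrics inherits the same unproved assumption, since those approximations are not known to extend as bounded closed $(1,1)$-forms across the nodes --- indeed the paper's numerics (Figure~\ref{fig:eulerChar}) show the corresponding integrals failing to converge there for the \texttt{PhiModel}.
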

\begin{proof} Build a stratification of $X$ in the following manner: define $X_0 := X\backslash \mathrm{Sing}~X = X_s$, and for every singular point $p_i\in \mathrm{Sing}~X$, define $X_i :=\{p_i\}$. Note that $X_0$ is the largest stratum, and $X_i$ for $i>0$ are the singular strata. Using the results from~\cite{aluffi2019pfaffian}, we may express the integral of Pfaffian $\mathrm{Pf}(J)$ on $X_s$ in terms of the local Euler obstructions $\mathrm{Eu}_p(X)$~\cite{10.2307/1971080} as:
\begin{gather}
	\frac{1}{4\pi^2}\int_{X_s}\mathrm{Pf}(J) = \chi(X) - \chi(\mathrm{Sing~X}) + \sum_{p\in \mathrm{Sing}X}\mathrm{Eu}_p(X)\,\chi(\{p\}) ~,
\end{gather}
where we have $\chi(X_s) = \chi(X) - \chi(\mathrm{Sing}~X)$ since $X$ is a complex projective variety and $|\mathrm{Sing}~X|<\infty$. Furthermore, $\chi(\mathrm{Sing}~{X}) = |\mathrm{Sing}~{X}|$. Thus, the proof reduces to computation of $\chi(X)$ and the Euler obstructions $\mathrm{Eu}_p(X)$ for $p\in \mathrm{Sing}~X$. Using the results of~\cite{articleCyclesPolaires}, we may express $\mathrm{Eu}_p(X)$ in terms of the Milnor numbers:
\begin{gather}
	\mathrm{Eu}_p(X) = 1- (-1)^{\dim{X}}\mu_{X\cap H}(p) ~,
\end{gather}
where $H$ is a generic hyperplane through $p\in \mathrm{Sing}~X$. In case of $\dim_{\IC}X = 2$, for $A_1$ type singularities $p\in \mathrm{Sing}~X$, we have $\mu_{X\cap H}(p) = 1$ (see~\cite{siersma2022polar, huh2014milnor}), thus, the Euler obstructions vanish. Using the identity~\eqref{eq:milnorClass} and the property of Chern--Schwartz--MacPherson classes~\eqref{eq:csmEuler}, we have:
\begin{gather}
	\chi(X) = \deg{c_\text{SM}(X)}	 = \deg{c_F(X)} - \sum_{p\in \mathrm{Sing}~X}\mu_X(p) = \deg{c_F(X)} - |\mathrm{Sing}~X| ~.
\end{gather}
Thus, by combining the results, we obtain:
\begin{gather}
	\frac{1}{4\pi^2}\int_{X_s}\mathrm{Pf}(J) = \deg{c_F(X)} - 2|\mathrm{Sing}~X| ~,
\end{gather}
which concludes the proof.
\end{proof}

It is easy to see that the singularities of singular $X_\lambda$ varieties in the Cefal\'{u} pencil ($\lambda <\infty$) are of type $A_1$\footnote{By symmetry of the defining polynomial, all singularities have equal Milnor number; moreover, it equals $1$, in agreement with the analysis in Section~\ref{s:deformations}.}. Furthermore, we may regard the Cefal\'{u} pencil as a smoothing family near each singular $X_{\lambda'}$. Using~\cite{aluffi2019pfaffian}, we have the following identity:
\begin{gather}
	\lim_{\lambda\rightarrow \lambda'}\lim_{\delta\rightarrow 0}	 \int_{X_\lambda \cap N_\epsilon(\mathrm{Sing}~X)}e(J_{\lambda}) = (-1)^2\left[\sum_{p\in \mathrm{Sing}~X}\mu_X(p)+ \mu_{X\cap H}(p)\right] = 2|\mathrm{Sing}~X| ~.
\end{gather}
From~\cite{https://doi.org/10.1002/cpa.3160310304, doi:10.1073/pnas.74.5.1798, Guggenheimer1952, roulleau2019generalized} it is known that for surfaces admitting K\"{a}hler--Einstein metric with curvature form $J$, the characteristic forms $c_2(J)$ and $c_1(J)$ satisfy the Bogomolov--Yau (BY) inequality:
\begin{gather}\label{eq:byauIneq}
	3c_2(J) - c_1(J)^2\geq 0 ~.
\end{gather}

\section{Numerical methods}\label{sec:nm}

\subsection{Calabi--Yau metrics from neural networks}

Here we briefly review the existing paradigm of approaches based on direct prediction of the metric tensor by a neural network~\cite{Ashmore:2019wzb, Anderson:2020hux, Douglas:2020hpv, Jejjala:2020wcc, Larfors:2021pbb}.
Focusing, for example, on the Calabi--Yau threefold case, the metric tensor in local coordinates has a representation as a Hermitian $3 \times 3$ matrix, with three independent off-diagonal complex parameters and three independent diagonal real parameters.
The first order approach would be to have the neural network predict nine real components in local coordinate charts such that the metric tensor satisfies the Ricci-flatness conditions on the Calabi--Yau geometry.
However, this may be overly general, and we can consider more restrictive ans\"atze (classes of functions which the neural network may represent) that may help the approximation.

By Yau's theorem~\cite{Yau:1978}, any compact K\"ahler manifold %$M$ 
of complex dimension $n$ with vanishing first Chern class admits a unique Ricci-flat metric in each of its $h^{1,1}$ K\"ahler classes.
That is to say, given a choice of reference metric $g_{\textsf{ref}}$ with K\"ahler form $J^{\textsf{ref}}$ on the Calabi--Yau $X$, there exists a K\"ahler form $J^{\textsf{CY}}$ associated with the unique Ricci-flat metric. $J^{\textsf{CY}}$ is cohomologous to $J^{\textsf{ref}}$ and hence given by an exact correction:
\begin{equation}
    \label{eq:correction}
    J^{\textsf{CY}} = J^{\textsf{ref}} + {\rm i}\partial \bar{\partial} \phi ~.
\end{equation}
Here, $\phi \in C^{\infty}(X)$ 
is a global smooth real function on $X$.
If $X$ is embedded into a projective space (or a product of projective spaces), the reference K\"ahler form on $X$ may be taken to be the pullback of the canonical Fubini--Study form in the ambient space, $J^{\textsf{FS}}$.
To approximate the true Ricci-flat metric on $X$, one may estimate the correction term in~\eref{eq:correction} by employing a neural network to model $\phi_{\textsf{NN}}: X 
\rightarrow \mathbb{R}$ via a patch invariant scalar function using local coordinate points, and then subsequently pulling back to $X$.
Assuming one has found an embedding of the Calabi--Yau $X$ in the ambient space, $\iota: X \hookrightarrow \mathcal{A}$, the Ansatz for the predicted metric $g_{\textsf{CY}}$ takes the following form:\footnote{With a slight abuse of notation,~\eref{eq:ddphi_model} should be interpreted component-wise.} 
\begin{equation}
    \label{eq:ddphi_model}
    g_{\textsf{CY}} \triangleq \iota^*g_{\textsf{FS}} +  \partial \bar{\partial} \phi_{\textsf{NN}} ~.
\end{equation}
By computing the eigenvalue distribution (see Appendix~\ref{sec:psh}), we verify that the so-defined metric, $g_{\textsf{CY}}$, continues to be positive (Riemannian), but defer the analysis\footnote{We thank Oisin Kim for constructive discussion on this subtlety.} of the necessary and sufficient conditions for that throughout the computational framework. While positivity is not explicitly enforced, the metric is symmetric by construction and its determinant is encouraged to be positive by the objective function. If $\phi$ is a global function, the associated \kae form is remains in the same \kae class as $J^{\textsf{ref}}$.
However, if $\phi$ changes between coordinate patches, then this is not guaranteed, and further precautions must be taken to preserve the \kae class.

To compute the approximations of the Calabi--Yau metric for twofolds, we use the \texttt{cymetric} library~\cite{Larfors:2021pbb, cymetric}.
For Calabi--Yau threefolds, we also use a custom implementation of the routines in the \texttt{cymetric} library in \texttt{JAX}~\cite{jax2018github}, with extended functionality for computing topological quantities. The \texttt{cymetric} package supports different approaches to numerically approximating the Ricci-flat metric. The model~\eqref{eq:ddphi_model} is known as the \texttt{PhiModel}. We use the \texttt{PhiModel} to obtain metric approximations for the K3 and quintic examples and the \texttt{JAX} implementation for the quintic examples only. 
For the underlying neural network model which approximates $\phi$, we use a fully connected/dense network with $3$ hidden layers and $64$ nodes in each hidden layer; see Figure~\ref{fig:NNArchitecture}.\footnote{We train for $50$ epochs, with batch sizes: $(64, 50000)$ using 
\textsf{Adam} optimizer with the default parameters. We use \textsf{gelu} activation functions.}

A variation of this algorithm is to make $\phi_\textsf{NN}$ invariant under $\mathbb{C}^*$ from the beginning. Instead of taking real and imaginary parts of the homogeneous coordinates as inputs, one can take the following: $z_i\bar{z}_i/|z|^2$, ${\rm Re}(z_i\bar{z}_j/|z|^2)$, ${\rm Im}(z_i\bar{z}_i/|z|^2)$  ($j<i$, where $i,j=1,...,N+1$). Thus, instead of $2(N+1)$ we have $(N+1)^2$ inputs for this modified neural network. This results in a globally defined function $\phi_\textsf{NN}$ on $\mathbb{P}_\IC^N$. We briefly describe the results derived from this method in Section~\ref{sec:spectralNNs}, and leave a more detailed survey of these techniques for an upcoming publication~\cite{spectralNetworks}.

%% START NN Figure %%
\tikzstyle{node}=[very thick,circle,draw=black,minimum size=22,inner sep=0.5,outer sep=0.6]
\tikzstyle{connect}=[->,thick,black,shorten >=1]
\tikzset{
  node 1/.style={node,black,draw=black},
  node 2/.style={node,black,draw=black},
  node 3/.style={node,black,draw=black},
}
\def\nstyle{int(\lay<\Nnodlen?min(2,\lay):3)}

\begin{figure*}[htb]
	\centering
	\begin{tikzpicture}[x=2.4cm,y=1.2cm]
  \readlist\Nnod{3,4,4,1}
  \readlist\Nstr{2n-1,,}
  \readlist\Cstr{z,h,\phi}
  \def\yshift{0.55}
  % LOOP over LAYERS
  \foreachitem \N \in \Nnod{
    \def\lay{\Ncnt}
    \pgfmathsetmacro\prev{int(\Ncnt-1)}
    \foreach \i [evaluate={\c=int(\i==\N); \y=\N/2-\i-\c*\yshift;
                 \x=\lay; \n=\nstyle;
                 \index=(\i<\N?int(\i-1):"\Nstr[\n]");}] in {1,...,\N}{
      % NODES
      	\node[node \n] (N\lay-\i) at (\x,\y) {
      	 \ifnum \lay < 2
      		$\strut\Cstr[\n]_{\index}$
      	  \fi
      	  \ifnum \lay > 3
      		$\strut\Cstr[\n]_{\index}$
      	  \fi
      };
      % CONNECTIONS
      \ifnumcomp{\lay}{>}{1}{
        \foreach \j in {1,...,\Nnod[\prev]}{
          \draw[white,line width=1.2,shorten >=1] (N\prev-\j) -- (N\lay-\i);
          \draw[connect] (N\prev-\j) -- (N\lay-\i);
        }
        \ifnum \lay=\Nnodlen
          \draw[connect] (N\lay-\i) --++ (0.5,0);
        \fi
      }{
        \draw[connect] (0.5,\y) -- (N\lay-\i);
      }
    }
    \ifnum \lay < 4
    	\path (N\lay-\N) --++ (0,1+\yshift) node[midway,scale=1.6] {$\vdots$};
    	\ifnum \lay > 1
    		\path (N\lay-\N) --++ (0,1+\yshift) node[left,pos=0.44,scale=1.1] {64};
    	\fi
    \fi
  }
  
  % LABELS
  \node[below=3,align=center] at (N1-1.5) {Input layer:\\[-0.2em] $p\in X\subseteq \mathbb{P}_\IC^n$};
  \node[below=1,align=center] at (N4-1.1) {$g = P^*g_\mathrm{FS} + P^*\partial\overline{\partial}\phi$};
\end{tikzpicture}
\caption{Neural-network architecture for building the \texttt{PhiModel} using \texttt{cymetric}.}\label{fig:NNArchitecture}
\end{figure*}
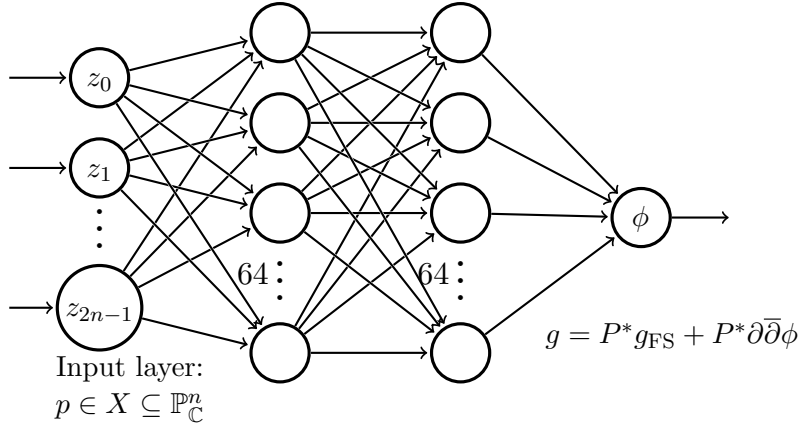

%% END NN Figure %%

\subsection{Persistent homology}
Let $f: \Sigma\to \mathbb{R}$ be a function on a simplicial complex $\Sigma$ such that whenever $\sigma_1$ is a facet of $\sigma_2$ in $\Sigma$, $f(\sigma_1) \le f(\sigma_2)$.
For $x\in \mathbb{R}$, define the level set $\Sigma_x = f^{-1}(I_x)$, with $I_x = (-\infty,x]$; this is a subcomplex of $\Sigma$.
The ordering of values of $f$ on the simplices in $\Sigma$ defines a filtration,
\begin{equation}
\emptyset = \Sigma_0 \subseteq \Sigma_1 \subseteq \ldots \subseteq \Sigma_n = \Sigma ~.
\end{equation}
For $p\le q$, we have the $k$-th persistent homology group $H_k^{p,q}(\Sigma)$ that is induced by the inclusion $\Sigma_p \hookrightarrow \Sigma_q$.
In particular, we have homomorphisms $f^{p,q}_k: C_k(\Sigma_p) \to C_k(\Sigma_q)$ modulo boundaries, with $C_k(\Sigma_p)$ the free Abelian group generated by $k$-simplices in $\Sigma_p$.
The \emph{persistent homology groups} are the images of these homomorphisms, and the Betti numbers $b_k^{p,q}$ are the dimensions of these groups.
Proceeding across the filtration, topological features are born and die.
For instance, connected components may be added to the space, cycles can form or be filled in, etc.
The \emph{barcode} is a graphical way of visualizing this information.
Persistent homology provides a microscope that images the shape of a dataset and is a key tool in topological data analysis.
We use this to identify high curvature regions on Calabi--Yau spaces.

\section{Results}\label{sec:results}
In our analysis we compute the various curvature forms using both the pullback of the Fubini--Study metric from the ambient space (FS) and the machine learned approximation to the Ricci-flat metric (ML).
This approach has various advantages.
First, having the explicit expressions for the Chern forms in the case of the Fubini--Study metric, we can ensure that the numerical errors in the topological computations are solely due to the Monte Carlo (MC) integration.
The Fubini--Study metric then serves to check convergence of the integration as we increase the number of points.
This also provides a consistency check for the computations leading to the Euler number in the various examples considered.
These results clarify which limitations are due to the numerical integration and which limitations are due to the machine learned approximation.
\begin{figure}[htb]
	\centering
	\includegraphics[width=0.7\textwidth]{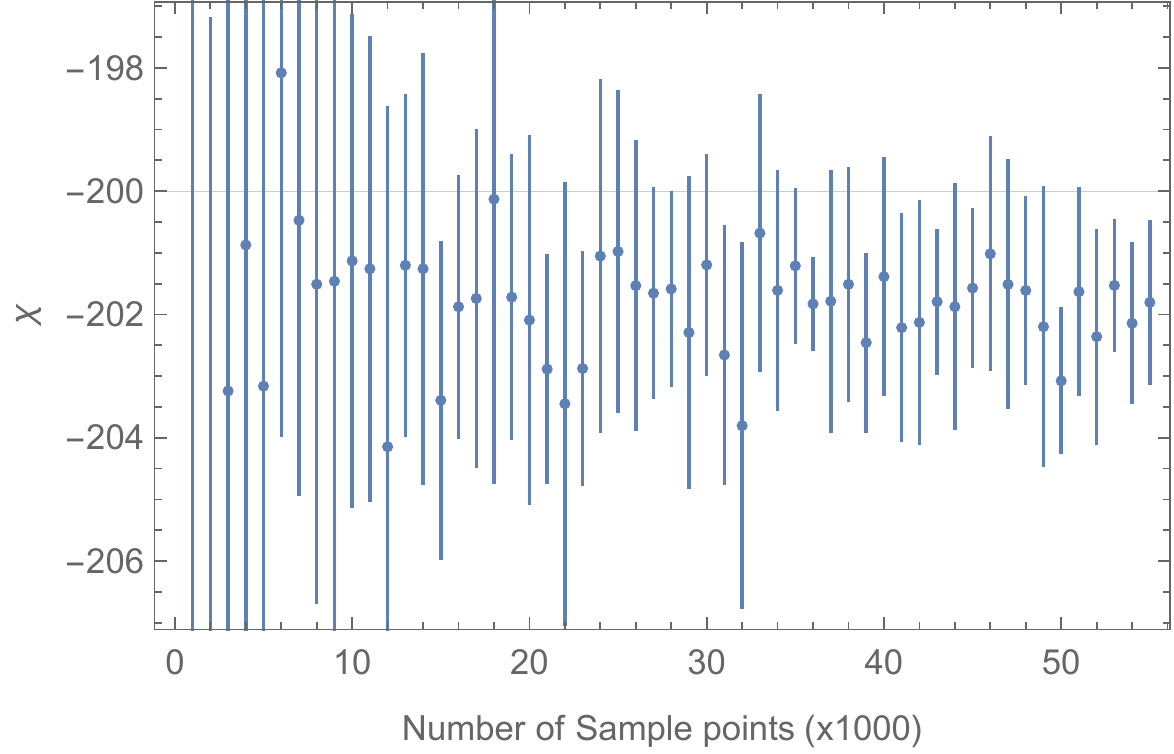}
 \vspace*{2mm}
	\caption{Euler number for the Fermat quintic $Y_{\psi=0}$ computed using different numbers of sample points.}
  \label{fig:EulerFSQuintic}
\end{figure}

As an \textit{hors d'\oe uvre} we present the convergence results for the Monte Carlo integration for the Euler density of the Fubini--Study metric.
Consider the Fermat quintic $Y_{\psi=0}$ and take a dataset of $100,000$ points uniformly distributed with respect to the pullback Fubini--Study metric.
Then we take a subsample of a given size, ranging from $1,000$ to $55,000$, and for each subsample size we repeat the experiment ten times.
For a given subsample, we compute the Euler density and out of the ten repetitions we obtain an estimate of the error (by means of the standard deviation).
The results are shown in Figure~\ref{fig:EulerFSQuintic}.
We observe that the variance reduces as the number of points is increased.
For $100,000$ points we obtain an Euler number of $\chi=-196.43$, roughly $2\%$ off the expected value.
In the simplest compactifications of the heterotic string, the number of generations of particles in the low energy spectrum is given by the index of the Dirac operator and is $\frac12 |\chi|$.
Being off by even $1\%$ for the quintic reports the wrong number of families of elementary particles (in the standard embedding of Heterotic String compactifications).

\subsection{Characteristic forms on the Cefal\'{u} pencil}\label{sec:pencil}

Let $J_\lambda$ be a curvature form on $X_\lambda$. 
 In the case of singular $X_\lambda$,~\eqref{e:chiK3} is no longer true.
 In studies of moduli dependent metrics~\cite{Headrick:2005ch, Anderson:2020hux}, it has been observed that the accuracy of the approximately flat metric is moduli dependent.
 These ideas prompt us to analyze the performance of neural network based approaches for studying Calabi--Yau spaces which are singular or nearly singular.
 In particular, we numerically compute the integral~\eqref{e:chiK3} using the curvature forms $J_\lambda^\textsf{FS}$ and $J_\lambda^\textsf{CY}$ calculated using the induced and numerical Calabi--Yau metrics, respectively. The numerical values of the Euler characteristic for different $X_\lambda$ in the vicinity of singular $X_{\l^\sharp}$ are shown in Figure~\ref{fig:eulerChar}.

\begin{figure*}[htb]
	\centering
	\includegraphics[width=0.8\textwidth]{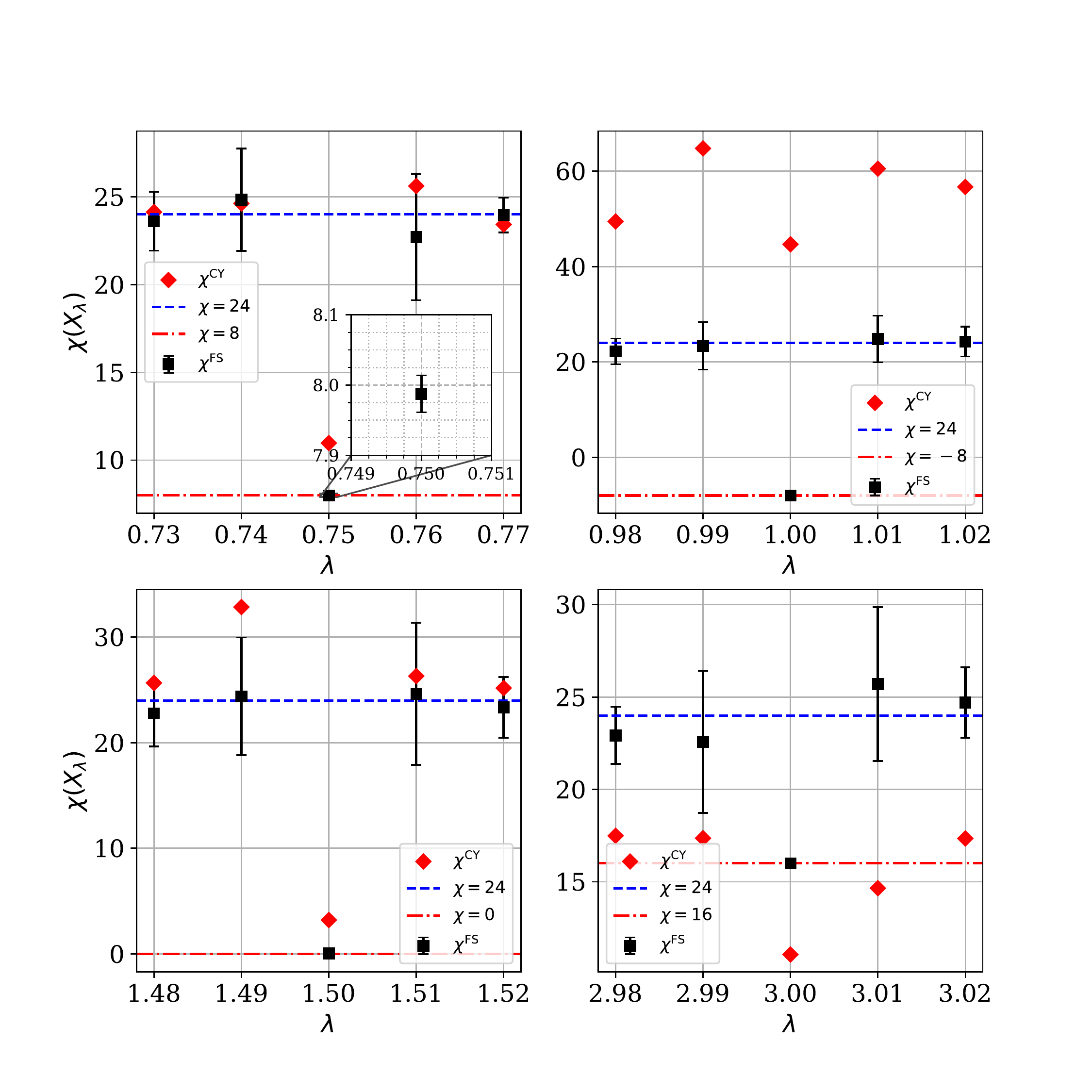}
	\caption{Numerical values of~\eqref{e:chiK3} along the Cefal\'{u} pencil. Black points and error bars showing a $95\%$ confidence interval are associated to Fubini--Study results, while the red dots correspond to the machine learned metric approximation using fully-connected networks. See the Appendix for details on integration.}\label{fig:eulerChar}
\end{figure*}

Using the fully-connected network approximation of $\phi$, we observe a decreasing trend of the accuracy as we approach the singular $X_{\l^\sharp}$ varieties in the pencil. In particular, we note that the Euler characteristic computed using the machine learned approximation to the Calabi--Yau metric deviates significantly from the expected value of $24$ by a margin that is too large to be attributed to the numerical Monte Carlo integration. Furthermore, we see a significant discrepancy between the numerical values computed using $J_\lambda^\textsf{FS}$ and $J_{\lambda}^\textsf{CY}$ at the singular $X_\lambda$. From Figure~\ref{fig:eulerChar}, we notice that for the cases $\lambda=1$ and $\lambda=3$ we obtain the most significant discrepancies between Fubini--Study and machine learned results.

In order to make sense of the results for singular $X_\lambda$'s, let us consider the degrees of the Chern--Schwartz--MacPherson classes which are presented in Table~\ref{table:fultonCSMCefaluPencil} and the Monte Carlo computations for the Fubini--Study metric shown in Table~\ref{table:c1c1c2FS}. To compute the $c_\text{SM}(X_\lambda)$ and $c_F(X_\lambda)$ we use Macaulay2~\cite{M2}. From these tables we can see that the numerical results are in agreement with Proposition~\ref{prop:one}. 
\setlength{\abovecaptionskip}{10pt}
\begin{table}[htb]
\centering
\begin{tabular}{||c c c c||} 
 \hline
$\lambda$ & $\deg{c_\text{SM}(X_\lambda)}$ & $\deg{c_F}(X_\lambda)$ & $|\mathrm{Sing}~{X_\lambda}|$\\ [0.5ex] 
 \hline\hline
 0 &24&24&0\\
3/4&16&24& 8 \\
1&8&24&  16\\
3/2&12&24&  12\\
3&20&24&4\\[1ex] 
 \hline
\end{tabular}
\caption{Degrees of Fulton and Chern--Schwartz--MacPherson classes for Cefal\'{u} pencil.}
\label{table:fultonCSMCefaluPencil}
\end{table}
\setlength{\abovecaptionskip}{-10pt}

\setlength{\abovecaptionskip}{10pt}
\begin{table}[htb]
$$
\begin{array}{||@{~~}c@{~~} c@{~~} c@{~~} c@{~~} c@{~~}||} 
 \hline
\lambda & \#~\text{of sing.\;pt.}
& \deg c_2(J_{\lambda}^\textsf{FS}) & \deg c_1(J_{\lambda}^\textsf{FS})^2 & \deg (3c_2(J_\lambda^\textsf{FS}) - c_1(J_{\lambda}^\textsf{FS})^2)\geq 0 \\ [0.5ex] 
 \hline\hline
 0   &0  &\mB{24}  &\mB{0}  & \text{True}\\
 3/4 &8  &\mB{7.99\pm 0.03}   &\mB{-16.0\pm 0.2}& \text{True}\\
 1   &16 &\mB{-7.99\pm 0.08}  &\mB{-31.9\pm 0.3}& \text{True}\\
 3/2 &12 &\mB{0.0\pm 0.1}   &\mB{-23.9\pm 0.3}& \text{True}\\
 3   &4  &\mB{~16.00\pm 0.09} &\mB{-8.0\pm 0.1} & \text{True}\\[1ex] 
 \hline
\end{array}
$$
\caption{Values of the integrals of the possible top characteristic forms on $X_\lambda$. The integrals were evaluated using MC integration. The uncertainties correspond to $95\%$ confidence interval.}
\label{table:c1c1c2FS}
\end{table}
\setlength{\abovecaptionskip}{-10pt}

We also notice that the Fubini--Study results satisfy the Bogolomov--Yau inequality~\eqref{eq:byauIneq}. Note, however, that for singular $X_\lambda$, we have a non-zero value for the numerical approximation of the integral of the first Chern class.
That is,
\begin{gather}\label{eq:c1c1FS}
	\int_{X_{\lambda^\sharp}\backslash \mathrm{Sing}~X_{\lambda^\sharp}} c_1(J_{\lambda}^\textsf{FS})^2 \neq 0 ~,
\end{gather}
as shown in Table~\ref{table:c1c1c2FS}. In addition, note that Table~\ref{table:c1c1c2FS} exhibits the following property:
\begin{gather}\label{eq:conj1}
	\int_{X_{\lambda^\sharp}\backslash \mathrm{Sing}~X_{\lambda^\sharp}}c_2(J) - c_1(J)^2 = 24 = \chi(\text{K3}) ~,
\end{gather}
which prompts us to formulate the following conjecture.
\begin{conj}\label{conj1}
Let $X\subseteq\mathbb{P}_\IC^3$ be a possibly singular K3 surface, whose smooth locus $X_s$ has curvature form $J$ induced by the Fubini--Study metric on $\mathbb{P}_\IC^3$. If the singularities of $X$ are isolated and of type $A_1$, then~\eqref{eq:conj1} holds true.
\end{conj}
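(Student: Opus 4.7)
The plan is to split $\int_{X_s}(c_2(J) - c_1(J)^2) = \int_{X_s} c_2(J) - \int_{X_s} c_1(J)^2$ and evaluate each term separately. For the first piece, I would invoke the classical Chern--Weil fact that for any K\"ahler metric on a complex surface the second Chern form $c_2(J)$ agrees \emph{pointwise} with the Pfaffian Euler density $e(J)$: the Levi--Civita holonomy reduces to $U(n) \subset SO(2n)$, identifying the top Chern form of $T^{1,0}X$ with the Pfaffian of the real curvature. Since the singularities are isolated and of type $A_1$, Proposition~\ref{prop:one} then gives
\begin{equation}
  \int_{X_s} c_2(J) \;=\; \int_{X_s} e(J) \;=\; \deg c_F(X) - 2\,|\mathrm{Sing}\,X|.
\end{equation}
For a quartic hypersurface $X \subset \IP^3$ (smooth or singular), the virtual Chern class computation $c(T^{\mathrm{vir}}_X) = c(T\IP^3)|_X/(1+4H)|_X$ yields $\deg c_F(X) = 24$, whence $\int_{X_s} c_2(J) = 24 - 2|\mathrm{Sing}\,X|$.

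For the second piece I would pass to the crepant resolution $\pi: \widetilde X \to X$. Because each singularity is $A_1$, the exceptional locus is $E = \bigsqcup_i E_i$ with $E_i \cong \IP^1$ and $E_i\cdot E_i = -2$, and $\widetilde X$ is a smooth K3, so $c_1(\widetilde X)=0$. The key local computation uses the parametric map $(w,\zeta)\mapsto (w, w\zeta^2, w\zeta)$ of the $A_1$ normal form $\{xy=z^2\}$, which gives $\det(\pi^* g^{\textsf{FS}}) = |w|^2(1+4|\zeta|^2+|\zeta|^4)$ near $E$. Globally this means $\log\det(\pi^* g^{\textsf{FS}}) = \log|\sigma_E|_h^2 + \psi$ for a defining section $\sigma_E$ of $\mathcal{O}(E)$ with smooth Hermitian metric $h$ and smooth $\psi$, so Poincar\'e--Lelong yields the current identity
\begin{equation}
  c_1(\pi^* g^{\textsf{FS}}) \;=\; -[E] + \alpha,
\end{equation}
where $\alpha$ is a smooth $(1,1)$-form on $\widetilde X$. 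Since $[c_1(\pi^* g^{\textsf{FS}})]=0$, cohomology forces $[\alpha]=[E]$ in $H^{1,1}(\widetilde X,\mathbb{R})$. As $X_s\cong \widetilde X\setminus E$ and $\alpha$ extends smoothly,
\begin{equation}
  \int_{X_s} c_1(J)^2 \;=\; \int_{\widetilde X} \alpha\wedge\alpha \;=\; [E]\cdot[E] \;=\; \sum_i E_i\cdot E_i \;=\; -2\,|\mathrm{Sing}\,X|.
\end{equation}
Combining, $\int_{X_s}(c_2-c_1^2) = (24-2|\mathrm{Sing}\,X|) + 2|\mathrm{Sing}\,X| = 24$.

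The main obstacle I anticipate is the rigorous current-theoretic justification of the decomposition $c_1(\pi^* g^{\textsf{FS}}) = -[E] + \alpha$ with the correct coefficient, in particular the claim that $\det(\pi^*g^{\textsf{FS}})$ vanishes to order exactly one along each $E_i$ independently of the ambient realization. By the classical normal-form theorem for Du~Val singularities it suffices to verify this locally in the $\{xy=z^2\}$ model, where the parametric calculation indicated above does the job. A secondary technical point is to verify that $\alpha$ extends smoothly across $E$, so that $\int_{X_s}\alpha^2 = \int_{\widetilde X}\alpha^2$ without boundary correction; this follows from the explicit form of $\psi$ near $E$. Both points are local at each $A_1$ point and should require no new analytic input beyond the Poincar\'e--Lelong formula already used in Proposition~\ref{prop:one}.
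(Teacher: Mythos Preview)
The paper does not prove this statement: it is presented as Conjecture~\ref{conj1}, motivated solely by the numerical values in Table~\ref{table:c1c1c2FS}. There is therefore no ``paper's own proof'' to compare against; your proposal goes beyond what the paper establishes.

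Your argument is sound and would in fact prove the conjecture. The $c_2$ piece is immediate from Proposition~\ref{prop:pfaffian} once one notes that for a K\"ahler surface the Chern--Weil $c_2$-form and the Pfaffian Euler form agree pointwise (holonomy reduction $U(2)\subset SO(4)$), and $\deg c_F(X)=24$ for any quartic in $\IP^3$ (cf.\ Table~\ref{table:fultonCSMCefaluPencil}). The novel input is your computation of $\int_{X_s}c_1(J)^2$ via the crepant resolution: the key point, that $\det(\pi^*g^{\textsf{FS}})$ vanishes precisely to order $|w|^2$ along each exceptional $(-2)$-curve, is correct. Your local model computation gives it for the flat ambient metric; for the actual Fubini--Study metric one observes that $(f^*\omega_{\textsf{FS}})^2$ is globally smooth on $\widetilde X$ (being the pullback of a smooth form on $\IP^3$ via the morphism $f=\iota\circ\pi$), and a Cauchy--Binet argument shows that the vanishing order along $E$ depends only on the Jacobian of $f$, not on the specific positive-definite ambient metric. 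The Poincar\'e--Lelong step then yields $c_1(\pi^*g^{\textsf{FS}})=-[E]+\alpha$ with $\alpha$ smooth and closed; since this current is globally $\partial\bar\partial$-exact (via $\log F$ with $F=(f^*\omega_{\textsf{FS}})^2/(\pi^*\Omega\wedge\overline{\pi^*\Omega})$), its class vanishes, forcing $[\alpha]=[E]$ and hence $\int_{X_s}c_1^2=\int_{\widetilde X}\alpha^2=[E]^2=-2|\mathrm{Sing}\,X|$. The two obstacles you flag are genuine but, as you say, local and resolved by the explicit normal-form calculation.
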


For a singular algebraic surface $X_{\l^\sharp}$ with $\l^\sharp<\infty$, the crepant ($c_1$-preserving) desingularization of an isolated $A_1$-singularity replaces it with an exceptional $\IP^1$-like divisor, $S$, with the self-intersection $[S]^2=-2$.
 The $\deg c_1(J_{\lambda}^\textsf{FS})^2$ column in Table~\ref{table:c1c1c2FS} evidently equals the total sum of these isolated contributions --- as if $X_{\l^\sharp}$ was desingularized. We thus arrive at the next conjecture.
\begin{conj}\label{conj2}
 Each $X_{\l^\sharp}$ in~\eqref{e:Cefalu} with $\l^\sharp<\infty$ may be identified with a global finite quotient, and the $\deg c_2(J_{\lambda}^\textsf{FS})$ column contributions to $\chi(\text{K3\/})$ in Table~\ref{table:c1c1c2FS} appropriately display the leading term in~\eqref{e:OrbiEu}.
\end{conj}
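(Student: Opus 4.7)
The plan is to split Conjecture~\ref{conj2} into two independent parts: (i) establishing the global $\mathbb{Z}_2$-quotient presentation $X_{\lambda^\sharp} \approx M_{\lambda^\sharp}/\mathbb{Z}_2$, where the involution has fixed locus $F_{\lambda^\sharp}$ of cardinality equal to $|\mathrm{Sing}\,X_{\lambda^\sharp}|$; and (ii) matching the observed $\deg c_2(J_{\lambda^\sharp}^{\textsf{FS}})$ with the leading term $\tfrac{1}{|G|}(\chi(M_{\lambda^\sharp}) - \chi(F_{\lambda^\sharp}))$ of \eqref{e:OrbiEu}.

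Part (ii) is essentially immediate from results already in hand. Proposition~\ref{prop:pfaffian} combined with Table~\ref{table:fultonCSMCefaluPencil} gives
$\deg c_2(J_{\lambda^\sharp}^{\textsf{FS}}) = \deg c_F(X_{\lambda^\sharp}) - 2|\mathrm{Sing}\,X_{\lambda^\sharp}| = 24 - 2|\mathrm{Sing}\,X_{\lambda^\sharp}|$.
On the other hand, substituting $|G|=2$, $\chi(\widetilde{X_{\lambda^\sharp}}) = 24$, and $\chi(N) = 2|\mathrm{Sing}\,X_{\lambda^\sharp}|$ (one exceptional $\mathbb{P}^1$ per $A_1$ node, each with Euler number $2$) into \eqref{e:OrbiEu} rearranges to $\tfrac{1}{2}(\chi(M_{\lambda^\sharp}) - \chi(F_{\lambda^\sharp})) = 24 - 2|\mathrm{Sing}\,X_{\lambda^\sharp}|$. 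Both expressions evaluate to the table entries $8, -8, 0, 16$ for $\lambda^\sharp = 3/4, 1, 3/2, 3$, so (ii) reduces to (i).

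Part (i) is the substantive step. For $\lambda^\sharp = 3/4$ one takes $M_{3/4} = X_0$ (the Fermat quartic) with the involution indicated by $\spadesuit$ in Figure~\ref{f:Cefalu}; the verification proceeds by writing down the invariant subring of $\mathbb{C}[z_0,\dots,z_3]$ under this involution and matching its generators against $p_{3/4}$. For $\lambda^\sharp = 1$, the classical Kummer construction furnishes $M_1 = T^4$ with the antipodal involution, whose $16$ fixed points correspond to the $16$ nodes of $X_1$. For $\lambda^\sharp = 3/2$ and $\lambda^\sharp = 3$ no explicit $M_{\lambda^\sharp}$ is currently known; the numerical constraints $\chi(M_{3/2}) = 12$ and $\chi(M_3) = 36$, together with $h^{2,0}(M_{\lambda^\sharp}) \geq 1$ and the existence of a $\mathbb{Z}_2$-invariant holomorphic two-form, sharply restrict the candidate surfaces. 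The plan is to look for involutions on auxiliary surfaces (bidegree $(2,2)$ divisors in $\mathbb{P}^1 \times \mathbb{P}^3$, double covers of $\mathbb{P}^2$ branched over suitable octics, or elliptic fibrations with prescribed singular fibers) whose fixed loci reproduce the node configuration of $X_{\lambda^\sharp}$, using the residual $S_4\times (\mathbb{Z}_2)^2$ symmetry acting on the nodes as an organizing principle.

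The main obstacle is the explicit identification of $M_{3/2}$ and $M_3$. Nikulin's classification of finite group actions on K3 surfaces, combined with lattice-theoretic data from the resolved $\widetilde{X_{\lambda^\sharp}}$, should narrow the list of possible covers, but each case still appears to require a bespoke geometric construction. An alternative moduli-theoretic route would use the period map for the Cefal\'u family and the monodromy around $\lambda = \lambda^\sharp$ to exhibit the cover intrinsically. Either path demands genuinely new geometric input beyond the ingredients used to settle part (ii), and pinning this down is where the conjecture retains its force.
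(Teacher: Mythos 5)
This statement is a \emph{conjecture}: the paper offers no proof, only the numerical corroboration in Table~\ref{table:c1c1c2FS} together with the explicit (and themselves only ``$\approx$''-level) identifications $X_{3/4}\approx X_0/\ZZ_2$ and $X_1\approx T^4/\ZZ_2$ from~\eqref{e:8->16}. Your part~(ii) is correct and is essentially the paper's own reasoning run in reverse: Proposition~\ref{prop:one} plus $\deg c_F(X_{\l^\sharp})=24$ gives $\deg c_2(J^{\textsf{FS}}_{\l^\sharp})=24-2|\mathrm{Sing}\,X_{\l^\sharp}|$, and rearranging~\eqref{e:OrbiEu} with $\chi(\widetilde{X_{\l^\sharp}})=24$ and $\chi(N)=2|\mathrm{Sing}\,X_{\l^\sharp}|$ shows this is exactly the leading term $\frac1{|G|}(\chi(M)-\chi(F))$ --- this is precisely the computation~\eqref{e:OrbiEu2} in the paper. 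So you have correctly isolated the content: the second clause of the conjecture is a theorem conditional on the first clause, and the first clause is where the conjecture lives.

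The genuine gap is that part~(i) is not proved, for $\l^\sharp=3/2$ and $3$ not even at the level of a candidate cover, and your proposed toolkit has a flaw worth flagging: Nikulin's classification of symplectic automorphisms of K3 surfaces cannot directly produce $M_{3/2}$ or $M_3$, because the constraint $\chi(M_{3/2})=12$ and $\chi(M_3)=36$ from~\eqref{e:OrbiEu2} means these covers are \emph{not} K3 surfaces (nor tori); indeed a symplectic involution of a K3 has exactly $8$ fixed points, which is consistent only with the $\l^\sharp=3/4$ case. Any lattice-theoretic argument would have to be run on the resolved quotient $\widetilde{X_{\l^\sharp}}$ (which is a K3) and then inverted to reconstruct an unramified-away-from-nodes double cover with the prescribed fixed locus --- a construction that, as the paper itself concedes (``we do not have any obvious candidate for $M_{\frac32}$ and $M_3$''), requires genuinely new input. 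Your proposal is therefore an honest and well-organized research plan whose easy half coincides with the paper's evidence, but it does not settle the conjecture, and no reader should mistake the surviving numerical match for a proof of the quotient presentation.
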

\noindent
This corroborates our expectation that $X_{3/2}$ and $X_3$ are also identifiable as global finite $\ZZ_2$ quotients as shown in~\eqref{e:OrbiEu2}.

\begin{figure*}[htb]
	\centering
     \begin{subfigure}[b]{0.24\textwidth}
     	\centering
        \includegraphics[width=\textwidth]{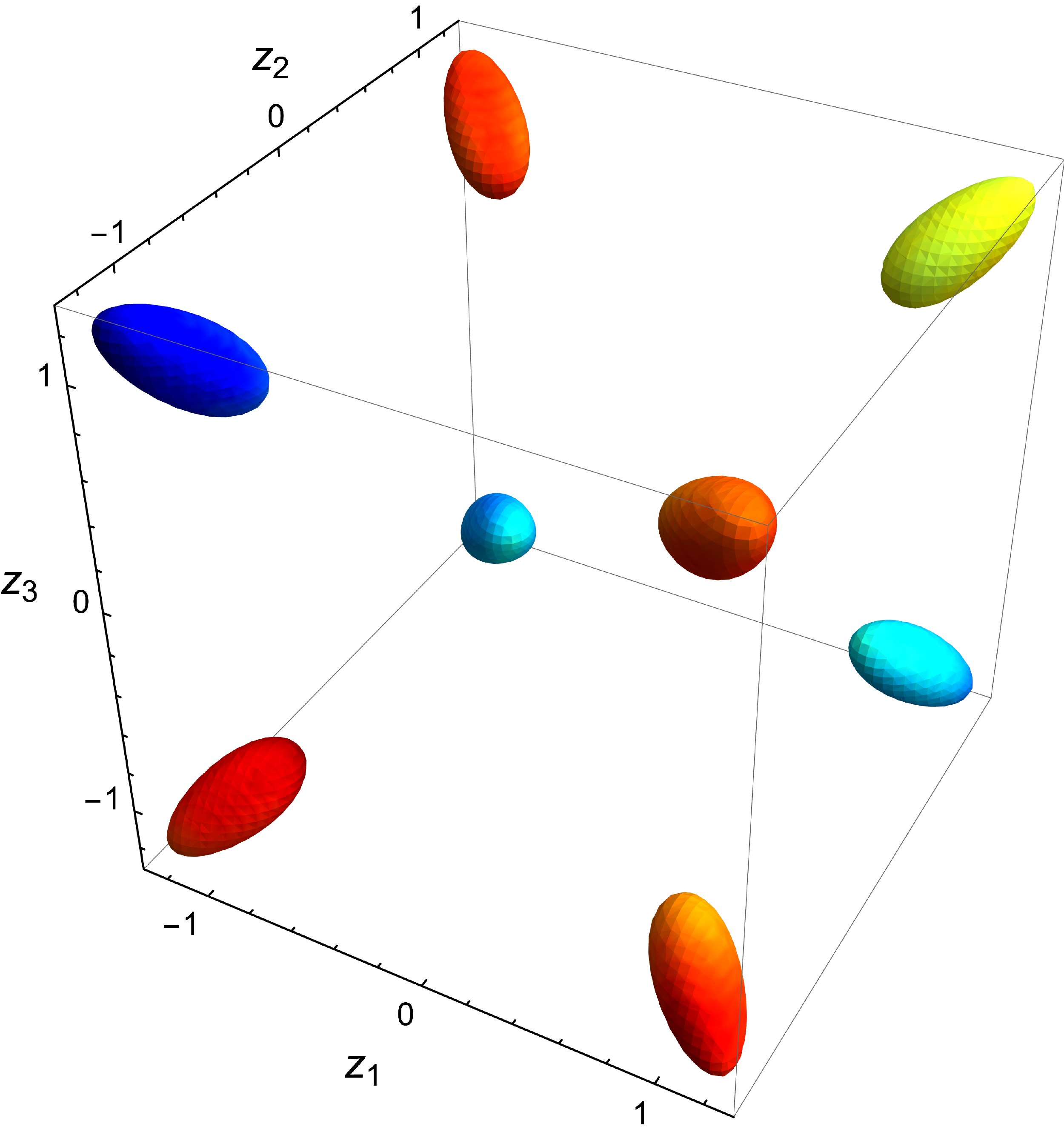}
		\caption{{\small $\lambda= 0.77$}}
	\end{subfigure}
%	\hfill
	\begin{subfigure}[b]{0.24\textwidth}  
	\centering 
		\includegraphics[width=\textwidth]{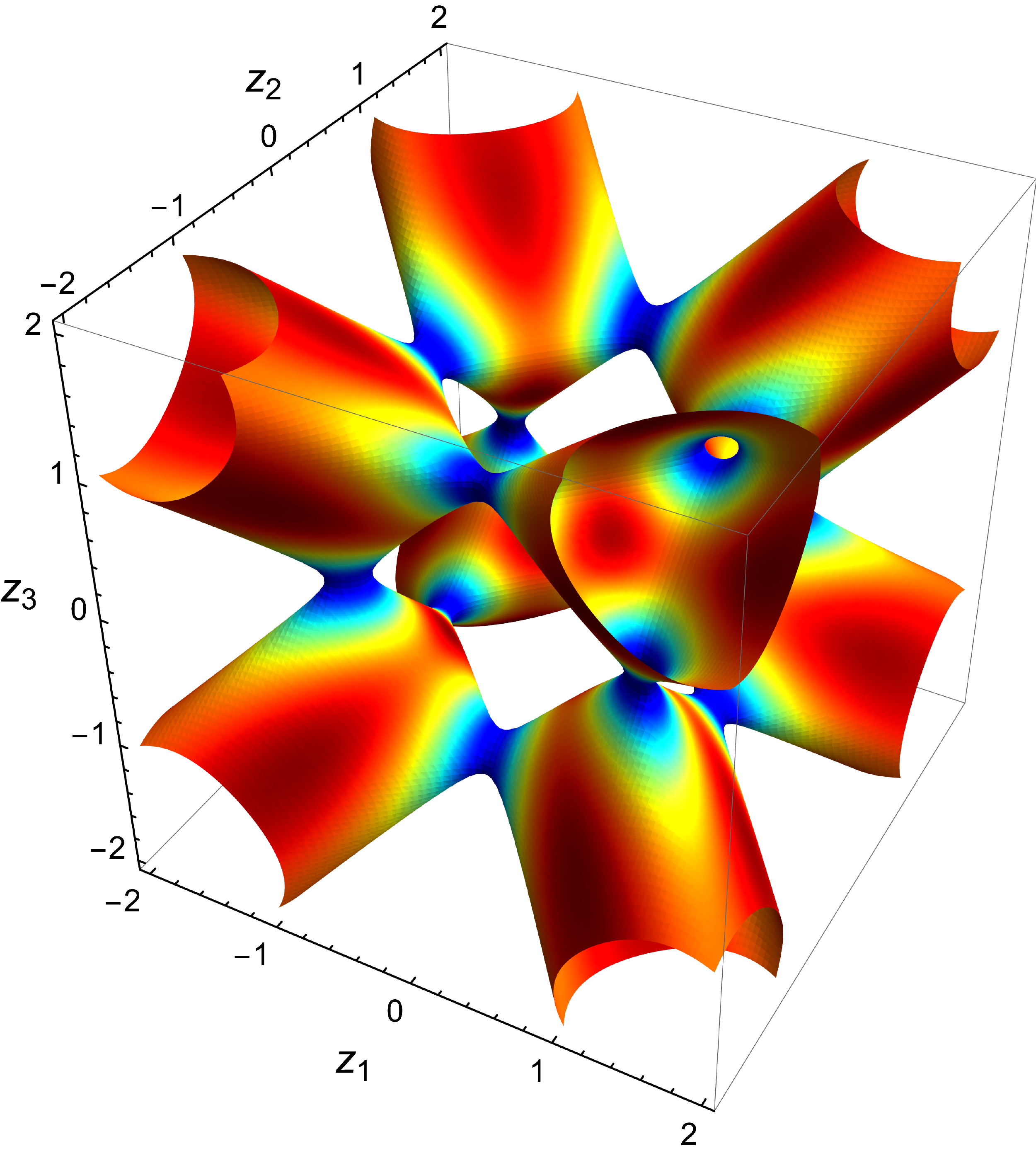}
		\caption{{\small $\lambda = 1.01$}}
	\end{subfigure}
%	\vskip\baselineskip
	\begin{subfigure}[b]{0.24\textwidth}   
		\centering 
		\includegraphics[width=\textwidth]{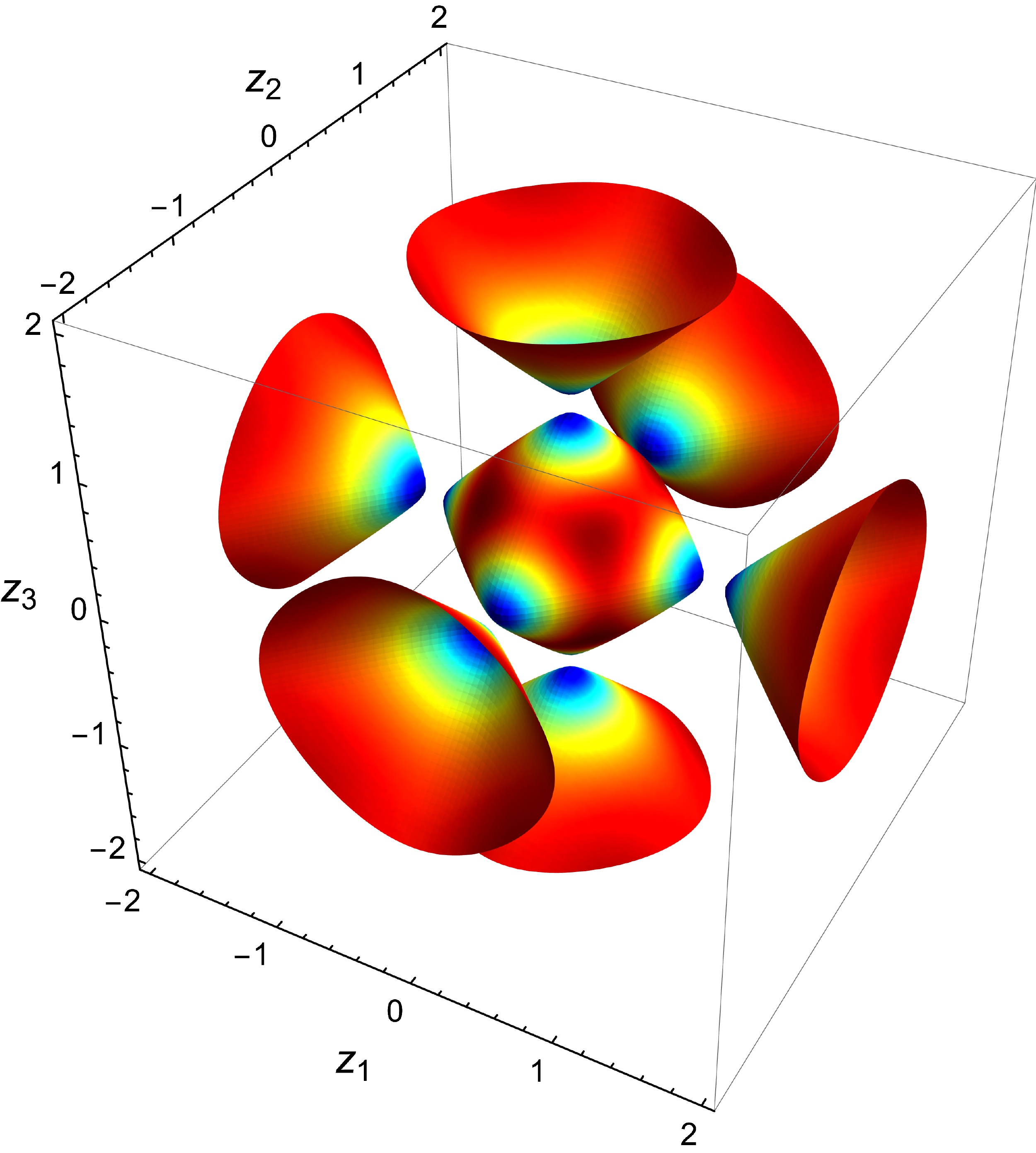}
		\caption{{\small $\lambda = 1.51$}}
	\end{subfigure}
%	\hfill
	\begin{subfigure}[b]{0.24\textwidth}   
		\centering 
		\includegraphics[width=\textwidth]{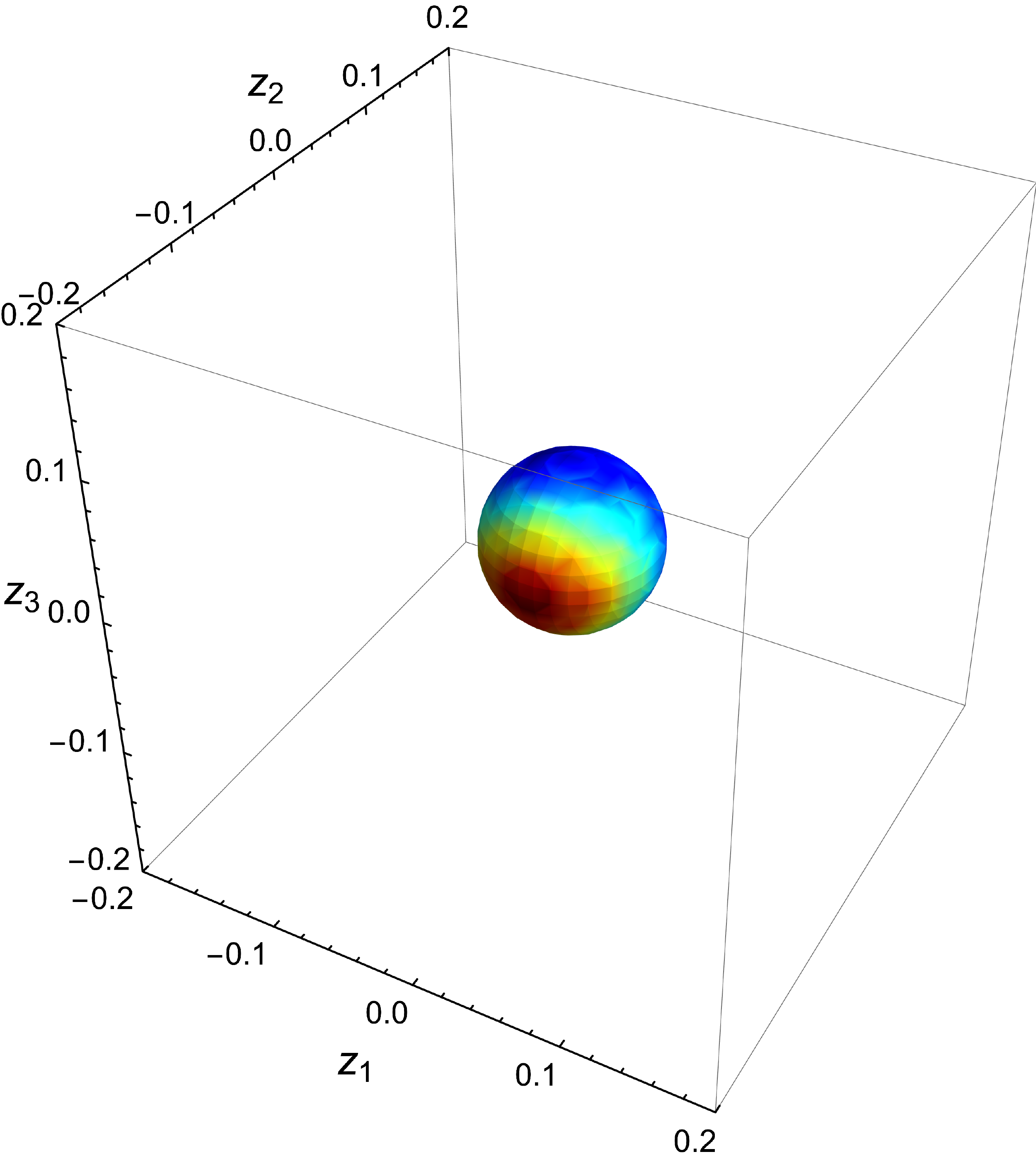}
		\caption{{\small $\lambda = 2.98$}}
	\end{subfigure}
	\vspace*{10mm}
	\caption{Visualization of the real subset of $X_{\lambda^\sharp + \epsilon}$ in a patch $\{z_0 \neq 0\}$. The coloring is defined by the values of the trained spectral network $\phi$. (See Section~\ref{sec:spectralNNs}.)}
 \label{fig:visualizationsCefaluPhi}
\end{figure*}

We note that Conjecture~\ref{conj1} fails if the singular locus of $X$ contains a singularity of dimension greater than zero, which is possible for geometries that combine the Dwork and the Cefal\'{u} deformations.

We shall separately consider each singular $X_{\l^\sharp}$ in the following.
The visualizations of $\phi$ for some near-singular surfaces $X_\lambda$ are shown on Figure~\ref{fig:visualizationsCefaluPhi}.
The training progress is shown on the Figure~\ref{fig:cefalu_sigma_loss}.

\begin{figure*}[htb]
    \centering
    \includegraphics[width=0.7\textwidth]{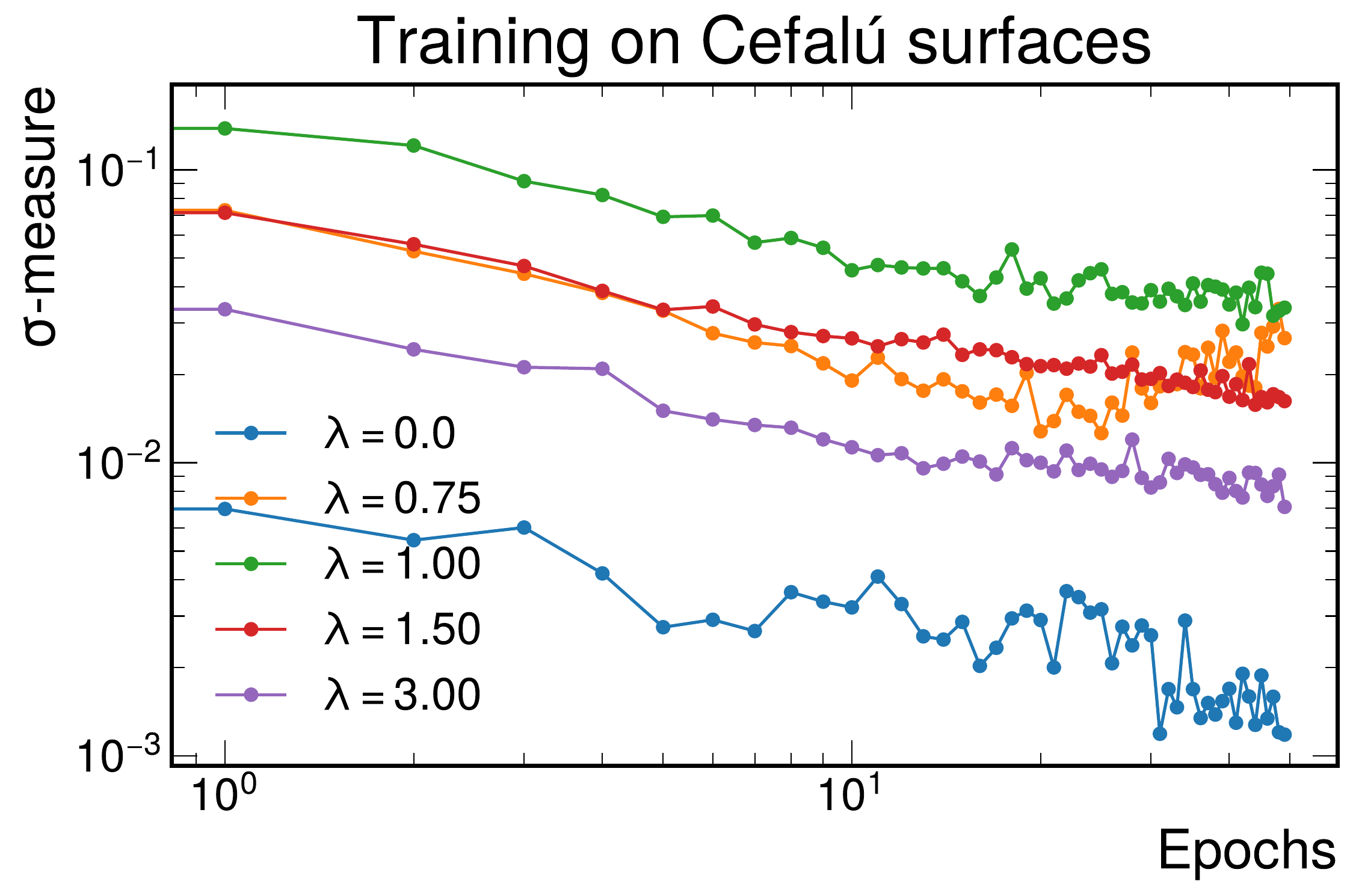}
    \vspace{5mm}
    \caption{Evolution of the $\sigma$ loss when training at $\lambda = \lambda^\sharp$. The $\sigma$-measure is evaluated on validation set.}\label{fig:cefalu_sigma_loss}
\end{figure*}

\subsubsection[\texorpdfstring{$\lambda = 0$}{lambda=0}]{$\bm{\lambda = 0}$}\label{sec:lambda0}

We first consider the smooth K3 obtained from turning off the Cefal\'u deformation.
This supplies a reference point with which to compare calculations on the singular K3 spaces corresponding to the special values $\lambda=\lambda^\sharp \in \{3/4,1,3/2,3\}$. The distribution of the values of $c_1^2$ is shown on the Figure~\ref{fig:quartic_hist_c1c1}. There we see a widespread distribution for the Fubini--Study metric curvature distribution, while for the trained metric the curvature density concentrates in a sharp peak around zero, as expected for a flat metric. Also notice in Figure~\ref{fig:quartic_hist_c2} the Euler density distribution for the learned metric is positive at all points, in accordance to the Bogolomov-Yau inequality. 

\begin{figure*}[htb]
    \centering
    \includegraphics[width=0.7\textwidth]{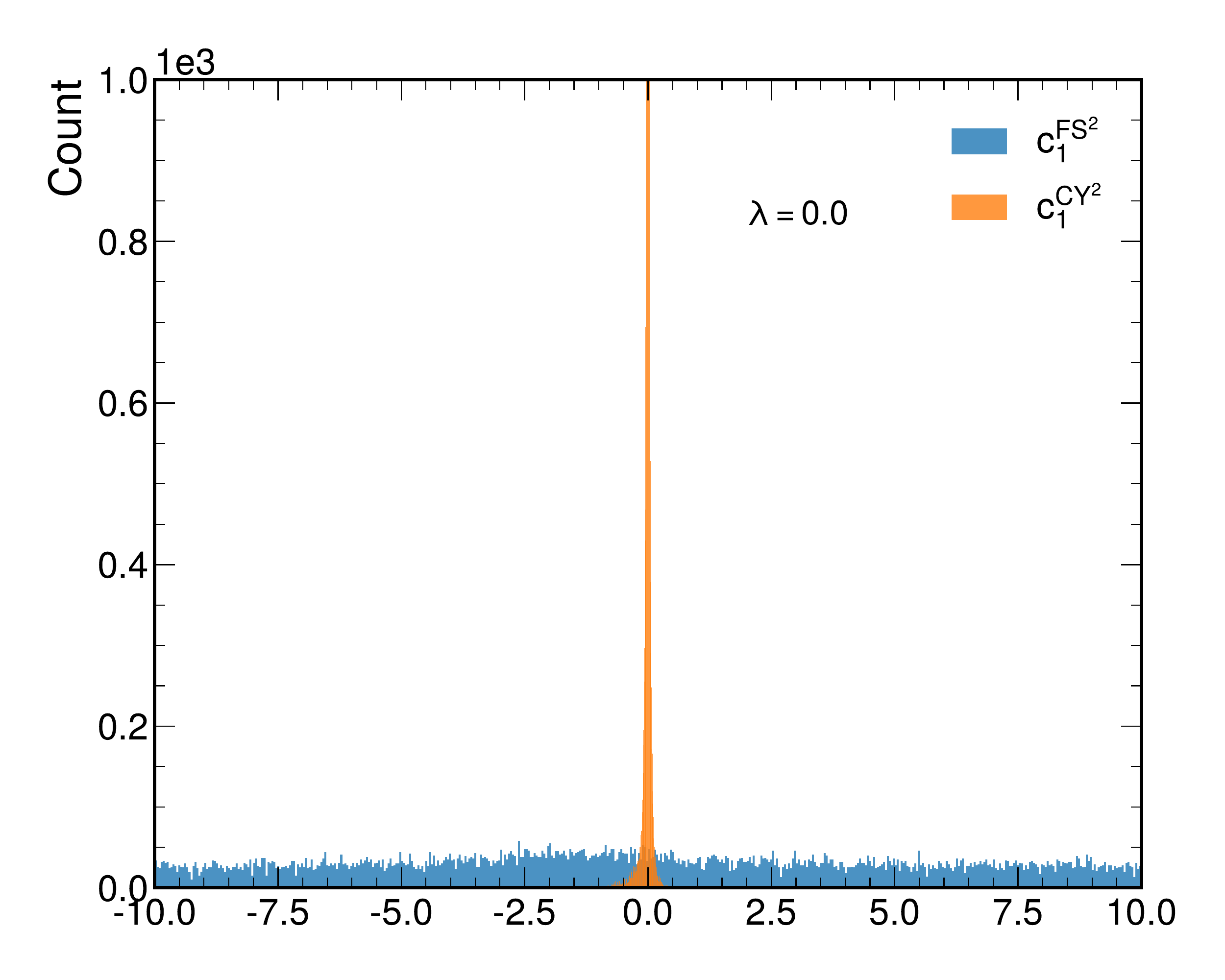}
    \caption{Distribution of the values of $c_1^2$ using both Fubini--Study and machine learned Calabi--Yau metrics for the Fermat quartic.}\label{fig:quartic_hist_c1c1}
\end{figure*}

Similarly, the distribution of the values of $c_2$ is shown on the Figure~\ref{fig:quartic_hist_c2}.

\begin{figure*}[htb]
    \centering
    \includegraphics[width=0.7\textwidth]{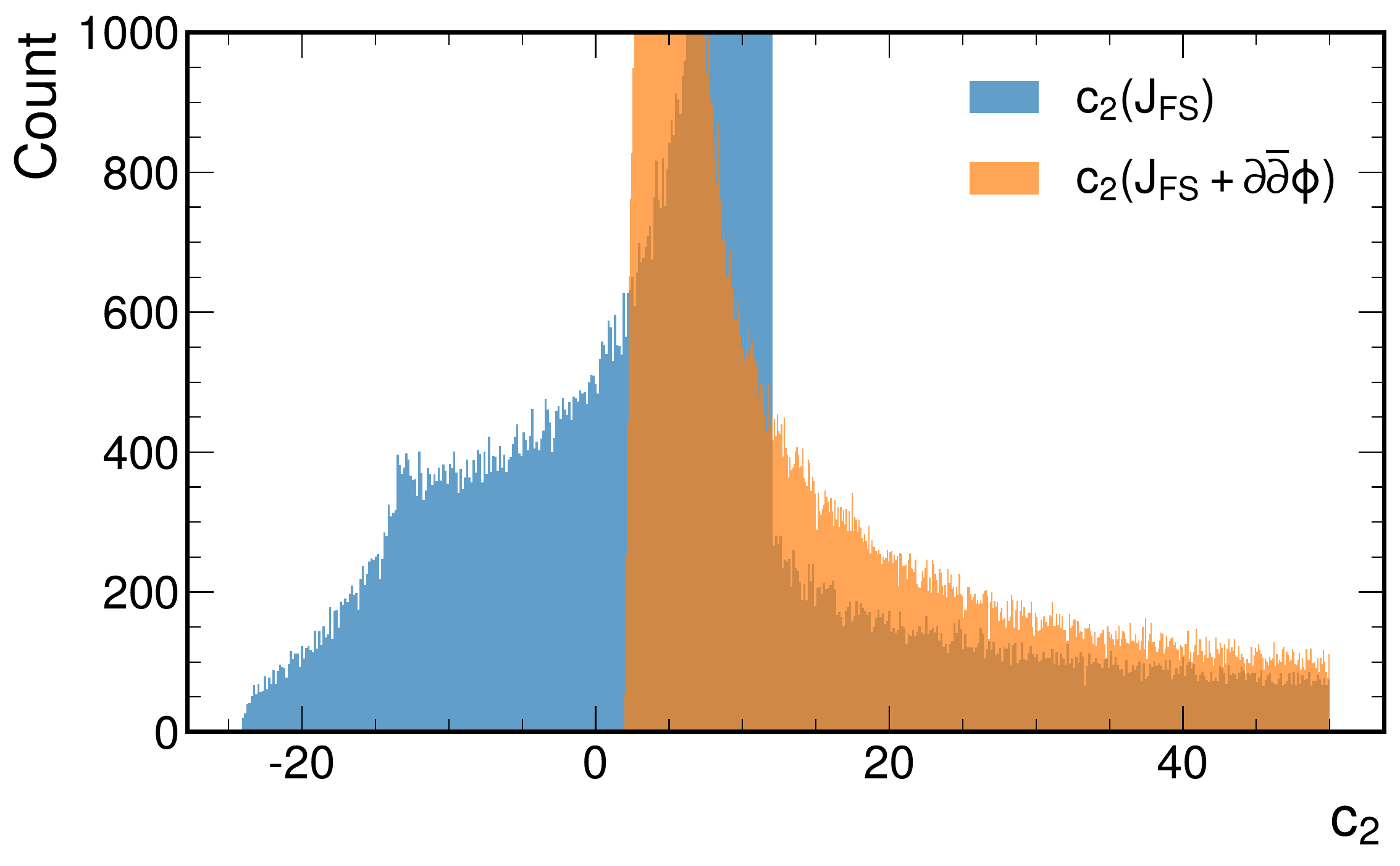}
    \caption{Distribution of the values of $c_2$ using both Fubini--Study and machine learned Calabi--Yau metrics for the Fermat quartic.}\label{fig:quartic_hist_c2} 
\end{figure*}

The integral of the Euler density gives the following values:
\begin{gather}
    \int_{X_0}c_2(J_0^\mathrm{FS}) \approx 24.23 ~,\quad
    \int_{X_0}c_2(J_0^\mathrm{CY}) \approx 24.04 ~,
\end{gather}
where the integral is approximated using $60,000$ points.

\subsubsection[\texorpdfstring{$\lambda = 3/4$}{lambda=3/4}]{$\bm{\lambda = 3/4}$}\label{sec:lambda34}
The number of points in the singular locus $\mathrm{Sing}~{X_{3/4}}$ is $8$. The singular points of $X_{3/4}$ are of form:
\begin{gather}\label{eq:singX34}
	\mathrm{Sing}~{X_{3/4}} = \left\{[\pm 1:\pm 1:\pm 1:\pm 1],\quad\dots\right\} ~.
\end{gather}
We consider a small deviation from $\lambda = 3/4$ by considering varieties $X_{3/4\pm\epsilon}$ for some sufficiently small $\epsilon > 0$. This allows us to study the behavior of the Ricci-flat metric and the curvature thereof as we approach  $X_{3/4}$. In particular, by observing the histogram of the Euler density on Figure~\ref{fig:c2Density075}, we see that the varieties are not uniformly curved. The histograms highlight the contrast between the Fubini--Study and machine learned curvature distributions. A persistent feature is the positivity of the curvature for the machine learned flat approximation. Na\"{\i}vely, this should be expected from the Bogolomov--Yau inequality, because a nearly Ricci-flat metric must have $c_1(J)^2$ close to zero. This feature is persistent for all values of $\lambda$ considered in this work. 
\begin{figure*}[htb]
	\centering
	\includegraphics[width=0.7\textwidth]{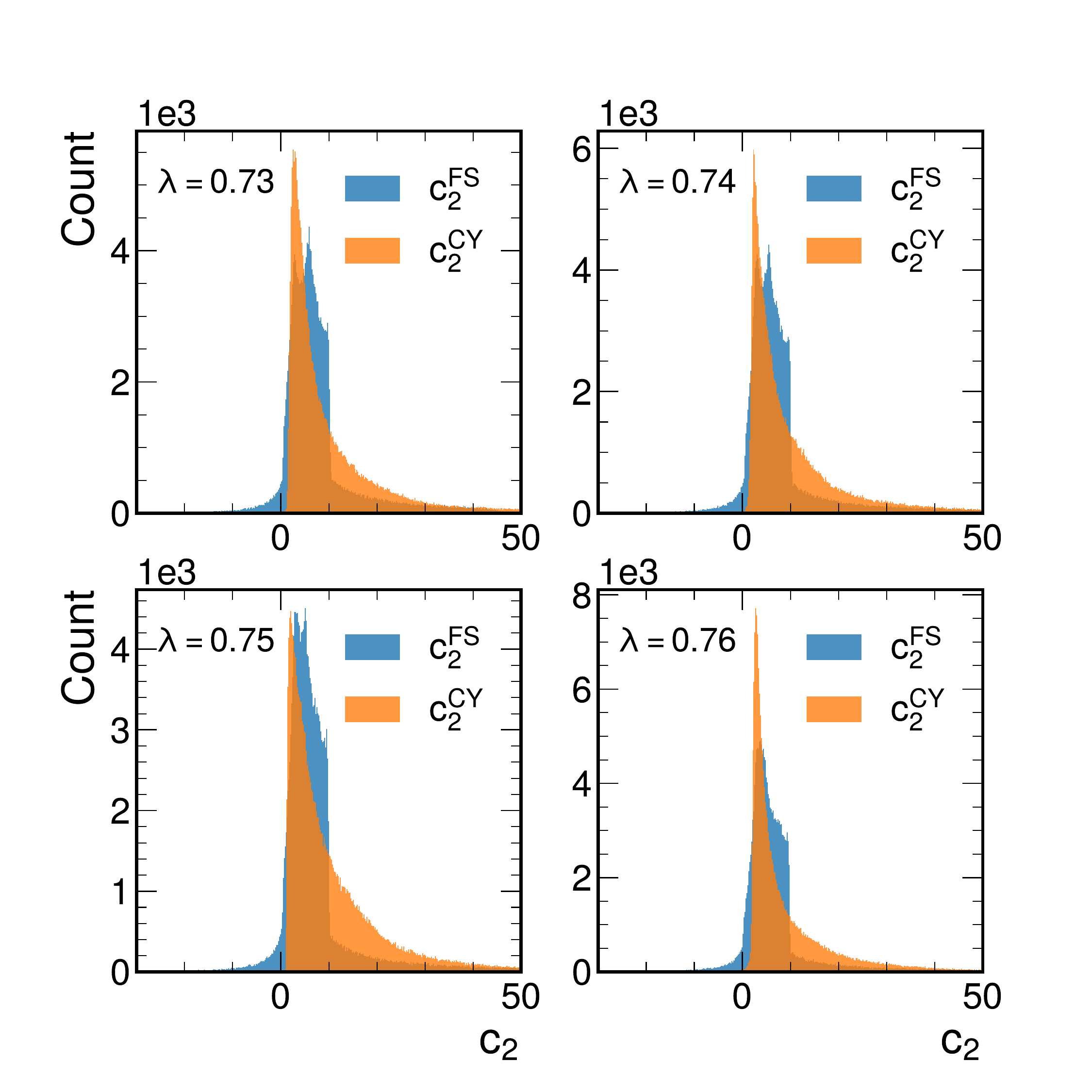}
	\caption{Distribution of the values of the Euler density $e(J_{3/4\pm\epsilon})$ using both $J_{3/4\pm\epsilon}^\textsf{FS}$ and $J_{3/4\pm\epsilon}^\textsf{CY}$.}\label{fig:c2Density075}
\end{figure*}

Let us now explore the curvature tails, \textit{i.e.}, the high curvature regions.
In particular, let $e(J_{3/4\pm\epsilon})$ denote the Euler density corresponding to the curvature form $J_{3/4\pm\epsilon}$ and consider a subset of $X_{3/4\pm\epsilon}$ defined by a one parameter family:
\begin{gather}
	X^\delta_{3/4\pm\epsilon} := \{p\in X_{3/4\pm \epsilon}~\vert~ |e(J_{3/4\pm\epsilon})(p)| \geq \delta\} ~,
\end{gather}
where the cut-off parameter $\delta\in\mathbb{R}$ for the Euler density limits to the high curvature regions in $X_{\lambda}$.
The motivation for defining $\delta$ is to construct a multi-parameter persistent homology, with $\delta$ controlling the cut-off of curvature.
We observe that there exists a sufficiently large value of $\delta > 0$ and sufficiently small value of $\epsilon > 0$ such that that the high curvature regions become disconnected in $X_{3/4\pm \epsilon}^\delta$.
 To study the high Euler density regions, we consider zeroth persistent homology groups $H_0^r(X^\delta_{3/4\pm\epsilon})$.
In this expression, $r$ is the filtration parameter, 
\textit{i.e.}, the radius of the sphere around each point on the Calabi--Yau, $\delta$ is the lower cutoff of the curvature, and $\epsilon$ is the displacement from the singular locus $\lambda=3/4$ in complex structure moduli space.
Hence, the high curvature regions close to the singularity are at $\delta\gg1$ and $0<\epsilon\ll1$. 
For $\lambda=3/4$, we chose $\epsilon=0.02$ and normalized $\delta=0.5$ (normalization is such that the value of $1$ corresponds to the maximum value of Euler density), and computed the filtration in the range $0\leq r\leq 2$.
For larger values of $r$, the persistence diagram of $H_0^r(X^\delta_{3/4\pm\epsilon})$ has stabilized to a single connected component.
For simplicity, we compute the persistent homology in each patch separately using the Euclidean metric.
Specifically, we cover the ambient space $\mathbb{P}^3$ by sets $D_i = \{z_i\neq 0\}$ and consider homology groups of $X_{3/4\pm\epsilon}^{\delta}\cap D_i$. For $\epsilon = 0$, using~\eqref{eq:singX34}, note that there are total of $8$ singular points in each such intersection. In order to ensure that the variety admits well-defined Ricci-flat metric, we consider non-zero $\epsilon > 0$. Thus, the number of generators for different values of $\delta>0$ at different points in the filtration at $\lambda = 0.73$ is shown in Figure~\ref{fig:persMulti073} whereas, the persistence barcode for some sufficiently large $\delta>0$ (in the sense as defined above) is shown in Figure~\ref{fig:persBarcode073}. There, for each patch $D_i$, we observe $8$ cycles with large persistence, each corresponding to some neighborhood of a point $p\in \mathrm{Sing}~X_{3/4}$. Typical points in each cycle are shown in the Table~\ref{table:persistencePoints073} and are thus consistent with points in $\mathrm{Sing}~X_{3/4}$.

\setlength{\abovecaptionskip}{10pt}
\begin{table}[htb]
\centering
{\small
\begin{tabular}{||c l l||} 
 \hline
 Patch & Point & Closest point in $\mathrm{Sing}~X_{3/4}$\\ [0.5ex] 
 \hline\hline
 $D_0$ & $[1.0:~0.87:-0.93:-0.84]$ & $[1:~1:-1:-1]$  \\ 
  & $[ 1.0:-0.86:-0.83:~0.96]$ & $[1:-1:-1:~1]$ \\
  & $[ 1.0:-0.82:-0.87:-0.82]$ & $[1:-1:-1:-1]$ \\
  & $[ 1.0:-0.95:~0.98:~0.95]$ & $[1:-1:~1:~1]$ \\
  & $[ 1.0:-0.95:~0.80:-0.87]$ & $[1:-1:~1:-1]$ \\
  & $[ 1.0:~0.97:-0.95:~0.98]$ & $[1:~1:-1:~1]$ \\
  & $[1.0:~0.84:~0.91:~0.96]$ & $[1:~1:~1:~1]$\\
  & $[ 1.0:~0.92:~0.82:-0.89]$ & $[1:~1:~1:-1]$ \\[1ex] 
 \hline
\end{tabular}}
\caption{Typical points in each of the large persistence cycle of $H_0^r(X_{0.73}^\delta \cap D_0)$.}
\label{table:persistencePoints073}
\end{table}
\setlength{\abovecaptionskip}{-10pt}

\begin{figure*}[htb]
    \centering
    \includegraphics[width=0.8\textwidth]{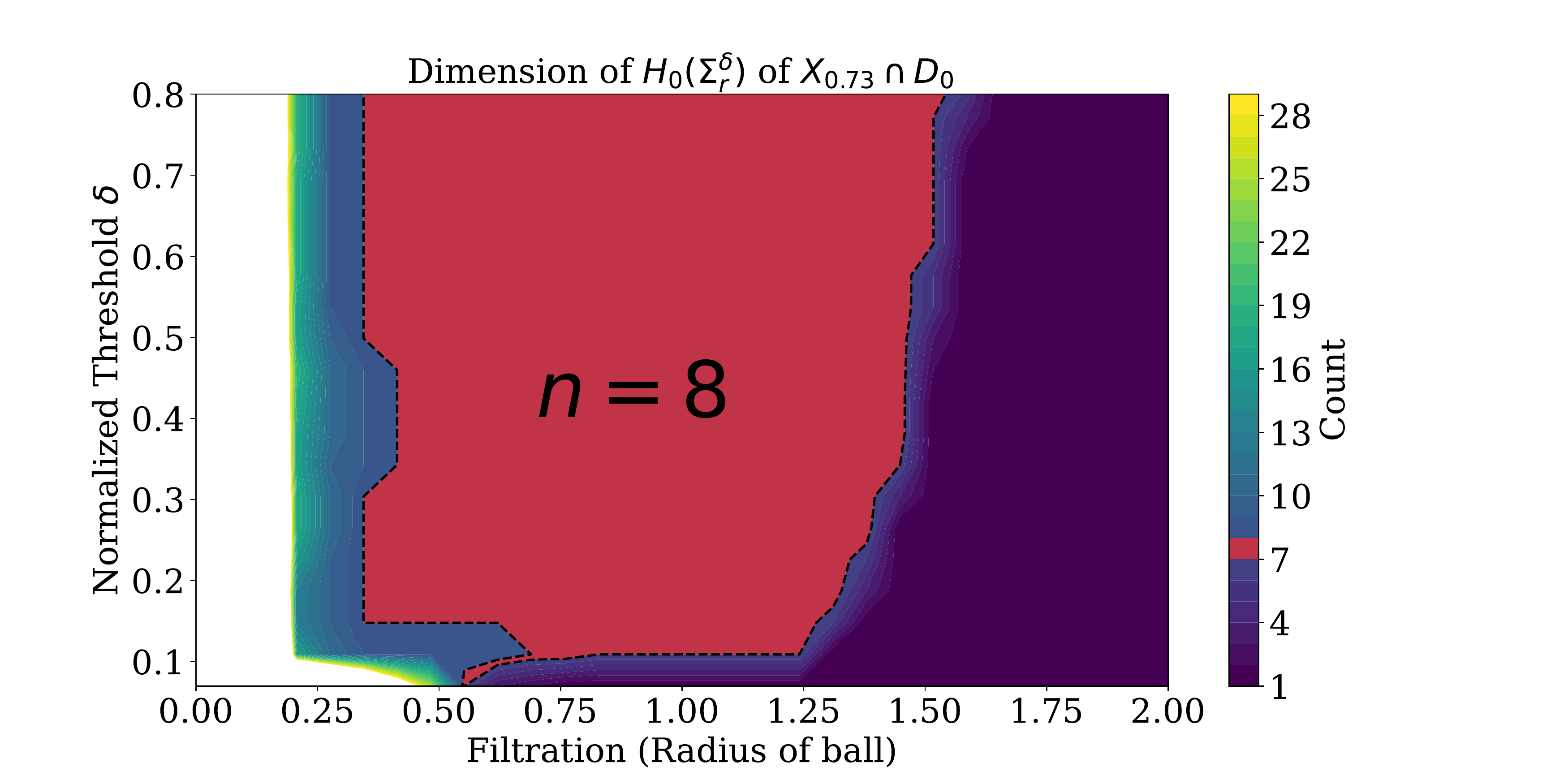}
    \vspace{3mm}
    \caption{Dimension of $H_0(\Sigma_r^\delta)$ for different values of $\delta>0$, where $\Sigma_r^\delta$ is a Vietoris--Rips filtration of $X^\delta_{0.73}\cap D_0$. Notice sharp jump in persistence (difference between death and birth indices in Vietoris--Rips filtration) of $n=8$ generators. The normalized threshold $\delta$ is such that normalized $\delta=1$ corresponds to the largest value of the Euler density.}
    \label{fig:persMulti073}
\end{figure*}

\begin{figure*}[htb]
	\centering
	\includegraphics[width=0.8\textwidth]{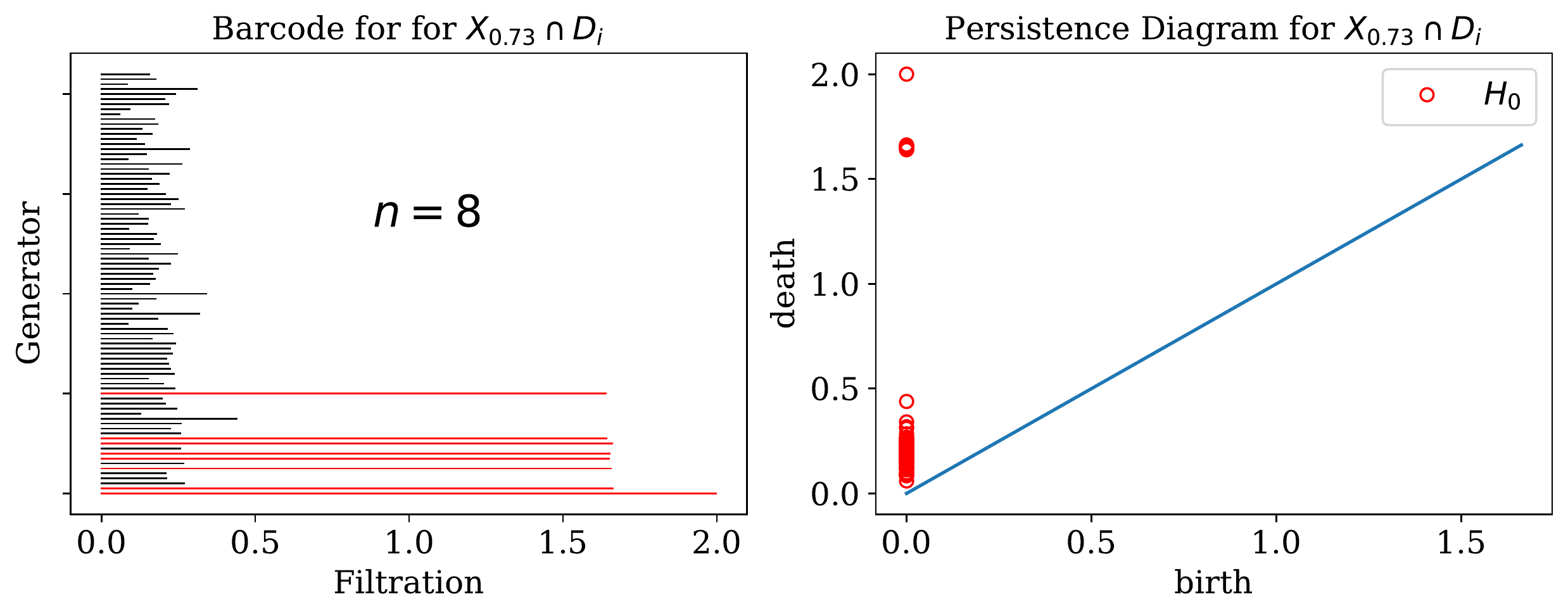}
 \vspace{3mm}
	\caption{Persistence barcode and diagram of $H_0^r(X_{0.73}^\delta \cap D_0)$ where $D_0=\{z_0\neq 0\}$. The number $n$ indicates the number of generators $\gamma$ with persistence $\mathrm{pers}(\gamma) > 0.7$ (colored as red). }\label{fig:persBarcode073}
\end{figure*}

\subsubsection[\texorpdfstring{$\lambda = 1$}{lambda=1}]{$\bm{\lambda = 1}$}
The number of points in the singular locus $\mathrm{Sing}~X_1$ is $16$. The singular points of $X_1$ are of form:
\begin{gather}\label{eq:sing1}
	\mathrm{Sing}~X_1 = \left\{[\pm 1: \pm1 : \pm 1 : 0],~[\pm 1: \pm 1: 0: \pm 1],~\dots\right\}	~.
\end{gather}
Similarly as in Section~\ref{sec:lambda34}, we contemplate deviations from $\lambda = 1$ by considering manifolds $X_{1\pm\epsilon}$ for some sufficiently small $\epsilon >0$. Histograms of the Euler density $e(J_{1\pm\epsilon})$ for the Fubini--Study metric as well as the machine learned approximation are shown in %, where $J_{1+\epsilon}$ is the curvature form on $X_1$, is shown in 
Figure~\ref{fig:c2Density1}. Similarly, we see that $X_{1\pm\epsilon}$ is not uniformly curved, thus, we consider $X_{1\pm\epsilon}^\delta$ defined similarly as a one parameter family:
\begin{gather}
	X_{1\pm\epsilon}^\delta := \{p\in X_{1\pm \epsilon}~\vert~ |e(J_{1\pm\epsilon})(p)| \geq \delta\} ~.
\end{gather}
The threshold parameter $\delta\in\mathbb{R}$ is not necessarily the same as in Section~\ref{sec:lambda34}. For studying the high Euler density regions, we analogously consider persistent homology groups $H_0^r(X_{1\pm\epsilon}^\delta \cap D_k)$ for each patch $D_k$. For $\epsilon = 0$, each patch contains: $3\times 2^{2}=12$ singular points. The persistence barcode at $\lambda = 0.98$ is shown in Figure~\ref{fig:persBarcode1}.

\begin{figure*}[htb]
	\centering
	\includegraphics[width=0.7\textwidth]{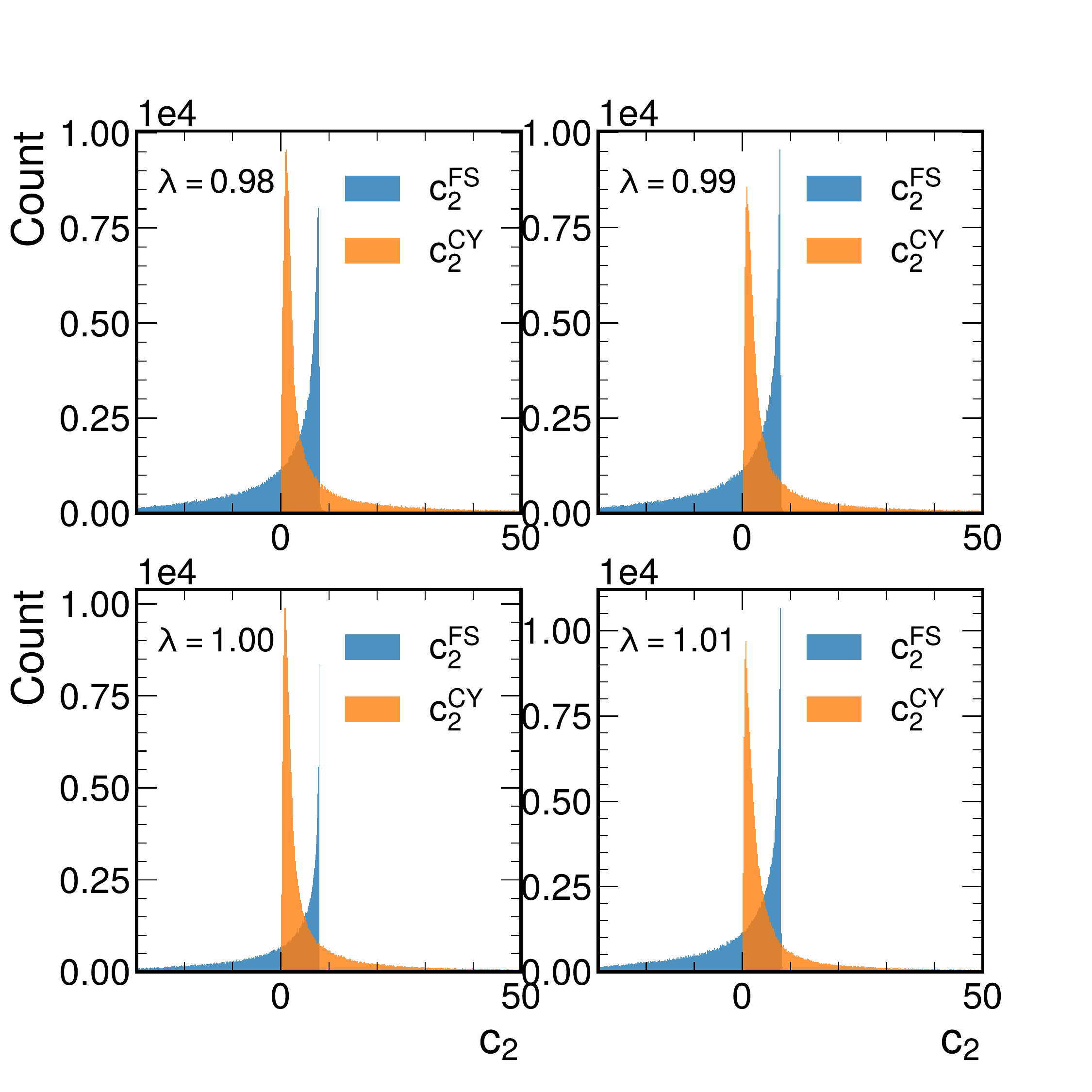}
	\caption{Distribution of the values of the Euler density $e(J_{1\pm\epsilon})$ using both $J_{1\pm\epsilon}^\textsf{FS}$ and $J_{1\pm\epsilon}^\textsf{CY}$.}\label{fig:c2Density1}
\end{figure*}

\begin{figure*}[htb]
	\centering
	\includegraphics[width=0.8\textwidth]{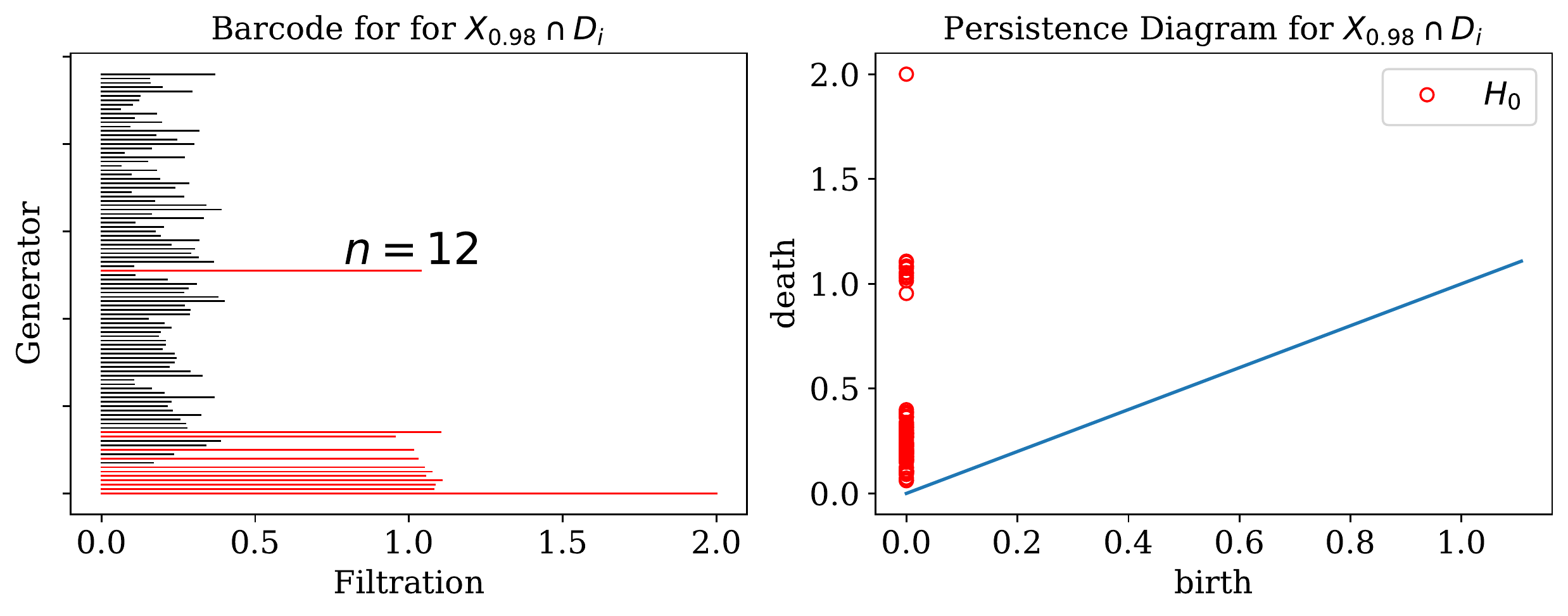}
  \vspace{3mm}
	\caption{Persistence barcode of $H_0^r(X_{0.98}^\delta \cap D_0)$ where $D_0=\{z_0\neq 0\}$. The number $n$ indicates the number of generators $\gamma$ with persistence $\mathrm{pers}(\gamma) > 0.7$ (colored as red).}\label{fig:persBarcode1}.
\end{figure*}

 For each patch $D_i$, we observe $12$ cycles with large persistence, each corresponding to some neighborhood of a point $p\in\mathrm{Sing}~X_1$. Typical points in each cycle are shown in %the 
 Table~\ref{table:persistencePoints1} and are thus also consistent with the points~\eqref{eq:sing1} in $\mathrm{Sing}~X_1$.
 
Furthermore, note that the Euler density $e(J_{1\pm\epsilon}^\textsf{CY})$ also satisfies the same positivity property as $X_{3/4}$, that is: $e(J_{1\pm\epsilon}^\textsf{CY})\gtrsim 0$, even in the case when $\epsilon = 0$.

\setlength{\abovecaptionskip}{10pt}
\begin{table}[htb]
\centering
{\small
\begin{tabular}{||c l l||} 
 \hline
 Patch & Point & Closest point in $\mathrm{Sing}~X_{1}$\\ [0.5ex] 
 \hline\hline
 $D_0$ & $[ 1.:-0.96:~0.97:~0.04]$ & $[1:-1:~1:~0]$  \\
 & $[1.:~0.98:~0.95:~0.04]$ & $[1:~1:~1:~0]$\\
 & $[ 1.:-0.05:-0.99:~0.95]$ & $[1:~0:-1:~1]$ \\
 & $[1.:~0.85:~0.22:~1.]$ & $[1:~1:~0:~1]$ \\
 & $[ 1.:-0.92:~0.02:-0.97]$& $[1:-1:~0:-1]$ \\
 & $[ 1.:-0.19:~0.86:-0.99]$& $[1:~0:~1:-1]$ \\
 & $[ 1.:~0.19:-0.99:-0.86]$& $[1:~0:-1:-1]$ \\
 & $[ 1.:~0.90:-0.03:-0.91]$& $[1:~1:~0:-1]$ \\
 & $[ 1.:-0.99:-0.96:-0.08]$& $[1:-1:-1:~0]$ \\
 & $[1.:~0.15:~0.88:~0.98]$& $[1:~0:~1:~1]$ \\
 & $[ 1.:0.96:-0.85:-0.16]$& $[1:~1:-1:~0]$ \\
 &$[ 1.:-0.94:-0.10:~0.98]$& $[1:-1:~0:~1]$\\[1ex] 
 \hline
\end{tabular}}
\caption{Typical points in each of the large persistence cycle of $H_0^r(X_{0.98}^\delta \cap D_0)$.}
\label{table:persistencePoints1}
\end{table}
\setlength{\abovecaptionskip}{-10pt}

\newpage
\subsubsection[\texorpdfstring{$\lambda = 3/2$}{lambda=3/2}]{$\bm{\lambda = 3/2}$}
The number of points in the singular locus $\mathrm{Sing}~X_{3/2}$ is $12$. The singular points of $X_{3/2}$ are of form:
\begin{gather}\label{eq:singX32}
	\mathrm{Sing}~X_{3/2} = \left\{[\pm 1,\pm 1,0,0],~\dots \right\} ~.
\end{gather}
Similarly, we consider deviation from $\lambda = 3/2$ by considering $X_{3/2\pm\epsilon}$ for some sufficiently small $\epsilon > 0$. The histogram of the Euler density $e(J_{3/2\pm\epsilon})$ is shown in Figure~\ref{fig:c2Density15}.
\begin{figure*}[htb]
	\centering
	\includegraphics[width=0.7\textwidth]{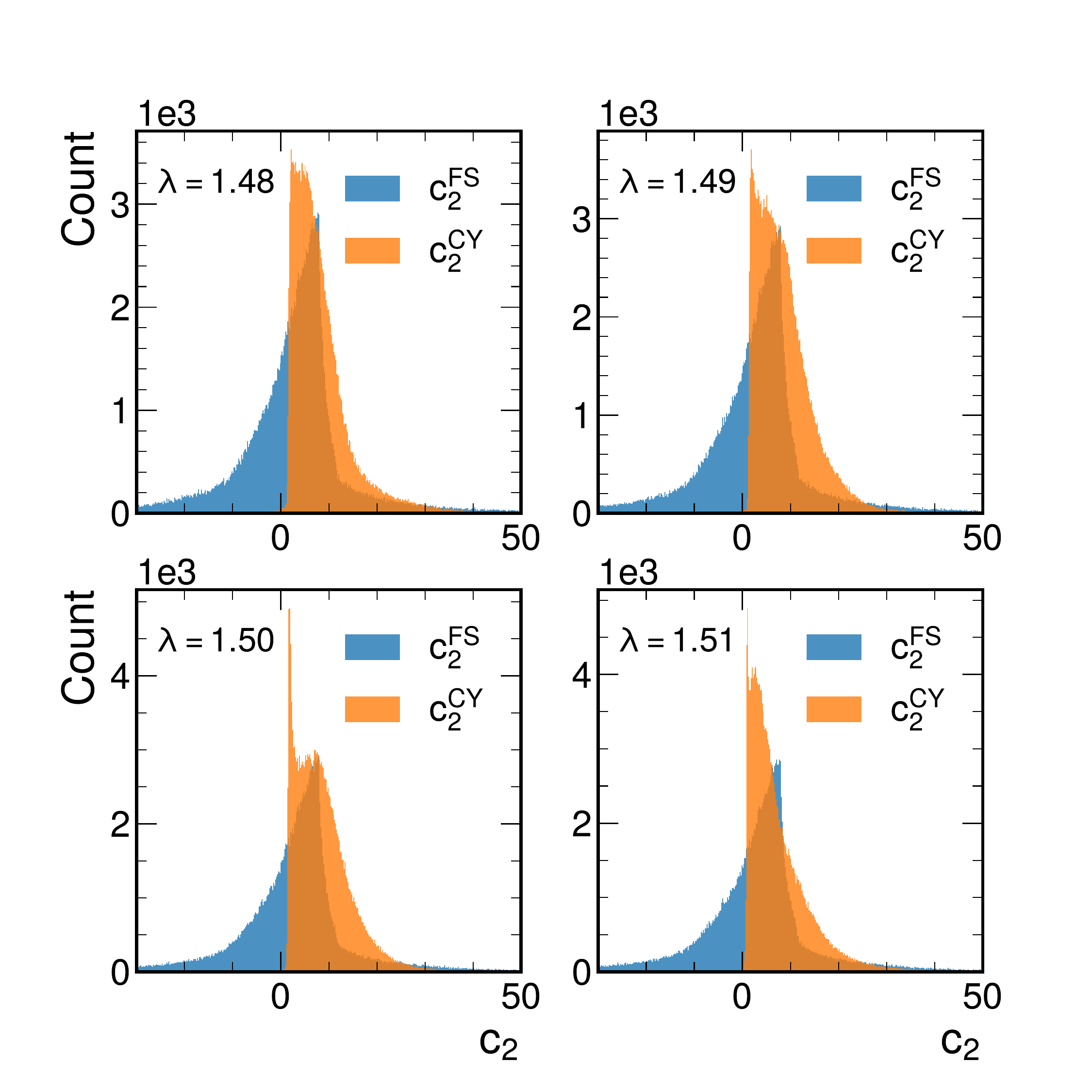}
	\caption{Distribution of the values of the Euler density $e(J_{3/2\pm\epsilon})$ using both $J_{3/2\pm\epsilon}^\textsf{FS}$ and $J_{3/2\pm\epsilon}^\textsf{CY}$.}\label{fig:c2Density15}
\end{figure*}

For each $k$, the patch $X_{3/2}\cap D_k$ contains total of $3\times 2 = 6$ singular points.
The persistence barcode for $X_{1.48}^\delta \cap D_0$ is shown in Figure~\ref{fig:c2Density15}.

\begin{figure*}[htb]
	\centering
	\includegraphics[width=0.8\textwidth]{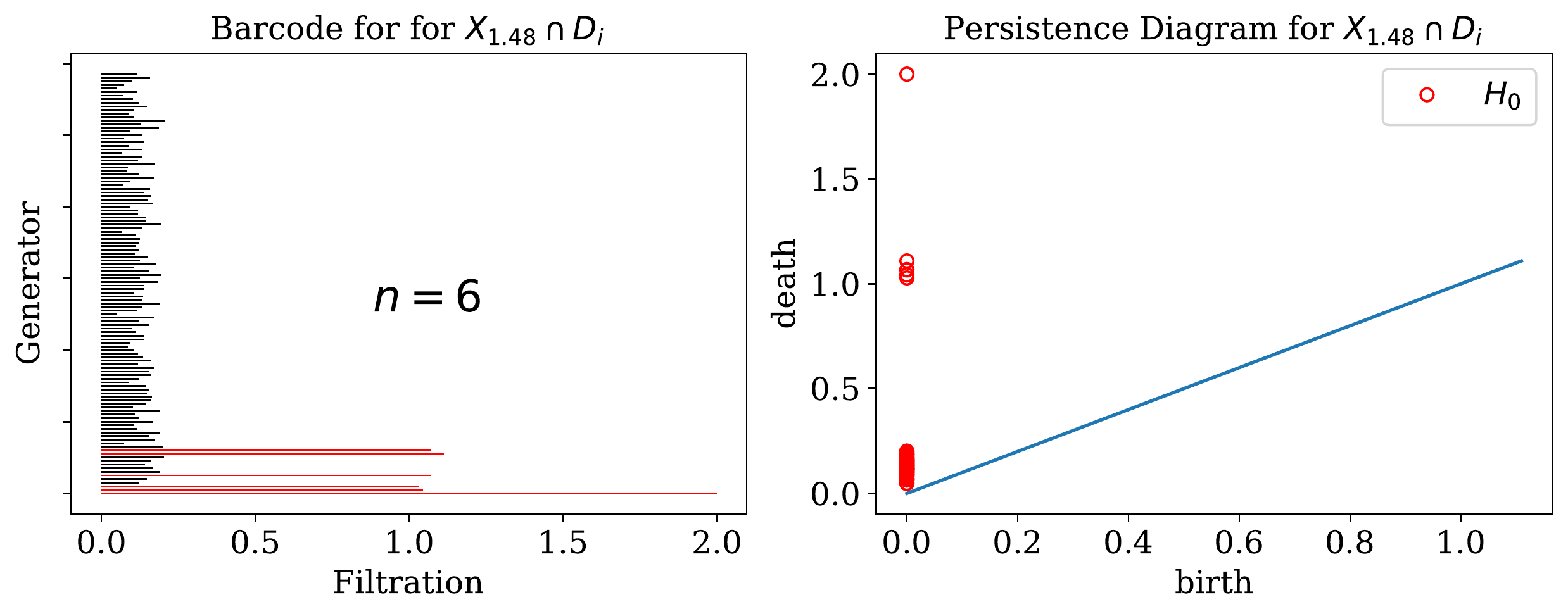}
  \vspace{3mm}
	\caption{Persistence barcode of $H_0^r(X_{1.48}^\delta \cap D_0)$ where $D_0=\{z_0\neq 0\}$. The number $n$ indicates the number of generators $\gamma$ with persistence $\mathrm{pers}(\gamma) > 0.7$ (colored as red).}\label{fig:persBarcode15}.
\end{figure*}
For each patch $D_i$ we observe $6$ cycles with large persistence, each corresponding some neighborhood of a point $p\in \mathrm{Sing}~X_{3/2}$. Typical points in each cycle are shown in the Table~\ref{table:persistencePoints15} and are consistent with the points~\eqref{eq:singX32}.

\setlength{\abovecaptionskip}{10pt}
\begin{table}[htb]
\centering
{\small
\begin{tabular}{||c l l||} 
 \hline
 Patch & Point & Closest point in $\mathrm{Sing}~X_{3/2}$\\ [0.5ex] 
 \hline\hline
 $D_0$ & $[ 1.:~0.04:-0.96:~0.15]$ & $[1:~0:-1:~0]$  \\
  & $[ 1.:-0.97:-0.:~0.06]$ & $[1:-1:~0:~0]$ \\
  & $[1.:~0.93:~0.11:~0.03]$ & $[1:~1:~0:~0]$ \\
  & $[ 1.:-0.06:-0.01:~0.96]$& $[1:~0:~0:~1]$ \\
  & $[ 1.:-0.02:~0.90:-0.14 ]$& $[1:~0:~1:~0]$ \\
 & $[ 1.:0.17:0.04:-1.]$& $[1:~0:~0:-1]$ \\[1ex] 
 \hline
\end{tabular}}
\caption{Typical points in each of the large persistence cycle of $H_0^r(X_{1.48}^\delta \cap D_0)$.}
\label{table:persistencePoints15}
\end{table}
\setlength{\abovecaptionskip}{-10pt}

\subsubsection[\texorpdfstring{$\lambda = 3$}{lambda=3}]{$\bm{\lambda = 3}$}
The number of points in the singular locus $\mathrm{Sing}~X_{3}$ is $4$. The singular points of $X_3$ are of form:
\begin{gather}\label{eq:singX3}
	\mathrm{Sing}~X_{3} = \left\{[1:0:0:0],~[0:1:0:0],~\dots\right\} ~.
\end{gather}
Similarly, we consider deviation from $\lambda = 3$ by considering $X_{3}$ for sufficiently small $\epsilon>0$. The histogram of the Euler density $e(J_{3\pm\epsilon})$ is shown in Figure~\ref{fig:c2Density3}.
\begin{figure*}[htb]
	\centering
	\includegraphics[width=0.7\textwidth]{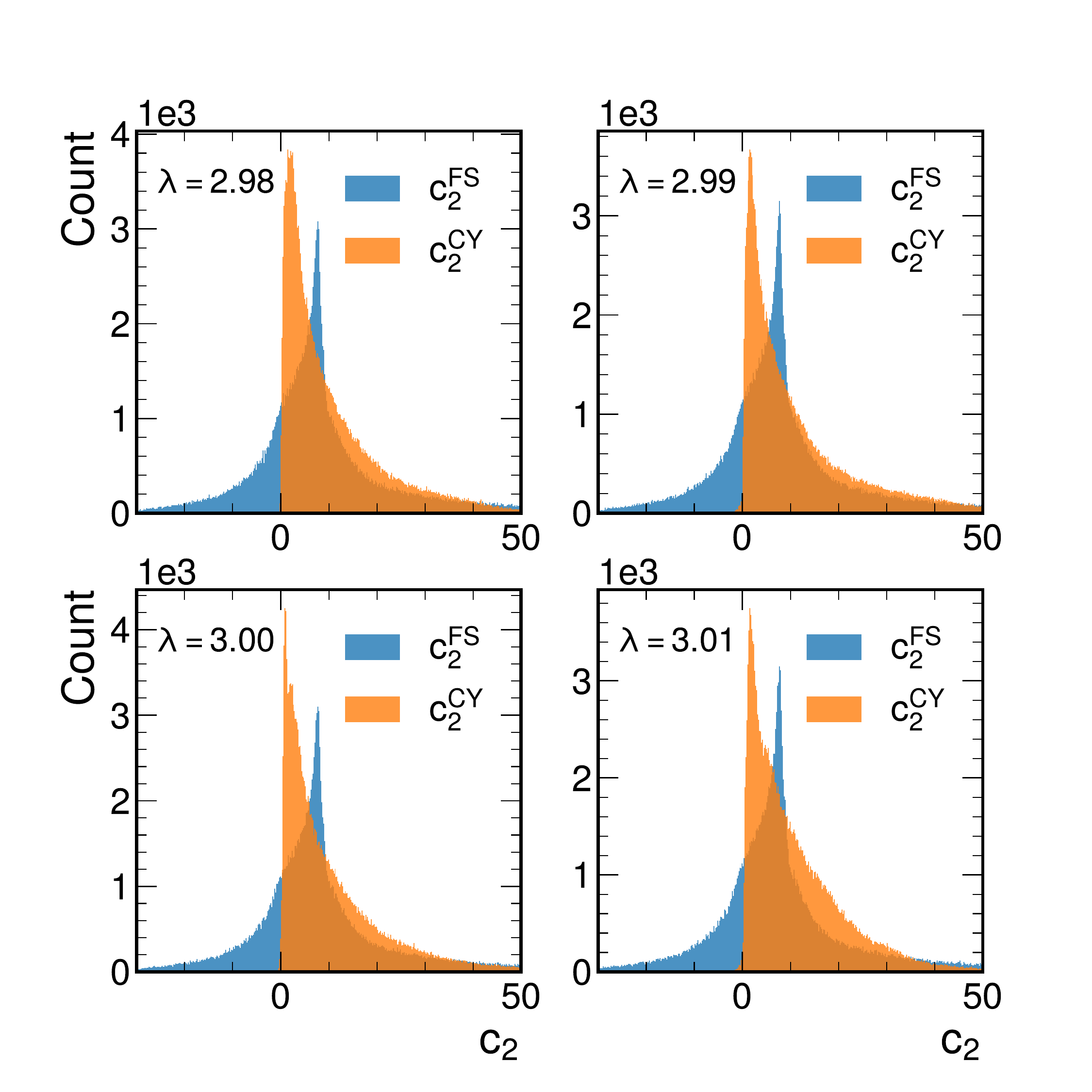}
	\caption{Distribution of the values of the Euler density $e(J_{3\pm\epsilon})$ using both $J_{3\pm\epsilon}^\textsf{FS}$ and $J_{3\pm\epsilon}^\textsf{CY}$.}\label{fig:c2Density3}
\end{figure*}
For each $k$, the patch $X_{3}\cap D_k$ contains only a single singular point, thus, instead we consider the persistence barcode of $X_{2.98}^\delta$, shown on the Figure~\ref{fig:persBarcode3}.
\begin{figure*}[htb]
	\centering
	\includegraphics[width=0.8\textwidth]{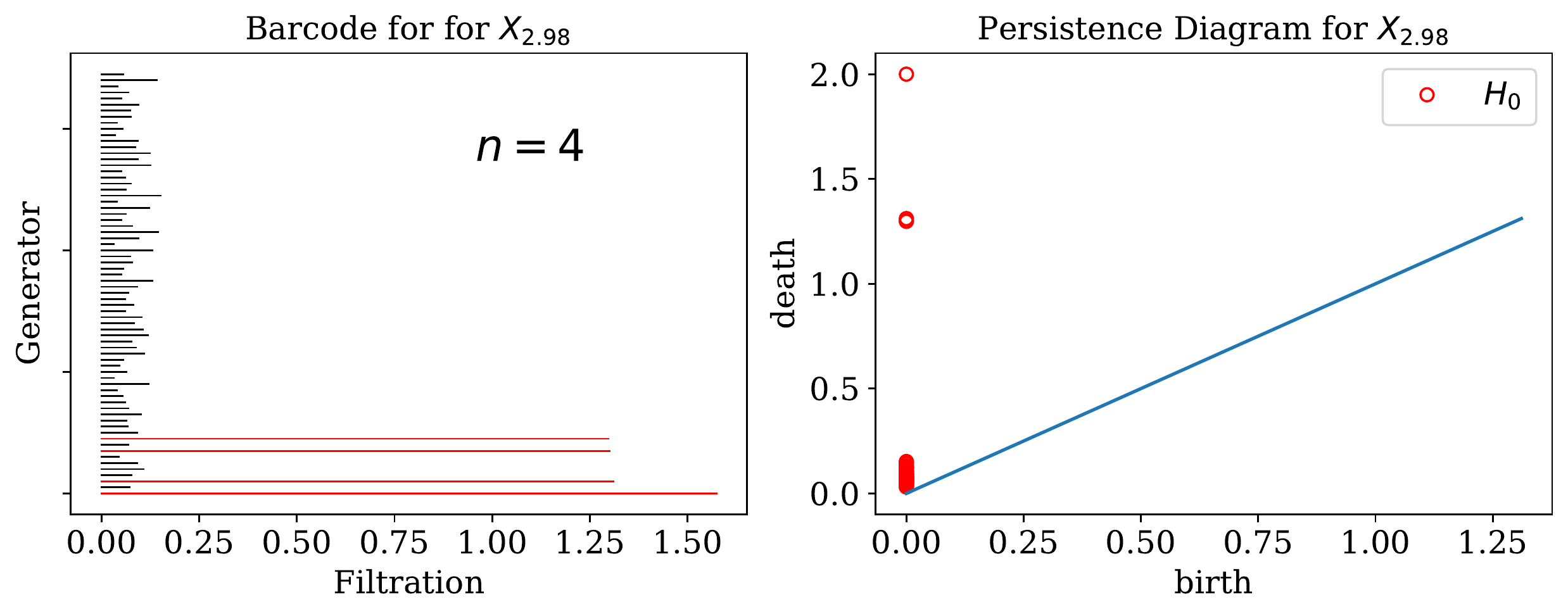}
  \vspace{3mm}
	\caption{Persistence barcode of $H_0^r(X_{2.98}^\delta)$. The number $n$ indicates the number of generators $\gamma$ with persistence $\mathrm{pers}(\gamma) > 0.7$ (colored as red).}\label{fig:persBarcode3}
\end{figure*}

\setlength{\abovecaptionskip}{10pt}
\begin{table}[htb]
\centering
{\small
\begin{tabular}{||l l||} 
 \hline
 Point & Closest point in $\mathrm{Sing}~X_{3}$\\ [0.5ex] 
 \hline\hline
 $[0.04:-0.03:~1.~~~:-0.07]$ & $[0:~0:~1:~0]$  \\
  $[0.03:-0.10:~0.08:~1.~~~~]$ & $[0:~0:~0:~1]$\\
  $[1.~~~:~~0.06:~0.01:~0.0~~~]$ & $[1:~0:~0:~0]$\\
  $[0.00:~1.~~~~:~0.04:-0.05]$ & $[0:~1:~0:~0]$ \\[1ex] 
 \hline
\end{tabular}}
\caption{Typical points in each of the large ($\mathrm{pers}(\gamma) > 0.7$) persistence cycle of $H_0^r(X_{2.98}^\delta)$.}
\label{table:persistencePoints3}
\end{table}
\setlength{\abovecaptionskip}{-10pt}

The typical points in each large persistence generator of $H_0^r(X_{2.98})$ are shown in Table~\ref{table:persistencePoints3}, which is consistent with the singular locus~\eqref{eq:singX3}.
\begin{figure*}[htb]
    \centering
    \includegraphics[width=0.8\textwidth]{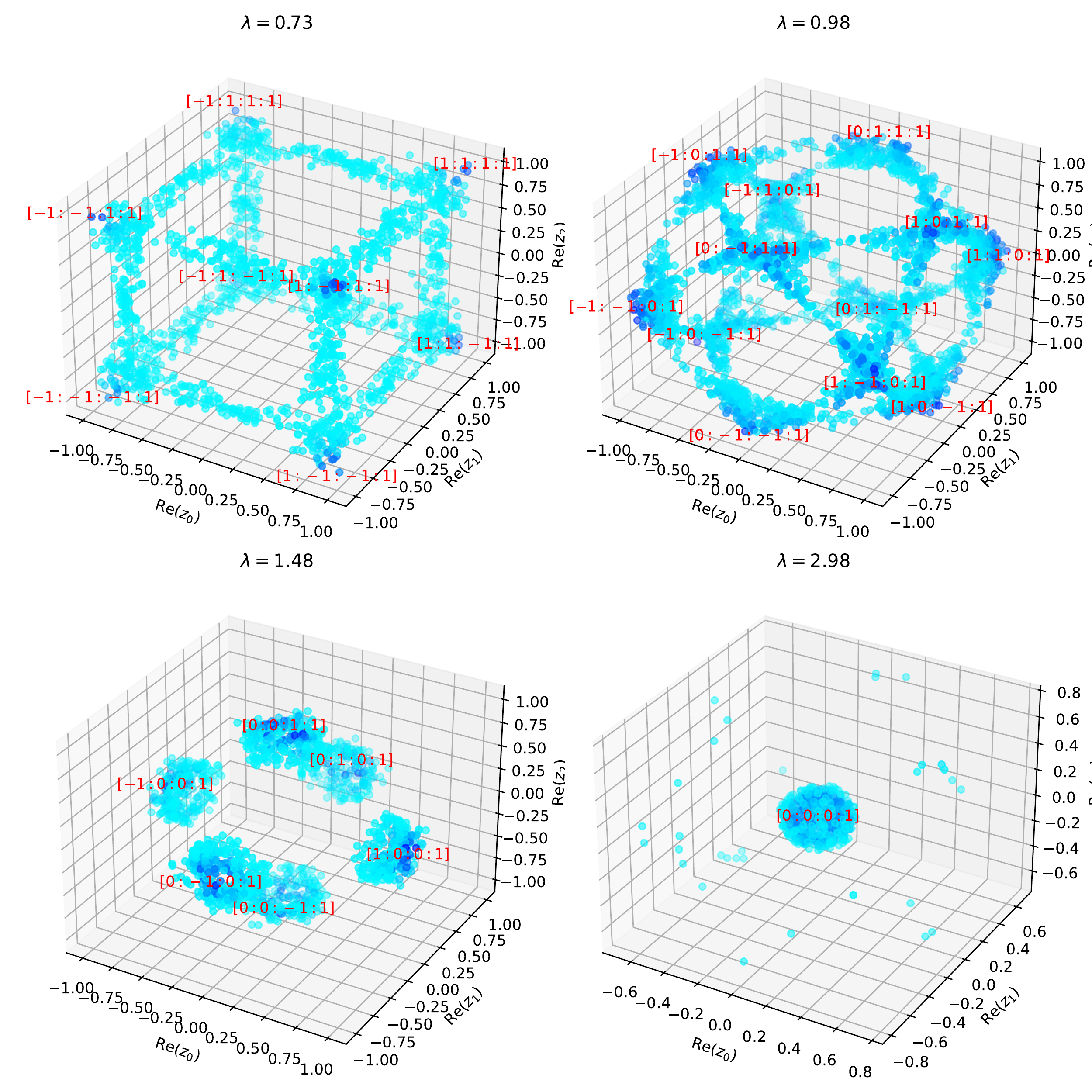}
    \vspace*{3mm}
    \caption{Projection of $X_{\lambda^\sharp+\epsilon,t}\cap D_3$ to $(\mathrm{Re}(z_0),~\mathrm{Re}(z_1),~\mathrm{Re}(z_2))$ for some large $t>0$. The shading is induced by the Euler density: $e(J_{\lambda^\sharp+\epsilon})$. The red labels indicate the singular points of $X_{\lambda^\sharp}\cap D_{3}$.}\label{fig:clusters}.
\end{figure*}
\subsubsection{Asymptotics near large Euler density regions}\label{sec:convergence}
In the previous sections we have considered the large curvature regions in the vicinity of the singularities for the various $X_{\l^\sharp}$ in the Cefal\'{u} pencil. Taking points in  $X_{\l^\sharp\pm\epsilon}$ with a given curvature density $e(J)$ larger than a given parameter $t$, we managed to identify a clustering of points consistent with the singularity distribution for $X_{\l^\sharp}$. In addition to the Euler density $c_2(J_\lambda)$, we may also consider $c_1(J_\lambda)^2$.
 The top forms $c_2(J_\lambda)$ and $c_1(J_\lambda)^2$ induce filtrations of $X_\lambda$ defined by:
\begin{equation}
\begin{aligned}
	X_{\lambda,t_1}^1 &:= \left\{p\in X_{\lambda}~\vert~\left|c_1(J_\lambda)^2(p)\right| < t_1\right\} ~,\\%\notag
	X_{\lambda,t_2}^2 &:= \left\{p\in X_{\lambda}~\vert~\left|c_2(J_\lambda)(p)\right| < t_2\right\} ~.
\end{aligned}
\end{equation}
To better visualize the occurrence of large curvature values, 
we pick a patch $D_3 = \{z_3 \neq 0\}$ and for all $\lambda^\sharp$ we construct the scatter plots of $X_{\lambda, t}\cap D_3$, by projecting each point $p= [z_0:z_1:z_2:1]\in X_{\lambda, t}$ to $(\mathrm{Re}(z_0),\mathrm{Re}(z_1), \mathrm{Re}(z_2))$. This allows us to visualize the clustering behavior observed in the persistence diagrams shown on Figures~\ref{fig:persBarcode073},~\ref{fig:persBarcode1},~\ref{fig:persBarcode15} and~\ref{fig:persBarcode3}. We pick values of $\lambda$ to be $\lambda^\sharp+\epsilon$ for some small $|\epsilon|>0$ and pick sufficiently large threshold $t>0$ to highlight the clustering behavior. The results of this are shown on Figure~\ref{fig:clusters}. We also include similar plots of the curvature on for the real parts of $X_\lambda$ in the Figures~\ref{fig:realXcurvatures} and~\ref{fig:realXcurvatures2}.

\begin{figure*}[htb]
    \centering
    \begin{subfigure}[b]{0.5\textwidth}
        \centering
        \includegraphics[width=0.8\textwidth]{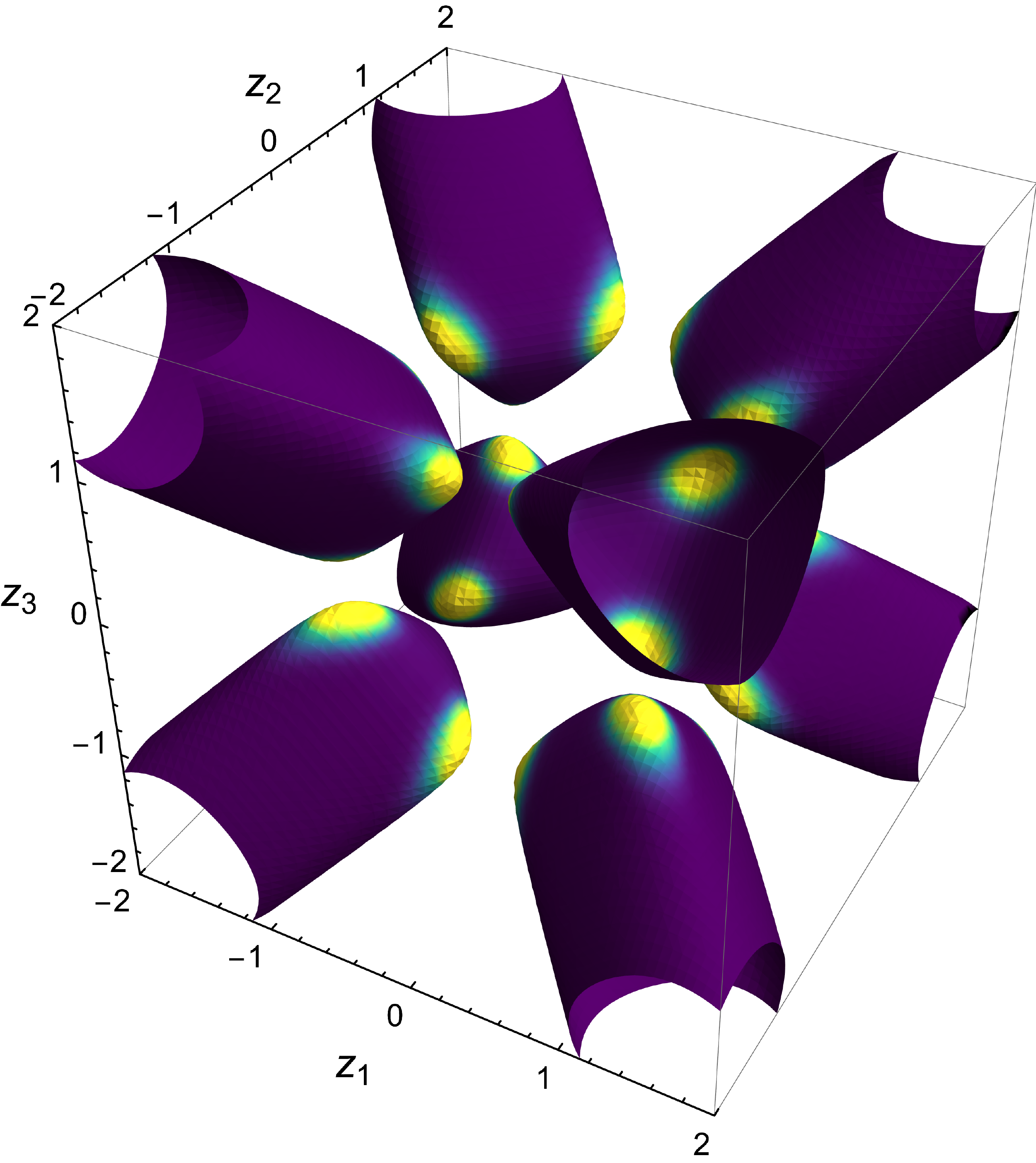}
        \caption{$\lambda = 0.98$, $J_\mathrm{CY}$}
    \end{subfigure}%
    \begin{subfigure}[b]{0.5\textwidth}
        \centering
        \includegraphics[width=0.8\textwidth]{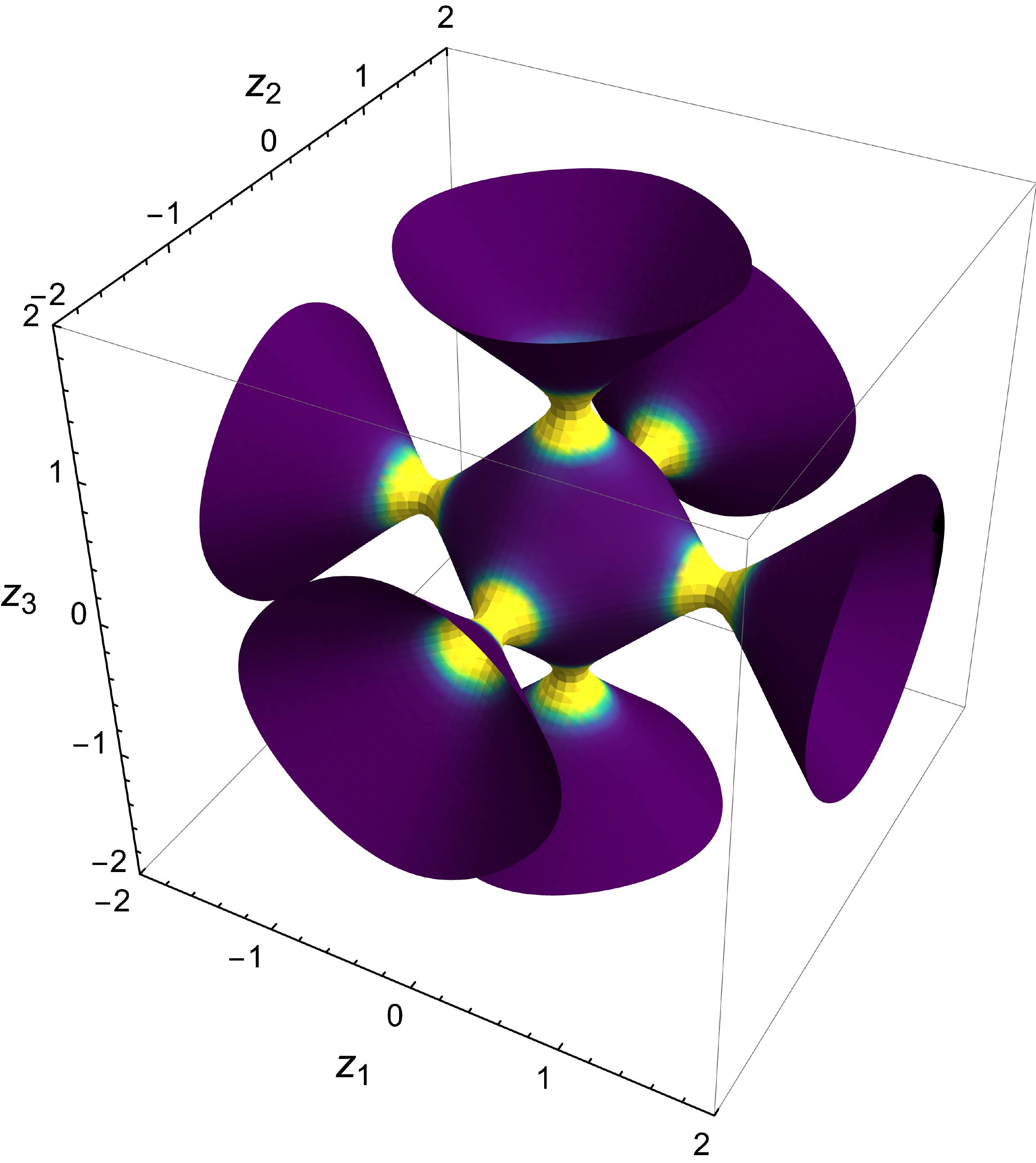}
        \caption{$\lambda = 1.48$, $J_\mathrm{CY}$}
    \end{subfigure}
    \\
    \begin{subfigure}[b]{0.5\textwidth}
        \centering
        \includegraphics[width=0.8\textwidth]{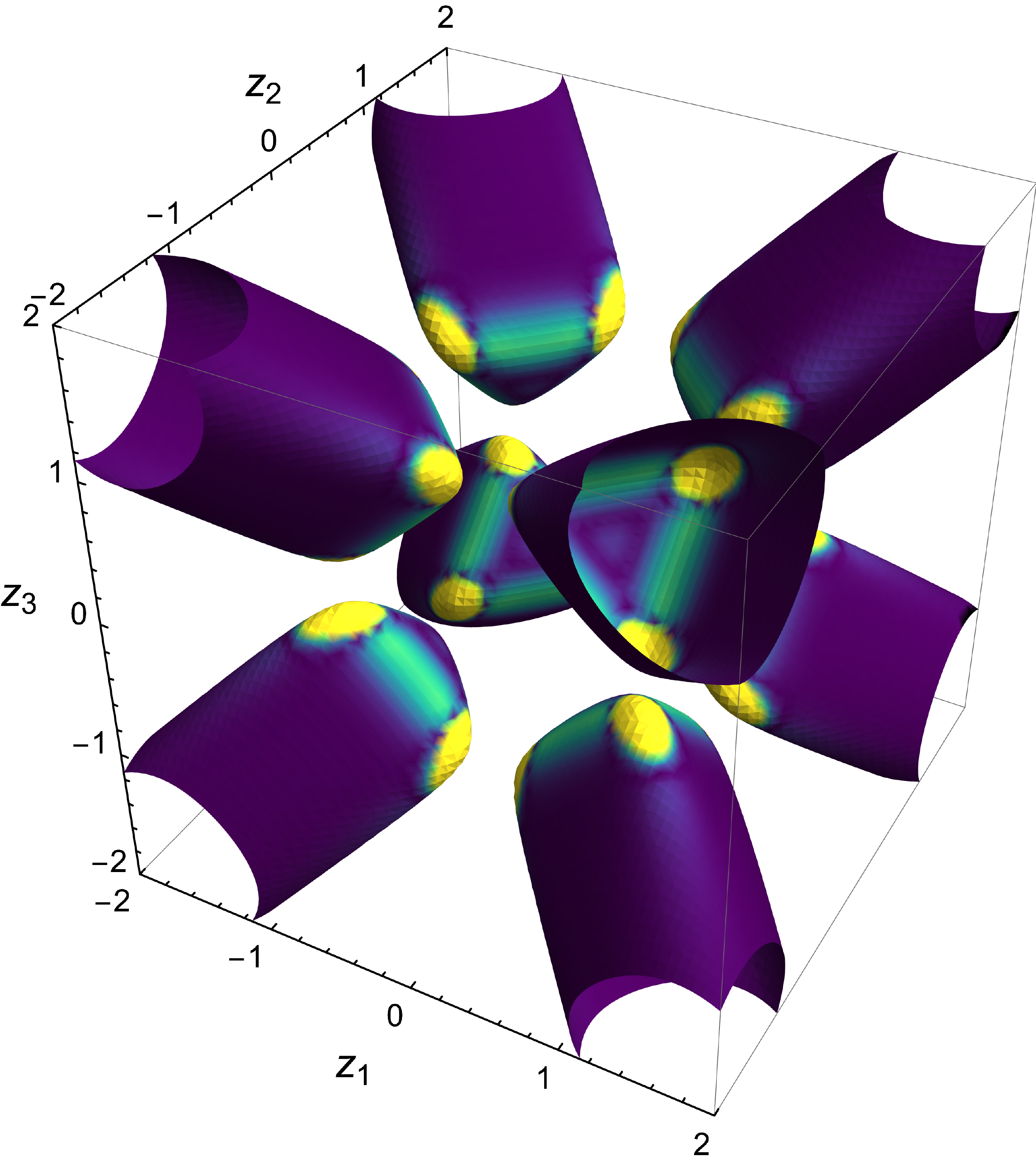}
        \caption{$\lambda = 0.98$, $J_\mathrm{FS}$}
    \end{subfigure}%
    \begin{subfigure}[b]{0.5\textwidth}
        \centering
        \includegraphics[width=0.8\textwidth]{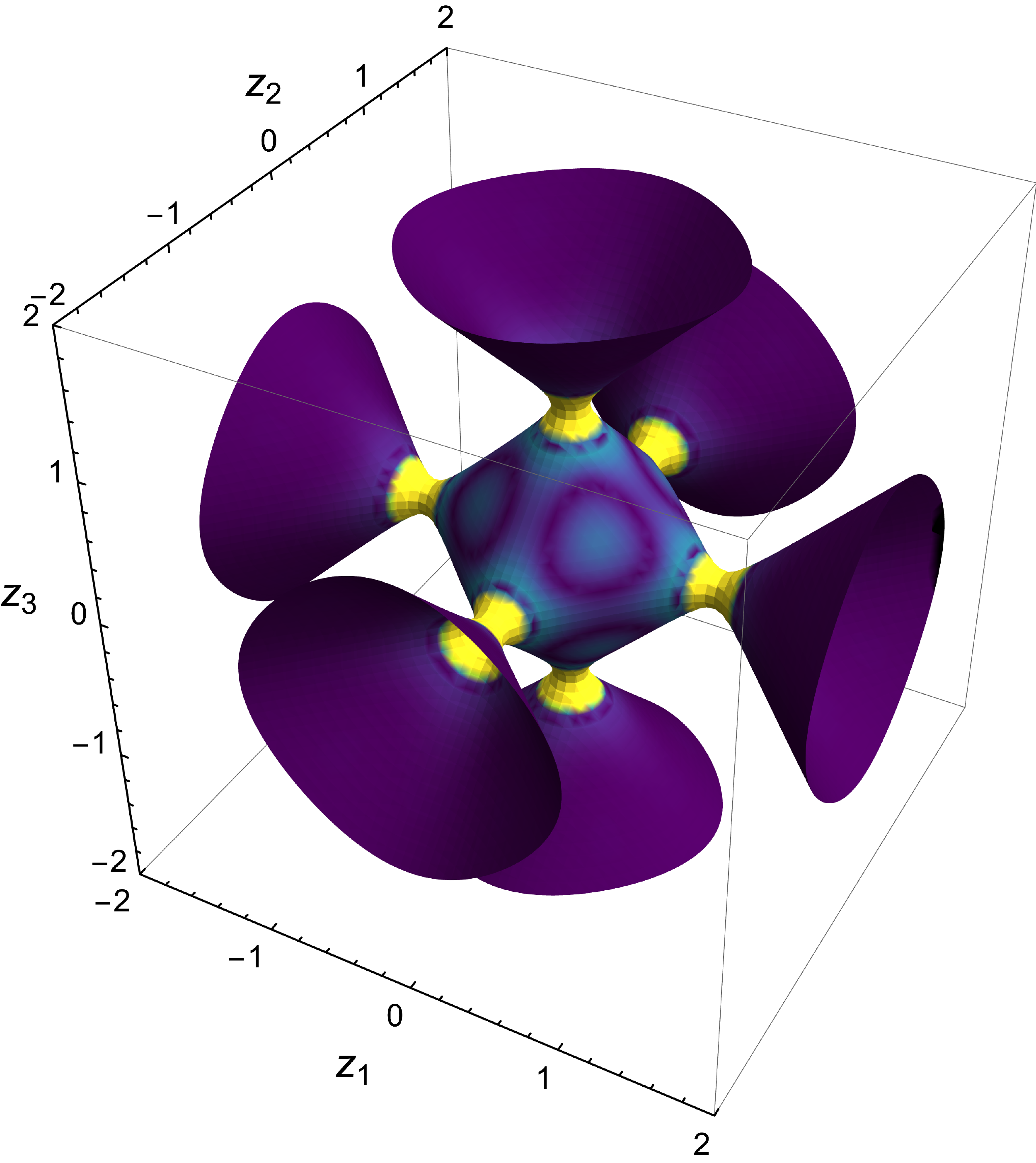}
        \caption{$\lambda = 1.48$, $J_\mathrm{FS}$}
    \end{subfigure}
    \vspace{10mm}
    \caption{Plot of $X_{\lambda^\sharp}\cap D_3 \cap \mathbb{R}^3$ with shading induced by the Euler density. Brighter colors denote larger value of $e(J)$. In order to correctly take into account the identification $\lambda p \sim p$ for $\l\in\IC^\times$, we have used spectral networks to compute $\phi$ (See~\ref{sec:spectralNNs}).}\label{fig:realXcurvatures}

\end{figure*}

\begin{figure*}[htb]
    \centering
    \begin{subfigure}[b]{0.5\textwidth}
        \centering
        \includegraphics[width=0.8\textwidth]{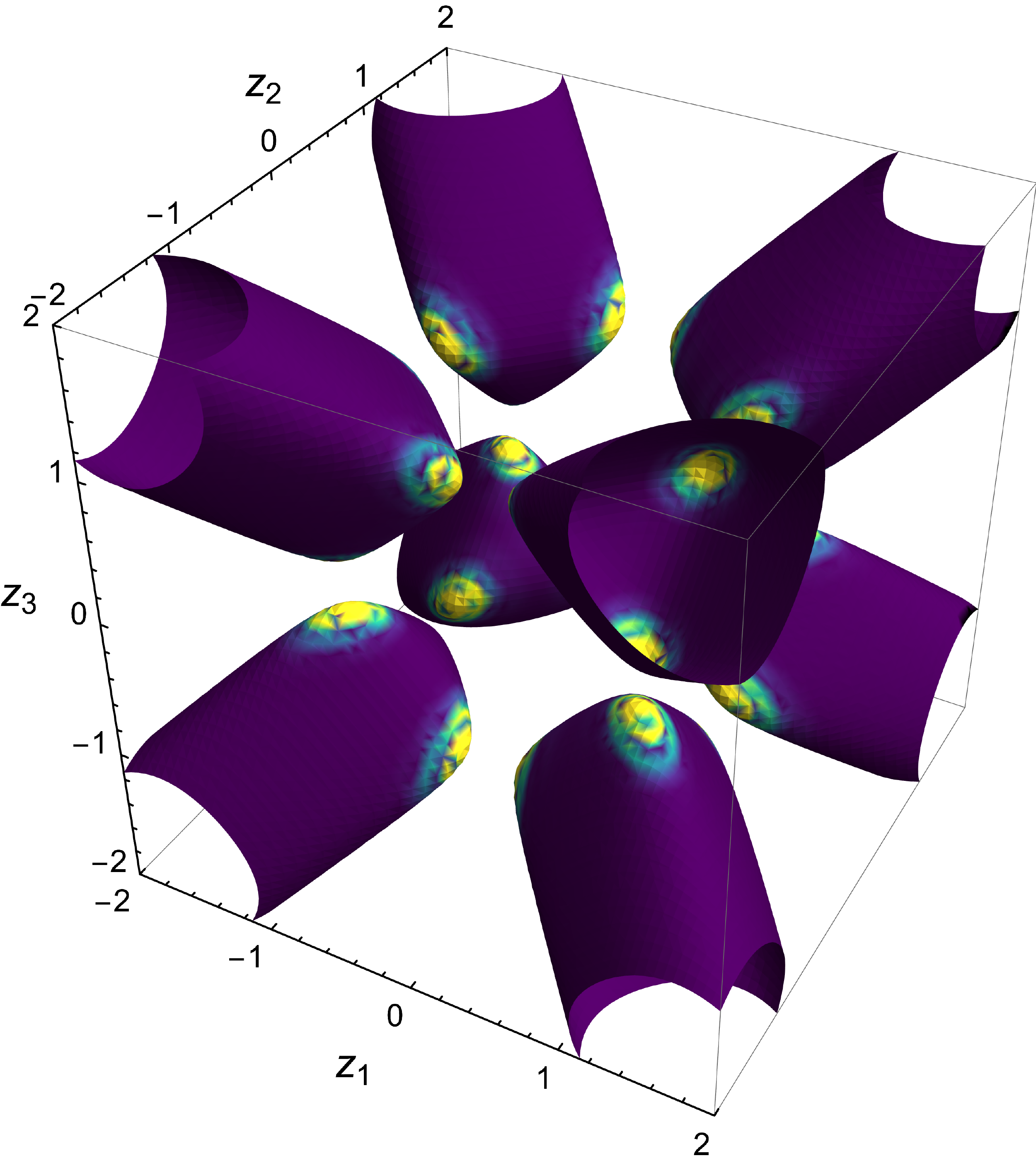}
        \caption{$\lambda = 0.98$, $J_\mathrm{CY}$}
    \end{subfigure}%
    \begin{subfigure}[b]{0.5\textwidth}
        \centering
        \includegraphics[width=0.8\textwidth]{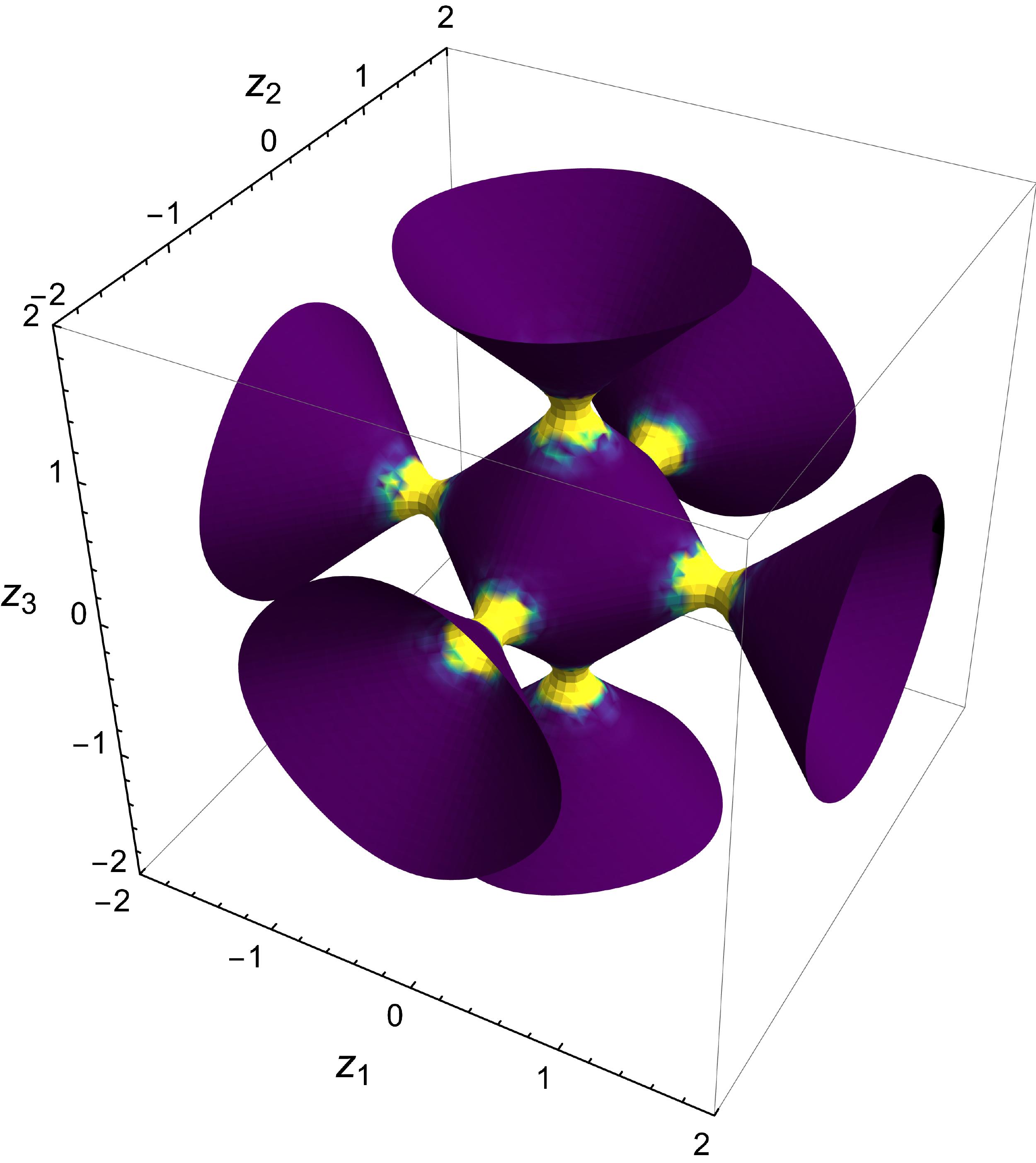}
        \caption{$\lambda = 1.48$, $J_\mathrm{CY}$}
    \end{subfigure}
    \\
    \begin{subfigure}[b]{0.5\textwidth}
        \centering
        \includegraphics[width=0.8\textwidth]{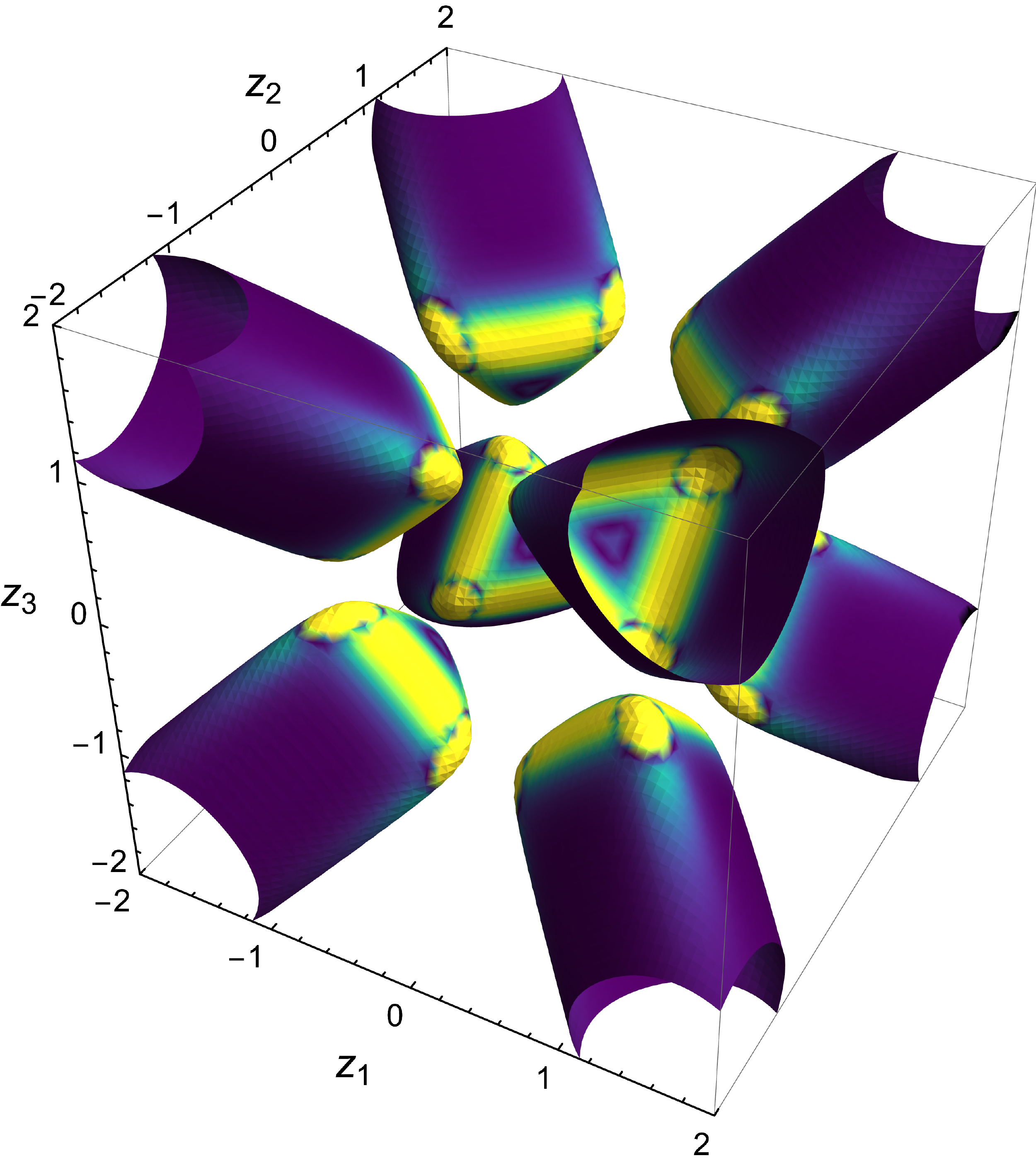}
        \caption{$\lambda = 0.98$, $J_\mathrm{FS}$}
    \end{subfigure}%
    \begin{subfigure}[b]{0.5\textwidth}
        \centering
        \includegraphics[width=0.8\textwidth]{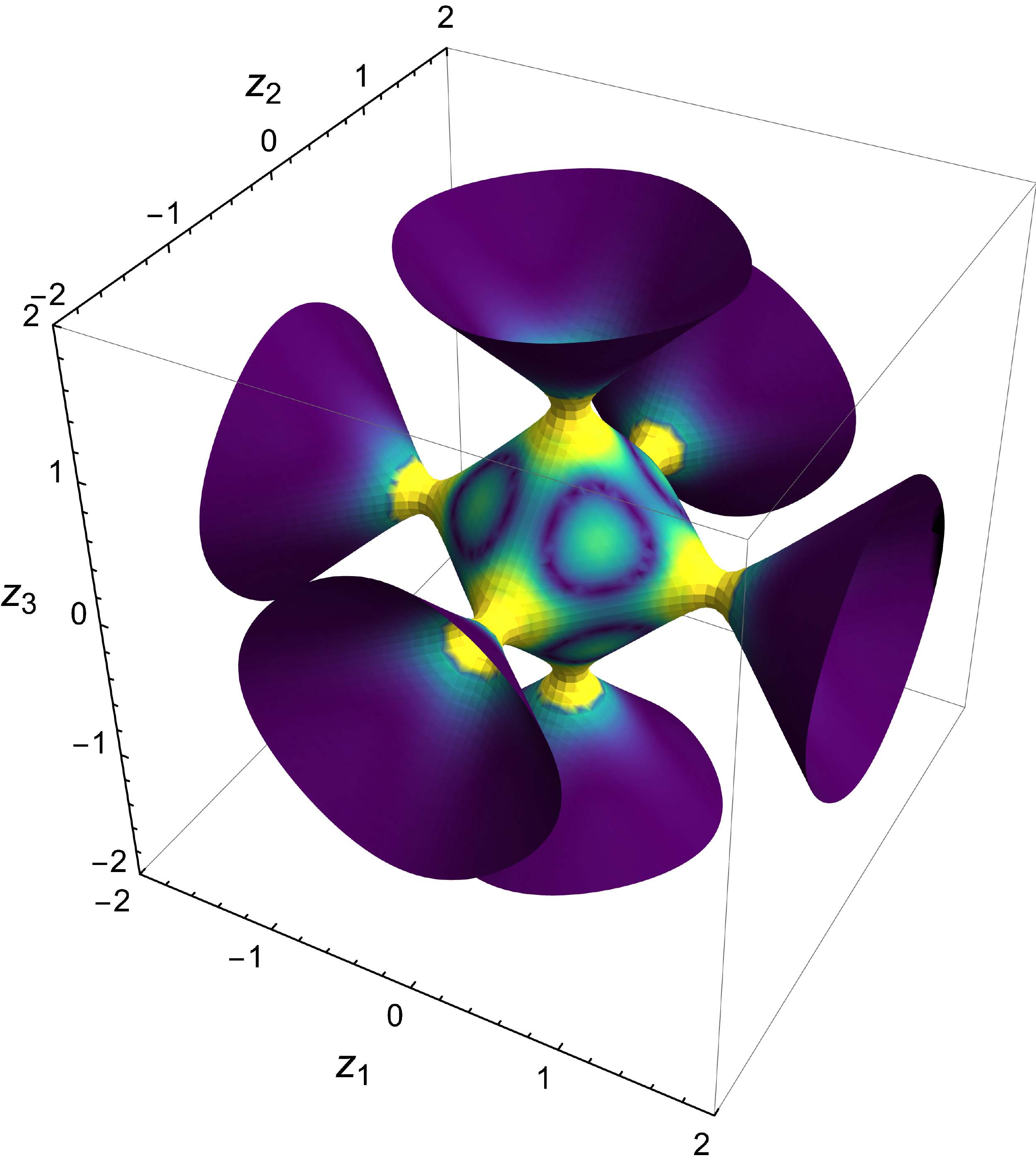}
        \caption{$\lambda = 1.48$, $J_\mathrm{FS}$}
    \end{subfigure}
    \vspace{10mm}
    \caption{Plot of $X_{\lambda^\sharp}\cap D_3 \cap \mathbb{R}^3$ with shading induced by the $c_1^2$. Brighter colors denote larger value of $c_1^2(J)$. In order to correctly take into account the identification $\lambda p \sim p$ for $\l\in\IC^\times$, we have used spectral networks to compute $\phi$ (See~\ref{sec:spectralNNs}).}\label{fig:realXcurvatures2}

\end{figure*}

Using the definitions for $X_{\lambda,t_1}^1$ and $X_{\lambda,t_2}^2$, we may study the asymptotic behavior of $F_1$ and $F_2$ defined by:
\begin{gather}
	F_1(t_1) := \int_{X_{\lambda,t_1}^1}	c_1(J_\lambda)^2\quad\text{and}\quad F_2(t_2) := \int_{X_{\lambda, t_2}^2}c_2(J_\lambda) ~.
\end{gather}
The numerical result for $F_2$ is shown on Figure~\ref{fig:f2Plots}.
\begin{figure*}[htb]
	\centering
	\includegraphics[width=0.7\textwidth]{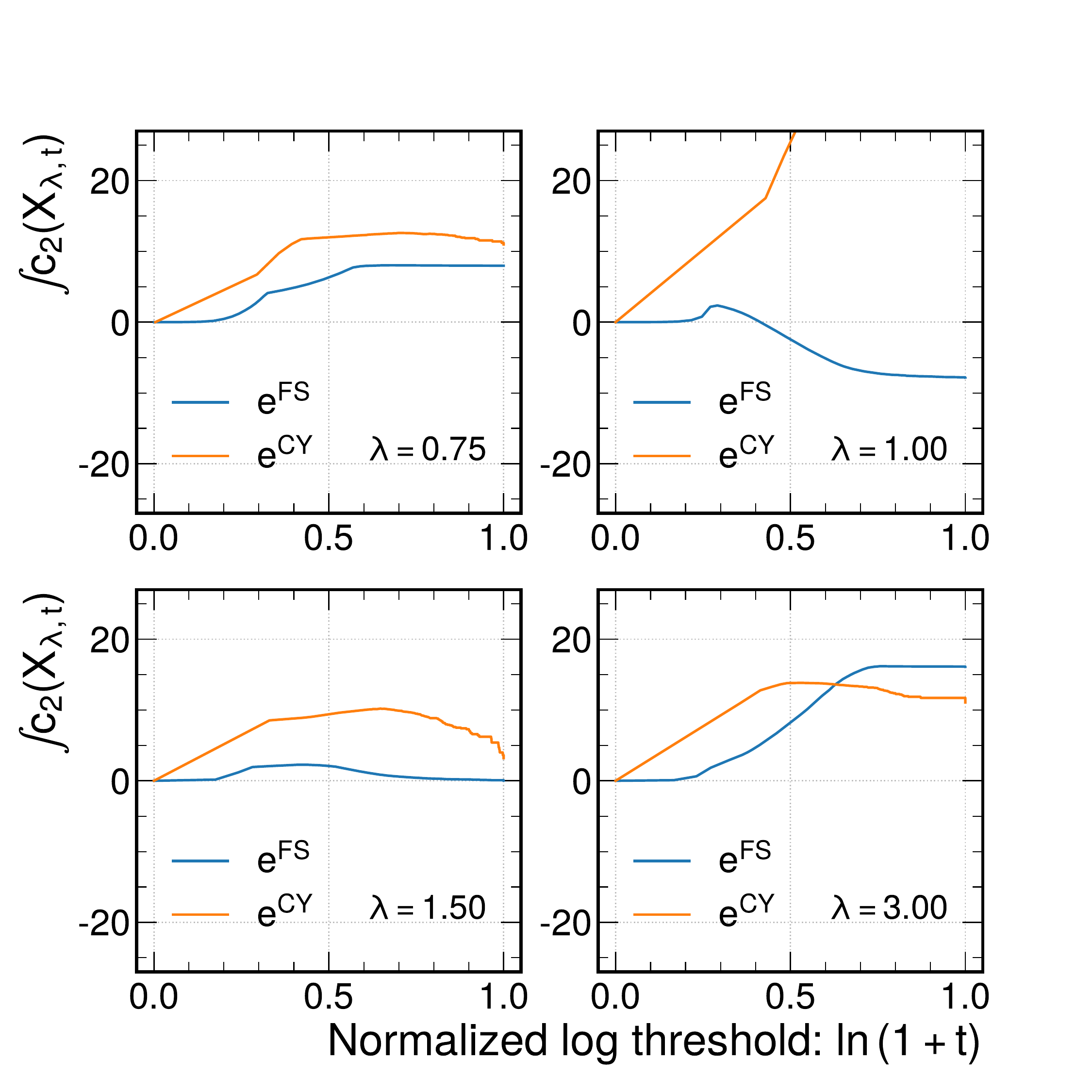}\\[15pt]
	\caption{Asymptotic behavior of $F_2$ for both Fubini--Study and Calabi--Yau metrics.}\label{fig:f2Plots}.
\end{figure*}
As can be observed from Figure~\ref{fig:f2Plots}, the $F_2^\textsf{FS}$, corresponding to the Fubini--Study metric, converges to the values shown in Table~\ref{table:c1c1c2FS}, whereas $F_2^\textsf{CY}$ seems to converge in all cases with the exception of $\lambda=1$. Comparing to the learning curves in Figure~\ref{fig:cefalu_sigma_loss}, we see that $\lambda=1$ exhibits the poorest training (\textit{i.e.}, highest sigma loss). We must also highlight the case of $\lambda=3$ as it gives the best results in terms of convergence and also yields the smallest loss among the singular models, this can be due to the fact that we have the smallest number of high curvature regions for $\lambda=3$. 

Similar behavior can be observed with $F_1$, for which the result of numerical computation is shown on Figure~\ref{fig:f1Plots}.
\begin{figure*}[htb]
	\centering
	\includegraphics[width=0.7\textwidth]{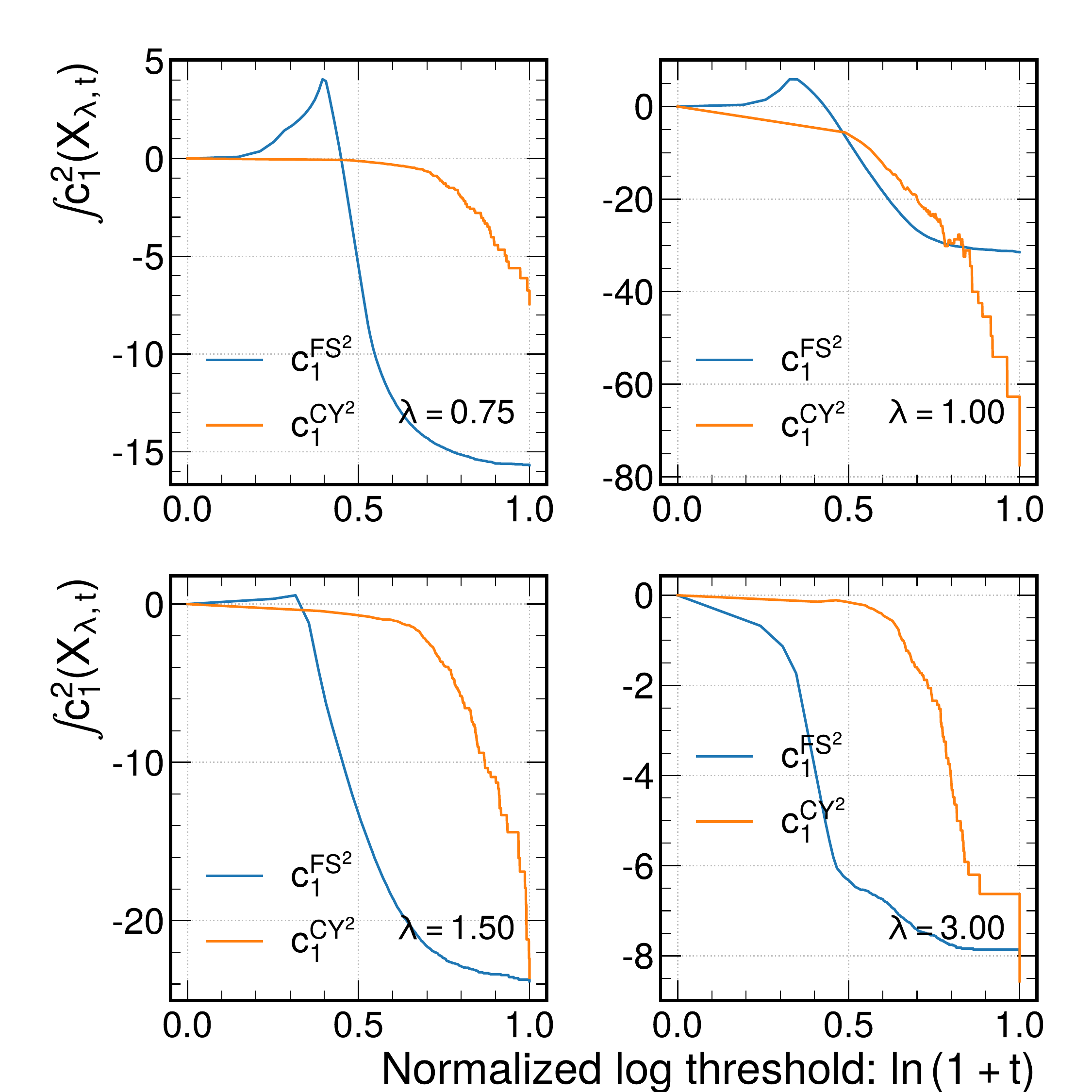}\\[15pt]
    \vspace*{4mm}
	\caption{Asymptotic behavior of $F_1$ for both Fubini--Study and Calabi--Yau metrics.}\label{fig:f1Plots}.
\end{figure*}
As it can be seen in that  figure, the quantity $F_1$ computed using the machine learned metric
becomes unstable and diverges once normalized $\log(1+t)$ is greater than $0.5$. Observing the histogram on Figure~\ref{fig:c1c1_hist} corresponding to $c_1^2$ shows a similar disagreement: the Fubini--Study and Calabi--Yau metric produce different results.

As a cautionary note, we observe that local data derived from the metric, such as the extent to which the Monge--Amp\`ere equation is satisfied pointwise, does not necessarily indicate whether the metric is able to reliably recover global, topological properties of $X$. Therefore, such local quantities may be unable to provide a complete diagnostic of the phenomenological suitability of the metric. 

\begin{figure*}[htb]
	\centering
	\includegraphics[width=0.9\textwidth]{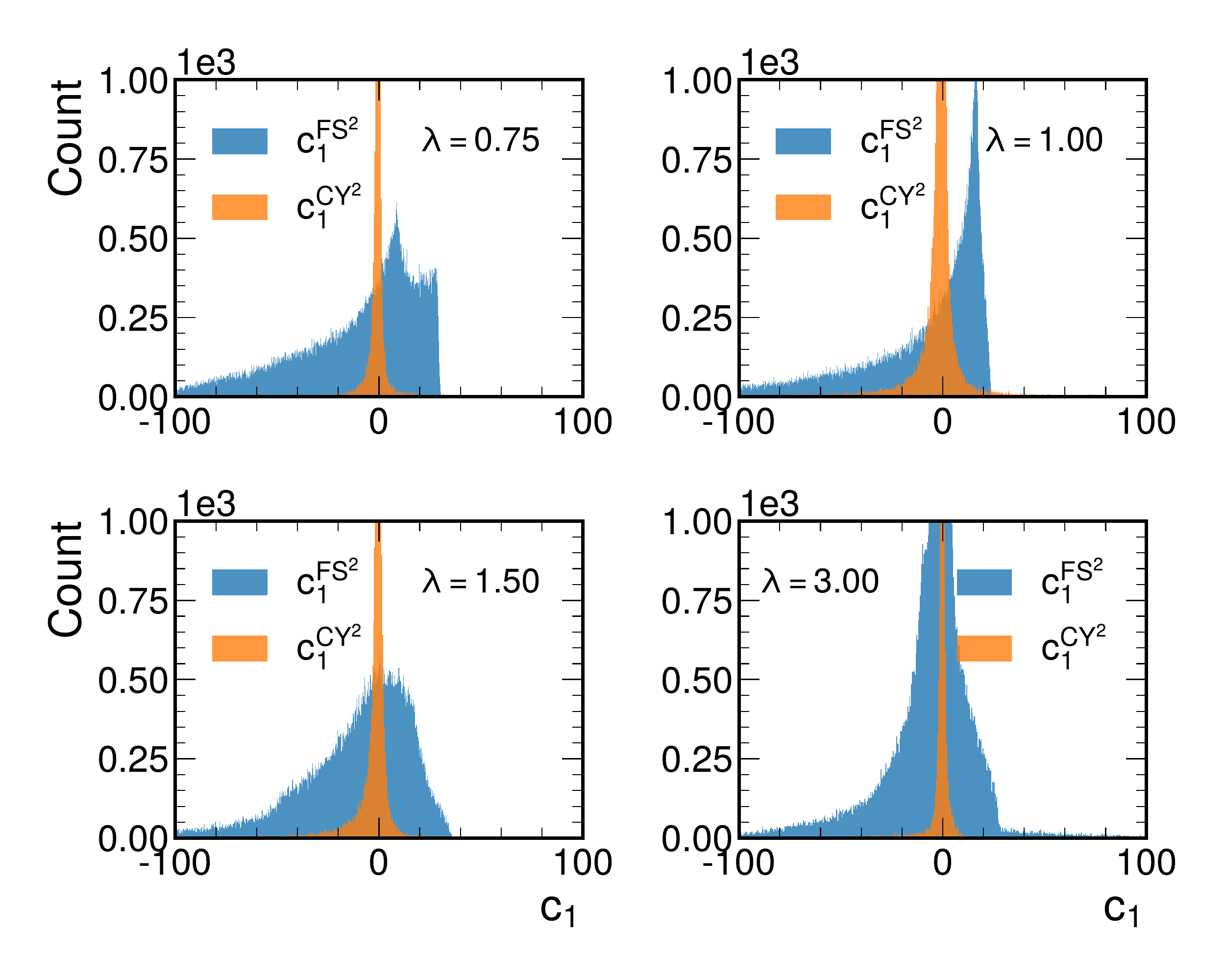}
	\caption{Distribution of the values of $c_1(J_\lambda)^2$ using both $J_\lambda^\textsf{FS}$ and $J_\lambda^\textsf{CY}$.}\label{fig:c1c1_hist}.
\end{figure*}

\comment{

\subsection{Dwork--Cefal\'{u} deformations}
In the Cefal\'{u} pencil, all singularities are isolated and of $A_1$ type. The simplest extension of this deformation which also includes singularities of dimension $>0$ is the $2$-parameter family of projective varieties given by:
\begin{gather}\label{eq:dworkCefaluFamily}
	X_{\lambda,\epsilon}	 = \left\{\sum_{i=0}^3z_i^4 - \frac{\lambda}{3}\left[\sum_{i=0}^3 z_i^2\right]^2 - \epsilon \prod_{i=0}^3 z_i = 0\right\}= \left\{p_{\lambda,\epsilon}(z) = 0\right\}\subseteq \mathbb{P}_\IC^3
\end{gather}
We may find the equation in $\lambda,\epsilon$ parametrizing the singular varieties by computing Gr\"{o}bner basis of ideal $I_{\lambda,\epsilon} = (p_{\lambda,\epsilon}(z) ) + J_{p_{\lambda,\epsilon}}$ where $J_{p_{\lambda,\epsilon}}$ is the Jacobian ideal of $p_{\lambda,\epsilon}$. The elimination is given by:
\begin{gather}
	G(I_{\lambda,\epsilon})	\cap \mathbb{C}[z_3] = \\\notag=(\epsilon^2 - 16)(9-9\lambda + 2\lambda^2)(9\epsilon^4 (\lambda - 3) - 768(3-4\lambda)^2(\lambda - 1) - 32\epsilon^2\lambda (27 - 36\lambda + 8\lambda^2))z_3^9 =\\\notag = f(\lambda,\epsilon)z_3^9
\end{gather}
Thus, the singular $X_{\lambda,\epsilon}$ are given by the roots of $f(\lambda,\epsilon)$. Consider $(\lambda,\epsilon) = (3/2,-4)$. In this case, we have:
\begin{gather}
	\int_{{X_{3/2,-4}}_s}	c_2(J_{3/2,-4}^\textsf{FS}) = 12,\qquad \int_{{X_{3/2,-4}}_s} c_1(J_{3/2,-4}^\textsf{FS})^2 = 36
\end{gather}
While this does violate Conjecture~\ref{conj1}, it satisfies $3\deg{c_2(J_{3/2,-4}^\textsf{FS})} = \deg{c_1(J_{3/2,-4}^\textsf{FS})^2}$. Moreover, looking at the values at each points, we observe a stronger condition:
\begin{gather}
	3c_2(J_{3/2,-4}^\textsf{FS}) = c_1(J_{3/2,-4}^\textsf{FS})^2
\end{gather}
As it is evident from the histogram shown on Figure~\ref{fig:dworkCefaluHist}.
\begin{figure*}[htb]
	\centering
	\includegraphics[width=0.6\textwidth]{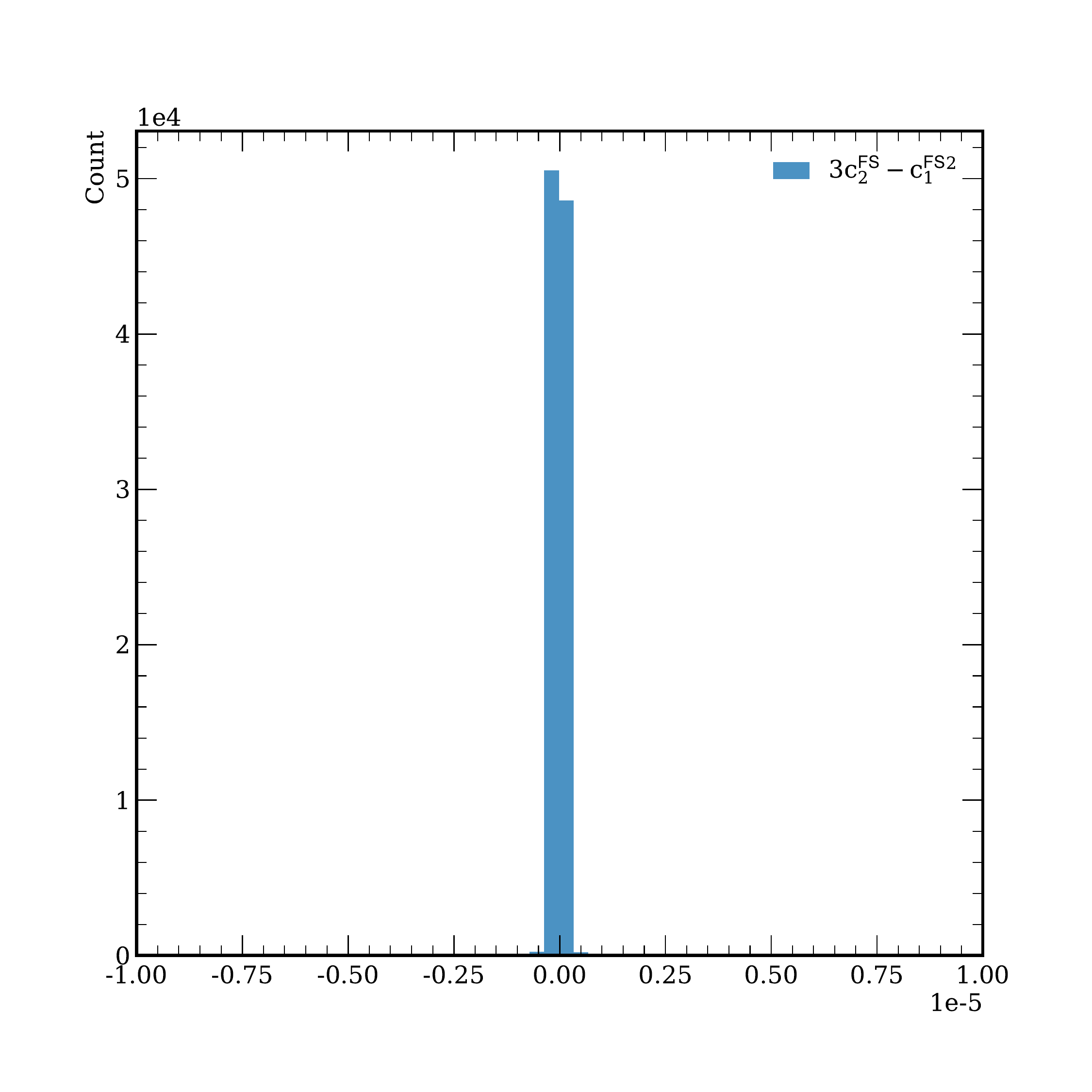}
	\caption{Distribution of the values of the difference $3c_2(J_{3/2,-4}^\textsf{FS}) - c_1(J_{3/2,-4}^\textsf{FS})^2$.}\label{fig:dworkCefaluHist}
\end{figure*}

Note that the prescription $\lambda = 0$ and $\epsilon = 4\psi$ of the 2-parameter deformation family~\eqref{eq:dworkCefaluFamily} defines the Dwork pencil of quartics:
\begin{gather}
	X_{\psi} := \left\{\sum_{i=0}^3z_i^4 - 4\psi \prod_{i=0}^3 z_i = 0\right\}\subseteq \mathbb{P}_\IC^4
\end{gather}
Note that $X_\psi$ is singular for $\psi^5 = 1$ and has total of $16$ singularities of $A_1$ type. Using the application of Prop.~\ref{prop:pfaffian}, we then obtain:
\begin{gather}\label{eq:dworkEuler}
	\int_{X_{\psi,s}}e(J) = 24 - 2\times 16 = -8,\quad\text{for}~\psi^5 = 1
\end{gather}
A similar asymptotic behavior to the one discussed in Section~\ref{sec:convergence} is shown in Figures~\ref{fig:f2Dwork} and~\ref{fig:f1Dwork}.

\begin{figure*}[htb]
	\centering
	\includegraphics[width=0.8\textwidth]{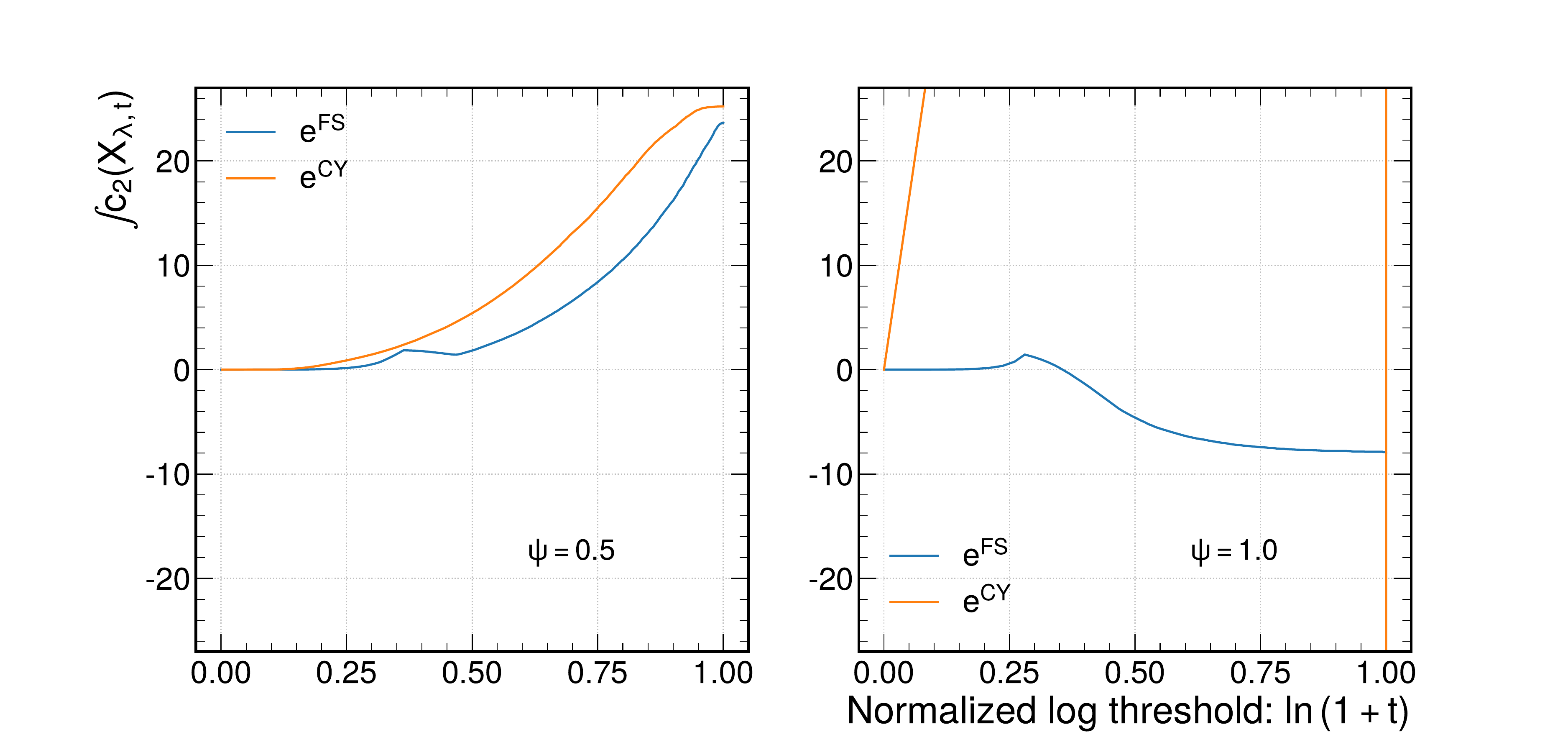}
    \vspace*{4mm}
	\caption{Asymptotic behavior of $F_2$ for the Dwork quartic.}\label{fig:f2Dwork}.
\end{figure*}

\begin{figure*}[htb]
	\centering
	\includegraphics[width=0.7\textwidth]{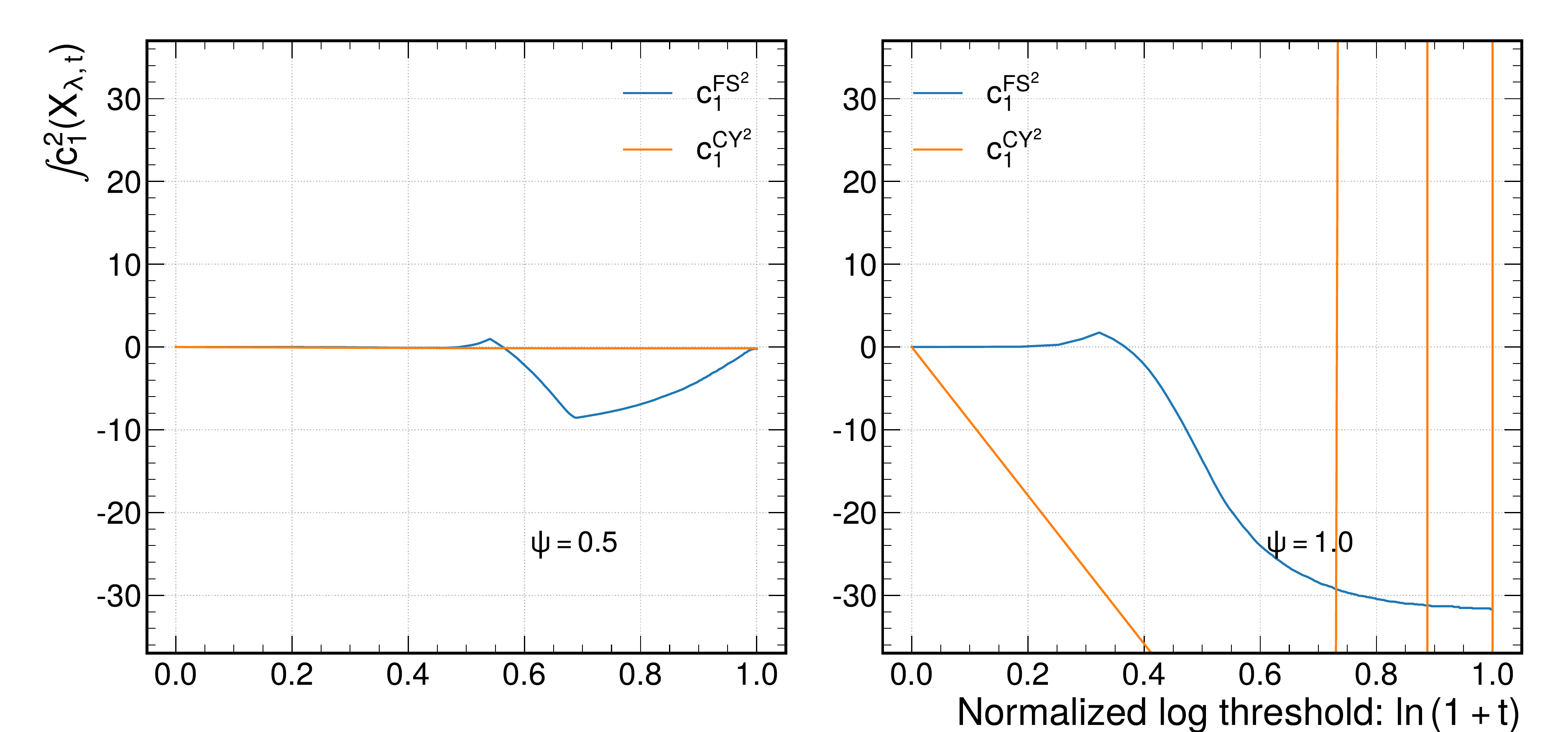}
    \vspace*{4mm}
	\caption{Asymptotic behavior of $F_1$ for the Dwork quartic.}\label{fig:f1Dwork}.
\end{figure*}

The distribution of the Euler density for the Dwork Quartic also satisfies the same positivity condition:
\begin{gather}
	e(J_\psi^\textsf{FS}) \geq 0\quad\text{and}\quad e(J_\psi^\textsf{CY})\gtrsim 0
\end{gather}

The distribution of $c_1(J_\psi)^2$ is shown on Figure~\ref{fig:c1c1_dwork_hist}.
\begin{figure*}[htb]
	\centering
	\includegraphics[width=0.8\textwidth]{graphics/c1c1_hist_comparison0.5,1.0 (1).pdf}
	\caption{Distribution of $c_1(J_\psi)^2$.}\label{fig:c1c1_dwork_hist}.
\end{figure*}

}

\subsection{Dwork quintic}
Similarly to the Dwork quartic, the Dwork quintic can be defined as a $1$-parameter family of Calabi--Yau threefolds:
\begin{gather}
    Z_\psi = \left\{\sum_{i=0}^4 z_i^5 - 5\psi \prod_{i=0}^4 z_i = 0\right\}\subseteq \mathbb{P}_\IC^4 ~.
\end{gather}
Recall that the Dwork quintic family is singular iff the complex structure parameter $\psi\in\IC$ is a $5$-th root of unity.
\begin{figure*}[htb]
    \centering
    \includegraphics[width=0.6\textwidth]{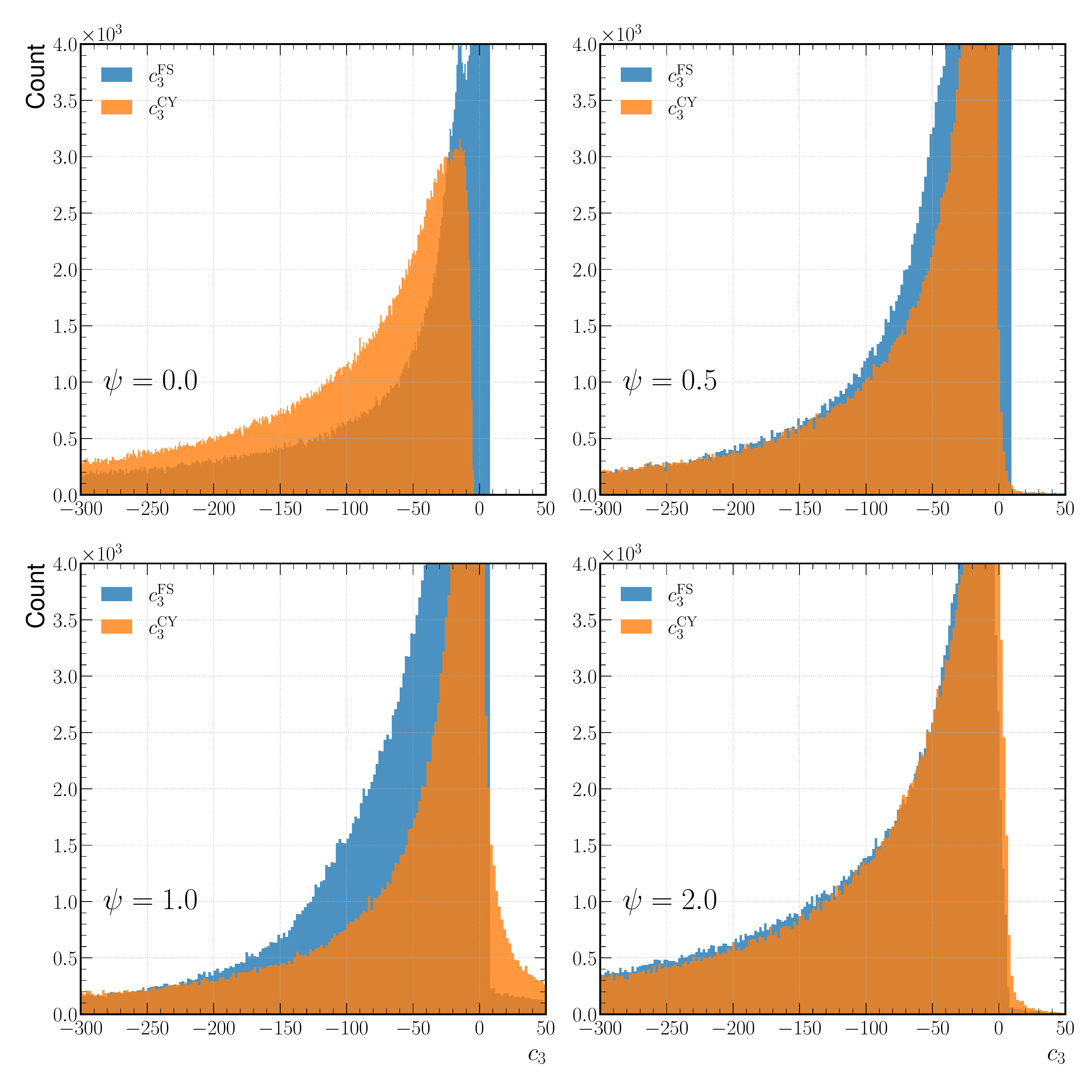}
    \caption{Curvature distributions $c_3(J_\psi)$ for the Fubini--Study and machine learned metric approximations. These distributions were obtained with the \texttt{JAX} implementation.}\label{fig:dworkQuintic_c3jax}
\end{figure*}

We may generalize Proposition~\ref{prop:pfaffian} to $n$-folds with isolated $A_1$ singularities as follows:
\begin{prop} Let $X\subseteq{\mathbb{P}}^m$ be a possibly singular projective variety with curvature form $J$ defined on the smooth locus $X_s$ of $X$. If $|\mathrm{Sing}~X|<\infty$ and the singularities have type $A_1$, then:
\begin{gather}
    \int_{X_s} e(J) = \deg{c_F(X)} - 2(-1)^{\dim{X}}|\mathrm{Sing}~X| ~.
\end{gather}
\end{prop}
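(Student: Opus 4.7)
The plan is to follow verbatim the strategy used in the proof of Proposition~\ref{prop:pfaffian}, while tracking the sign $(-1)^{\dim X}$ that was trivial in the twofold case. First I would stratify $X$ as the smooth locus $X_0 = X_s$ together with the zero-dimensional strata $X_i = \{p_i\}$ indexed by $p_i \in \mathrm{Sing}\, X$. Since $|\mathrm{Sing}\, X|<\infty$, the result of~\cite{aluffi2019pfaffian} then gives
\begin{gather}
\int_{X_s} e(J) \;=\; \chi(X) - \chi(\mathrm{Sing}\, X) + \sum_{p \in \mathrm{Sing}\, X} \mathrm{Eu}_p(X)\,\chi(\{p\}) \;=\; \chi(X) - |\mathrm{Sing}\, X| + \sum_{p} \mathrm{Eu}_p(X),
\end{gather}
using $\chi(\mathrm{Sing}\, X)=|\mathrm{Sing}\, X|$ and $\chi(\{p\})=1$.

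Next I would evaluate the local Euler obstructions via the formula of~\cite{articleCyclesPolaires} used in the original proof, namely $\mathrm{Eu}_p(X) = 1 - (-1)^{\dim X}\mu_{X\cap H}(p)$ for a generic hyperplane $H$ through $p$. The key geometric input is that a generic hyperplane section of an $A_1$ hypersurface singularity $\{z_0^2 + z_1^2 + \cdots + z_n^2 = 0\}$ is again an $A_1$ singularity in dimension $n-1$, and ordinary quadratic singularities have Milnor number $1$ in every dimension. Hence $\mu_{X\cap H}(p)=1$ and
\begin{gather}
\mathrm{Eu}_p(X) = 1 - (-1)^{\dim X}.
\end{gather}

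Finally, the Milnor class identity~\eqref{eq:milnorClass} together with $\mu_X(p)=1$ yields
\begin{gather}
\chi(X) \;=\; \deg c_{\mathrm{SM}}(X) \;=\; \deg c_F(X) - (-1)^{\dim X}|\mathrm{Sing}\, X|,
\end{gather}
and substituting this, together with the value of $\mathrm{Eu}_p(X)$, into the Aluffi formula collects the three contributions proportional to $|\mathrm{Sing}\, X|$ into a single term $-2(-1)^{\dim X}|\mathrm{Sing}\, X|$, giving the claimed identity.

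The main obstacle, such as it is, is the hyperplane-section computation: one must verify that the Lê--Teissier formula applies to arbitrary isolated $A_1$ hypersurface singularities in dimension $n$ (not only to surfaces), and that the classical statement $\mu_{X\cap H}(p)=1$ for a generic $H$ extends uniformly in $n$. Both facts are standard (\emph{cf.}~\cite{siersma2022polar, huh2014milnor}), so no genuinely new ingredient is needed beyond those already invoked in Proposition~\ref{prop:pfaffian}, and the only novelty is the careful accounting of the parity-dependent sign $(-1)^{\dim X}$, which reduces to the surface statement when $\dim X=2$.
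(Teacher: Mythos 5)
Your proposal is correct and follows essentially the same route as the paper: the paper's own proof is just the two-equality chain $\int_{X_s} e(J) = \chi(X) - (-1)^{\dim X}|\mathrm{Sing}\,X| = \deg c_F(X) - 2(-1)^{\dim X}|\mathrm{Sing}\,X|$, whose implicit justification is exactly the stratification, the Pfaffian/Euler-obstruction formula, and the Milnor-class identity that you carry over from Proposition~\ref{prop:one}. Your explicit tracking of $\mathrm{Eu}_p(X) = 1 - (-1)^{\dim X}$ and of the sign in $\chi(X) = \deg c_F(X) - (-1)^{\dim X}|\mathrm{Sing}\,X|$ supplies the details the paper omits.
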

\begin{proof}
\begin{gather}
    \int_{X_s} e(J) = \chi(X) - (-1)^{\dim{X}}|\mathrm{Sing}~X| = \deg{c_F(X)} - 2(-1)^{\dim{X}}|\mathrm{Sing}~X| ~.
\end{gather}
\end{proof}
Similarly as for Cefal\'{u}/Dwork quartic pencils, we consider the histograms of different possible top forms generated by the products of the characteristic forms. The histograms are shown on Figure~\ref{fig:dworkQuintic_c1c1c1}.

\begin{figure*}[htb]
    \centering
    \includegraphics[width=0.7\textwidth]{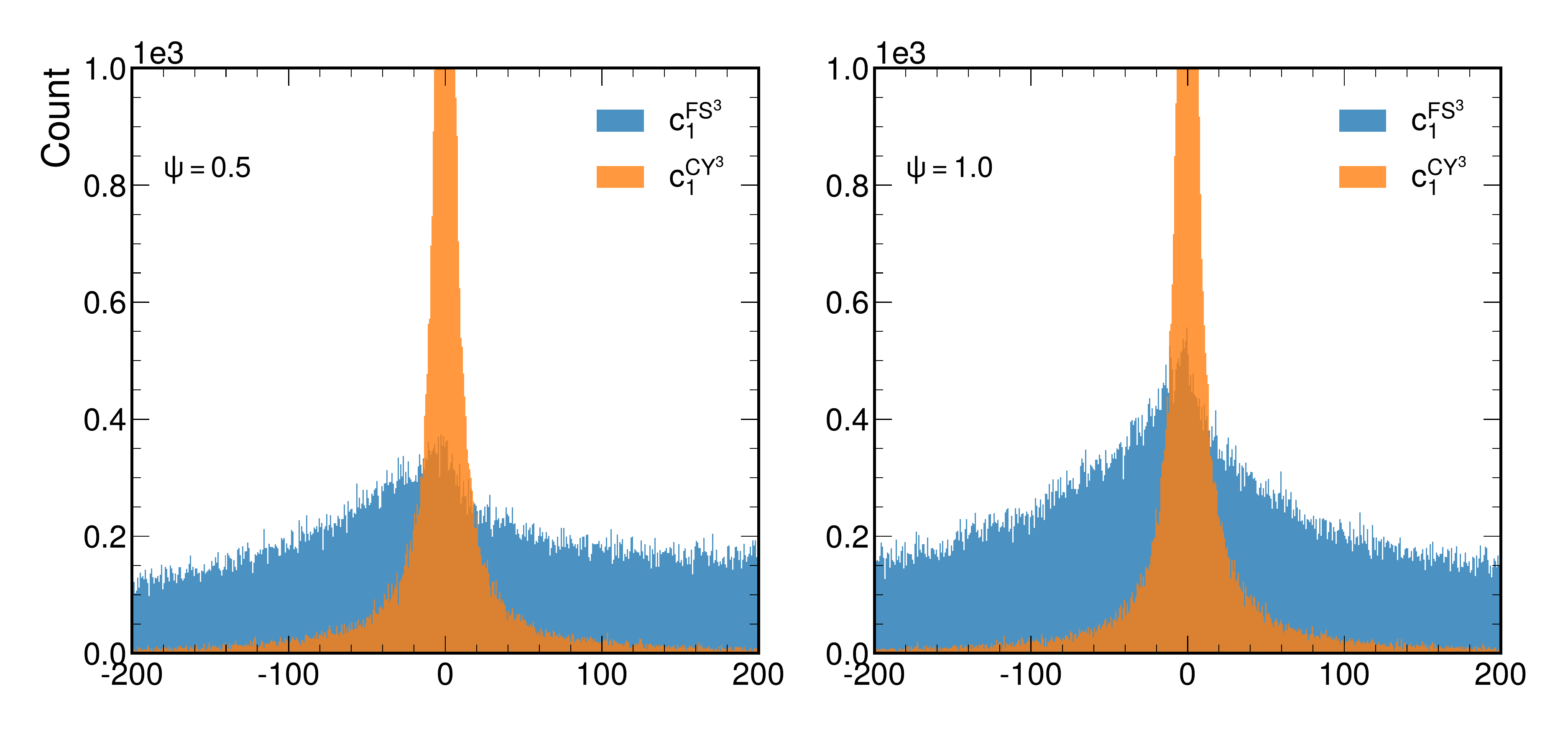}
    \caption{Distribution of the values of the $c_1^3$ for both $J_\psi^\mathrm{FS}$ and $J_\psi^\mathrm{CY}$ at $\psi\in\{1/2,~1\}$.}\label{fig:dworkQuintic_c1c1c1}
\end{figure*}

\subsubsection{Toric quintic vs.\ CICY quintic}\label{sec:cicytoric}
There are $7,890$ Calabi--Yau threefolds realized as complete intersections of polynomial equations in products of projective space~\cite{candelas:1987kf}.
Some of these are also Calabi--Yau hypersurfaces in toric varieties obtained from triangulations of reflexive polytopes in the Kreuzer--Skarke list~\cite{Kreuzer:2000xy}.
The quintic in $\mathbb{P}^4$ is one such geometry with a double description.
Thus, we can calculate a numerical Ricci-flat metric using \texttt{cymetric} by considering this manifold in either language.
The machine learned metric as a complete intersection Calabi--Yau (CICY) is better than the one obtained from the toric description, where in the latter a random choice of coefficients is used for the degree five defining equation, which is consistent with the general observation that the accuracy of the numerical method is significantly dependent on the choice of a point in the complex structure moduli space.

\begin{figure*}[htb]
    \centering
    \includegraphics[width=1.0\textwidth]{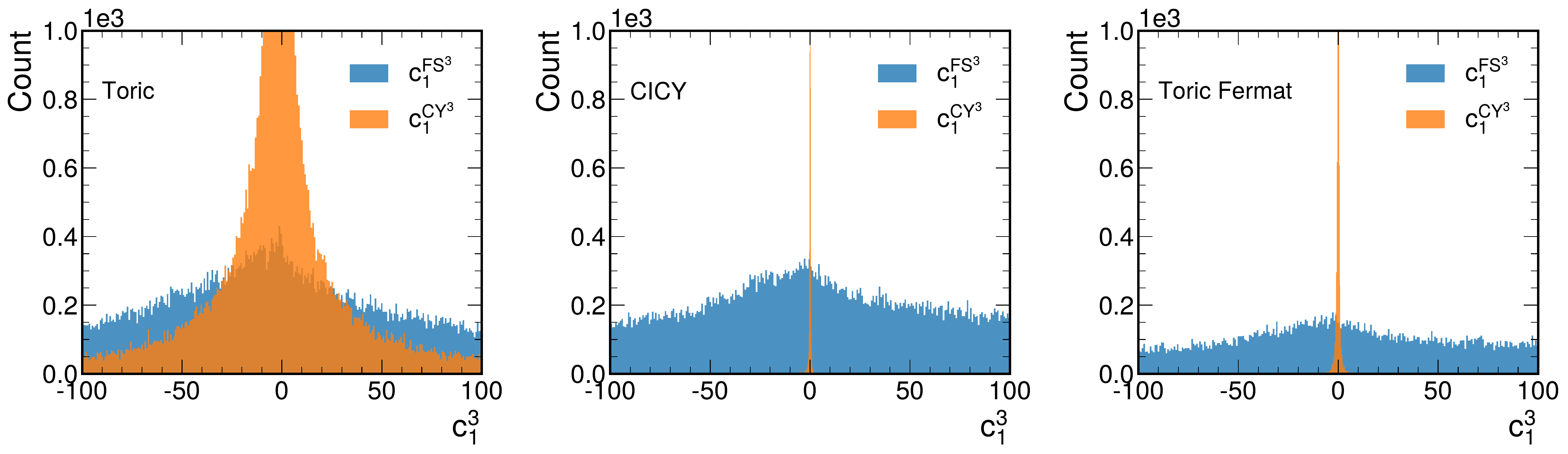}
    \vspace*{-3mm}
    \caption{Distribution of curvature densities $c_1^3$ for the toric quintic with random choice of coefficients on the left, CICY Fermat quintic ($\psi=0$) in the middle, and toric quintic at the same point in complex structure moduli as Fermat quintic on the right.}
    \label{fig:toric_cicy_comparison}
\end{figure*}
Indeed, by fixing the coefficients to be the same as that of the Fermat quintic, we see that the histogram for $c_1^3$, shown on the right in Figure~\ref{fig:toric_cicy_comparison}, is similar to that of CICY Fermat quintic shown in the middle in Figure~\ref{fig:toric_cicy_comparison}.

\setlength{\abovecaptionskip}{10pt}
\begin{table}[htb]
$$
\begin{array}{||@{~~}c@{~~} c@{~~} c@{~~} c@{~~}||} 
 \hline
\mathrm{Type} & \deg c_3(J) & \deg c_1(J)\wedge c_2(J) & \deg{c_1(J)^3}  \\ [0.5ex] 
 \hline\hline
 \mathrm{ToricFS}  & -183.87 & 16.24& 16.58 \\
 \mathrm{Toric}  & -193.46 & 11.31 & 8.27\\
 \mathrm{QuinticFS} & -196.43 & 1.28 & -0.28 \\
 \mathrm{Quintic}  & -203.55 & 0.09 & 0.00 \\[1ex] 
 \hline
\end{array}
$$
\caption{Values of the integrals of the possible top characteristic forms for threefold. The integrals were evaluated using MC integration with $100,000$ points.}
\end{table}

\subsection{Epilogue: Globally defined $\phi$-models and spectral neural networks  }
\label{sec:spectralNNs}
Note that the neural network architecture described above does not define a global smooth function $\phi\colon X \rightarrow \mathbb{R}$. This in turn implies that $J^\text{CY}$ and $J^\mathrm{ref}$ are not necessarily in the same K\"ahler class under the Ansatz $J^\text{CY} = J^\mathrm{ref} + \partial\overline{\partial}\phi$, in fact, $J^\mathrm{CY}$ might not even be well-defined. We observe this discrepancy by numerically studying the global topological characteristics of $(X,J^\mathrm{CY})$. In particular, from Figure~\ref{fig:eulerChar}, it is evident that the numerical computation of the topological Euler characteristic $\chi(X)$, computed using~\eqref{eq:topChar} deviates significantly from the expected value of $\chi(X) = 24$ for non-singular $X$.
\begin{gather}\label{eq:topChar}
	\chi(X) = \int_{X} \mathrm{Pf}(J^\mathrm{CY}) \approx \frac{(-2i)^2}{2!}\frac{1}{N}\sum_{i=1}^N \frac{w(p_i) c_2(p_i)}{(\Omega\wedge\overline{\Omega})(p_i)} ~.
\end{gather}
Furthermore, note that if $\phi$ had been a global function on $X$, then, the difference of Pfaffians must be exact~\cite{cibotaru2021odd, moroianu2022higher}:
\begin{gather}
	\mathrm{Pf}(J^\mathrm{CY}) - \mathrm{Pf}(J^\mathrm{ref})	 = \mathrm{d}\,\mathrm{TPf}(J^\mathrm{CY}, J^\mathrm{ref}) ~.
\end{gather}
Thus, the numerical approximation~\eqref{eq:topChar} should produce results which are within the expected $\chi(X) = 24$ within the margin of the error of numerical integration error.

Recently there has been some progress in designing $\IC$-homogeneous and holomorphic neural networks~\cite{douglas2021holomorphic, Douglas:2020hpv}. The idea is to construct pair-wise products of the homogeneous coordinates and apply activation functions which are holomorphic. This allows one to define networks $\phi$ (called \emph{biholomorphic networks}) such that:
\begin{gather}
	\forall \lambda\in\IC\colon \phi(\lambda\cdot z) = |\lambda|^{2k}\phi(z) ~, \qquad\text{for some}~k\in\mathbb{N} ~.
\end{gather}
However, although homogeneous, $\phi$ does not define a global function on $X$, but a section of $\mathcal{O}_X(k)\otimes \overline{\mathcal{O}_X(k)}$.

Motivated by the observations above, we propose a modified method which allows us to define $\phi$ to be a global function in $\mathcal{C}^\infty(X)$. To define the setup, let $X$ be a CICY defined as a zero locus of homogeneous polynomials $\{f_i\}_{i=1,\dots,N}$ where $f_i \in \IC[Z_0,\dots,Z_{n_i}]$, thus $X$ lies in $\mathbb{P}_\IC^{n_1}\times \dots \times \mathbb{P}_\IC^{n_N}$. For each component $\mathbb{P}_\IC^{n_i}$ of the product, define a mapping:
\begin{gather}
	\alpha_{n_i}\colon \mathbb{P}_\IC^{n_i}	\longrightarrow \mathbb{C}^{n_i+1,n_i+1} ~,
\end{gather}
whose action on a general point $p\in [Z_0\colon Z_1\colon \dots\colon Z_{n_i}]\in\mathbb{P}_\IC^{n_i}$ is defined as:
\begin{gather}
	\alpha_{n_i}(p) = \left[\begin{matrix}
		\displaystyle \frac{Z_0 \overline{Z_0}}{|Z|^2} && \displaystyle\frac{Z_0 \overline{Z_1}}{|Z|^2} && \dots && \displaystyle\frac{Z_0 \overline{Z_{n_i}}}{|Z|^2}  \\
		\displaystyle\frac{Z_1 \overline{Z_0}}{|Z|^2}  && \displaystyle\frac{Z_1 \overline{Z_1}}{|Z|^2} && \dots && \displaystyle\frac{Z_1 \overline{Z_{n_i}}}{|Z|^2} \\
		\vdots && \vdots && \ddots && \vdots \\
		\displaystyle \frac{Z_{n_i} \overline{Z_0}}{|Z|^2} && \displaystyle\frac{Z_{n_i} \overline{Z_1}}{|Z|^2} && \dots && \displaystyle\frac{Z_{n_i} \overline{Z_{n_i}}}{|Z|^2}
	\end{matrix}\right] ~.
\end{gather}
Note that $\alpha_{n_i}$ is a well-defined global smooth function on $\mathbb{P}_\IC^{n_i}$ and thus its restriction $\alpha_{n_i}\vert_{X}$ is a well-defined smooth function on $X$. Furthermore, note that the components of $\alpha_{n_i}$ correspond to $k_\phi = 1$ basis used in~\cite{PhysRevD.103.106028} to build the eigenfunctions of Laplace operator $\Delta$, thus we shall refer to the layer of the neural network which applies $\alpha_n$ as a spectral layer and the corresponding neural network - a spectral neural network. We then decompose $\alpha_{n_i}$ into a real and imaginary components:
\begin{gather}
	\mathrm{ReIm}\colon \alpha_{n_i}(p) \mapsto \left(\mathrm{Re}\circ \alpha_{n_i}(p),~\mathrm{Im}\circ \alpha_{n_i}(p)\right)/\text{``redundancies''} ~.
\end{gather}
Bearing in mind that $\alpha_{n_i}$ is a Hermitian matrix, we see that $\mathrm{ReIm}\circ\alpha_{n_i}$ is made up of $(n_i+1)^2$ independent real entries. We then define $\rho\colon \mathbb{R}^{\sum_i(n_i+1)^2}\longrightarrow \mathbb{R}$ to be a neural network with $d$-layers and $W_i$-nodes in layer $i \in \{1,\dots, d\}$.
Finally, we combine these mappings to define $\phi\colon X\rightarrow\mathbb{R}$ as:
\begin{gather}
	\phi := \rho\circ \begin{pmatrix}
		\mathrm{ReIm}\circ \alpha_{n_1}\vert_X,~ \mathrm{ReIm}\circ \alpha_{n_2}\vert_X,~\dots,~\mathrm{ReIm}\circ \alpha_{n_N}\vert_X
	\end{pmatrix} ~.
\end{gather}
That $\phi$ is well-defined on $X$ is trivial, since for all $p\,{=}\, ([Z_0^{(1)}\colon\dots\colon Z_{n_1}^{(1)}],\dots, [Z_0^{(N)}\colon \dots \colon Z_{n_N}^{(N)}])$ in $X$, and for every $\lambda = (\lambda_1,\dots,\lambda_{N})\in{\IC^\times}^N$, we have:
\begin{gather}
	\phi(\lambda\cdot p)	 = \phi((\lambda_1 p_1,\dots, \lambda_N p_N)) = \rho\circ \begin{pmatrix}
		\mathrm{ReIm}\circ \alpha_{n_1}(\lambda_1 p_1),~\dots,~\mathrm{ReIm}\circ \alpha_{n_N}(\lambda_N p_N)
		\end{pmatrix} = \\ \notag =\rho\circ \begin{pmatrix}
		\mathrm{ReIm}\circ \alpha_{n_1}(p_1),~\dots,~\mathrm{ReIm}\circ \alpha_{n_N}(p_N) 
		\end{pmatrix} = \phi(p) ~.
\end{gather}

The spectral neural network architecture of $\phi$ is shown on Figure~\ref{fig:spectralNN}. Note that the spectral networks with $0$ hidden layers and $x\mapsto\ln(x)$ activation function are equivalent to algebraic metrics with $k=1$~\cite{Headrick:2005ch, Headrick:2009jz}.
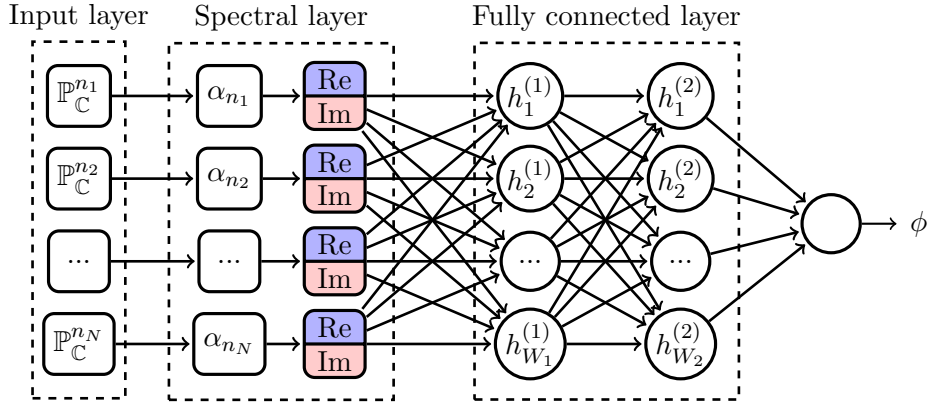
\begin{figure}[htb]
\centering
\begin{tikzpicture}[shorten >=1pt,->,draw=black, line width=1]
	\tikzstyle{RR}=[very thick, rounded corners, draw=black, minimum size=23];
	\tikzstyle{Split} = [rectangle split, rectangle split parts=2, rectangle split part fill={blue!30,red!20}, rounded corners, draw=black, very thick, minimum size=23, inner sep=2pt, text centered];
	\tikzstyle{neuron}=[very thick,circle,draw=black,minimum size=22,inner sep=0.5,outer sep=0.6]

	%% Input layer
	\foreach \name / \y in {1,...,4}
		\node[RR] (I-\name) at (0,-1.1*\y cm) {
			\ifnum \y < 3
				$\mathbb{P}^{n_\y}_\mathbb{C}$
			\fi
			\ifnum \y = 3
				$...$
			\fi
			\ifnum \y = 4
				$\mathbb{P}^{n_N}_\mathbb{C}$
			\fi
			};

	%% Normalized Bihomogeneous
	\foreach \name / \y in {1,...,4}
		\node[RR] (BH-\name) at (\layersep, -1.1*\y cm) {
			\ifnum \y < 3
				$\mathrm{\alpha}_{n_\y}$
			\fi
			\ifnum \y = 3
				$...$
			\fi
			\ifnum \y = 4
				$\mathrm{\alpha}_{n_N}$
			\fi
		};

	% Connect input to NBH
	\foreach \i in {1,...,4}
		\path (I-\i) edge (BH-\i);

	%% Split to [Re|Im]
	\foreach \name / \y in {1,...,4}
		\node [Split] (ReIm-\name) at (1.7*\layersep, -1.1*\y cm)  {$\mathrm{Re}$\nodepart{two} $\mathrm{Im}$};

	% Connect NBH to ReIm
	\foreach \i in {1,...,4}
		\path (BH-\i) edge (ReIm-\i);

	%% Dense neural network
	\foreach \name / \y in {1,...,\widthNN}
		\path[yshift=(\widthNN-4)*0.5cm]
			node[neuron] (N1-\name) at (3*\layersep, -1.1*\y cm) {
				\ifnum \y < \numexpr\widthNN-1\relax
					$h_\y^{(1)}$
				\fi
				\ifnum \y = \numexpr\widthNN-1\relax
					...
				\fi
				\ifnum \y > \numexpr\widthNN-1\relax
					$h_{W_1}^{(1)}$
				\fi
			};
			
	\foreach \name / \y in {1,...,\widthNN}
		\path[yshift=(\widthNN-4)*0.5cm]
			node[neuron] (N2-\name) at (4*\layersep, -1.1*\y cm) {
				\ifnum \y < \numexpr\widthNN-1\relax
					$h_\y^{(2)}$
				\fi
				\ifnum \y = \numexpr\widthNN-1\relax
					...
				\fi
				\ifnum \y > \numexpr\widthNN-1\relax
					$h_{W_2}^{(2)}$
				\fi
			};

	% Create edges
	\foreach \i in {1,...,4}
		\foreach \j in {1,...,\widthNN}
			\path (ReIm-\i) edge (N1-\j);
	
	\foreach \i in {1,...,\widthNN}
		\foreach \j in {1,...,\widthNN}
			\path (N1-\i) edge (N2-\j);

	%% Final output
	\path[yshift=0.5cm]
		node[neuron, pin={[pin edge={->, black, thick}]right:$\phi$}] (phi) at (5*\layersep, -1.1*3) {};

	% Connect all edges
	\foreach \i in {1,...,\widthNN}
		\path (N2-\i) edge (phi);

	% Draw bounding rectangle for input layer
	\node[draw, dashed, rectangle, minimum height=135pt, fit=(I-1) (I-4), label={[align=center]above:Input layer}] {};

 	% Draw bounding rectangle for NBL
	\node[draw, dashed,rectangle, minimum width=85pt, minimum height=135pt, fit=(BH-1) (ReIm-4), label={[align=center]above:Spectral layer}] {};
	
	% Draw bounding rectangle for Dense layer
	\node[draw, dashed, rectangle, minimum height=135pt, minimum width=100pt, fit=(N1-1) (N2-\widthNN), label={[align=center]above:Fully connected layer}] {};

\end{tikzpicture}
\vspace{0.4cm}
\caption{Spectral neural network architecture: Prior to the fully connected neural network we introduce the spectral layer, taking real and imaginary parts of $\mathbb{C}^*$-invariant quantities. }\label{fig:spectralNN}
\end{figure}
In Figure~\ref{fig:eulerCharAll}, we present 
the values for the Euler number obtained using the spectral neural networks. The numerical integrals computed using the spectral networks sketched in Figure~\ref{fig:spectralNN} are mostly close to the expected Euler characteristic, within the margin of the error due to the Monte Carlo integration. 
\begin{figure*}[htb]
	\centering
	\includegraphics[width=1.0\textwidth]{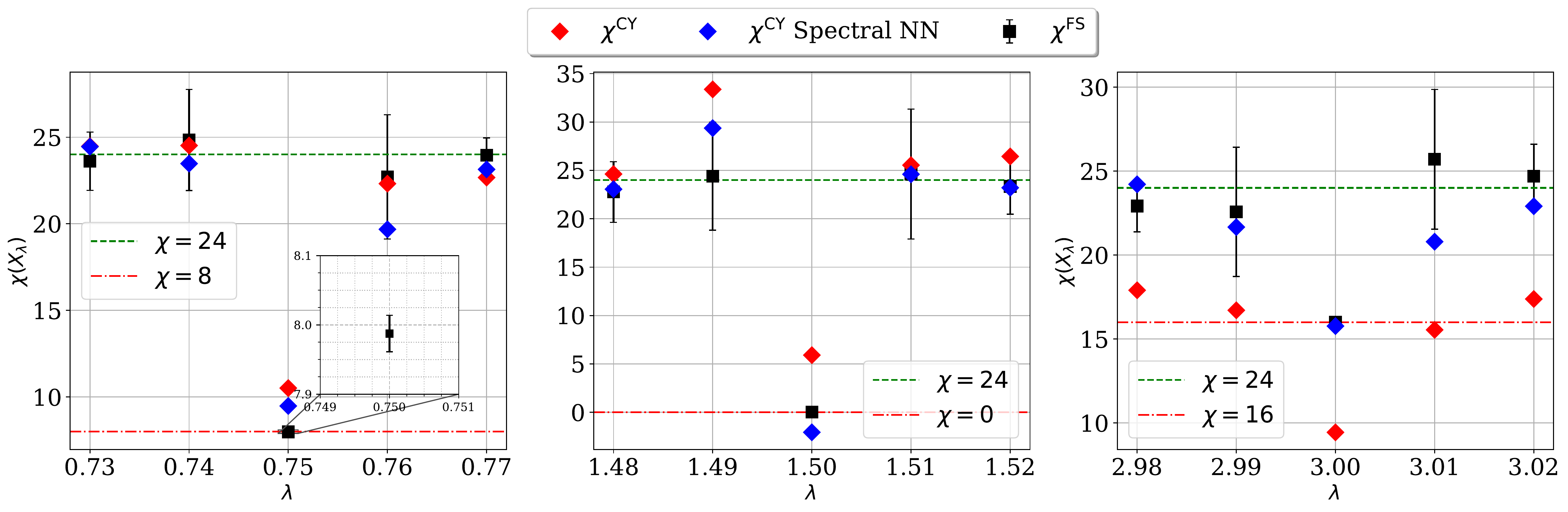}
	\caption{Numerical values of~\eqref{e:chiK3} along the Cefal\'{u} pencil. Black points and error bars showing a $95\%$ confidence interval are associated to Fubini--Study results, while the red and blue dots correspond to the machine learned metric approximation using fully-connected and spectral networks, respectively. See Appendix~\ref{a:numIntegration} for the details on integration.}\label{fig:eulerCharAll}
\end{figure*}

\begin{figure*}[htb]
    \centering
    \includegraphics[width=0.6\textwidth]{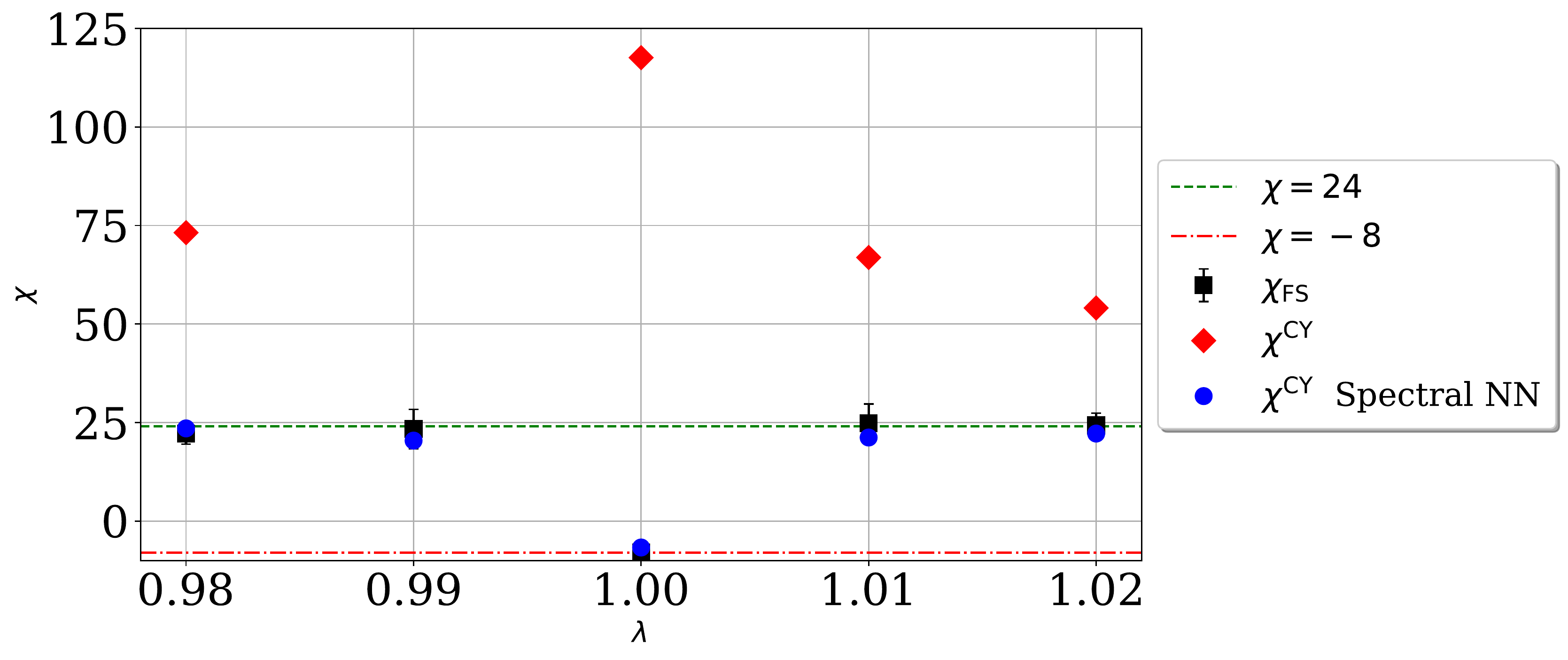}
    \caption{Numerical values of~\eqref{e:chiK3} along the Cefal\'{u} pencil near $\lambda = 1$. The plot markers are the same as in Figure~\ref{fig:eulerChar}. The value of $\chi^\text{CY}$ using fully-connected network at $\lambda = 0.99$ is off the chart: $\chi^\text{CY} \approx 85510$.}\label{fig:eulerCharLambda1}
\end{figure*}

The numerical values of~\eqref{e:chiK3} along the Cefal\'{u} pencil near $\lambda = 1$ are shown on the Figure~\ref{fig:eulerCharLambda1}. There are no convergence issues for spectral network Euler number at $\lambda=1$, in sharp contrast with the fully connected neural network. The convergence plot in shown in Figure~\ref{fig:convl1}.
\begin{figure*}[htb]
	\centering
	\includegraphics[width=.99\textwidth]{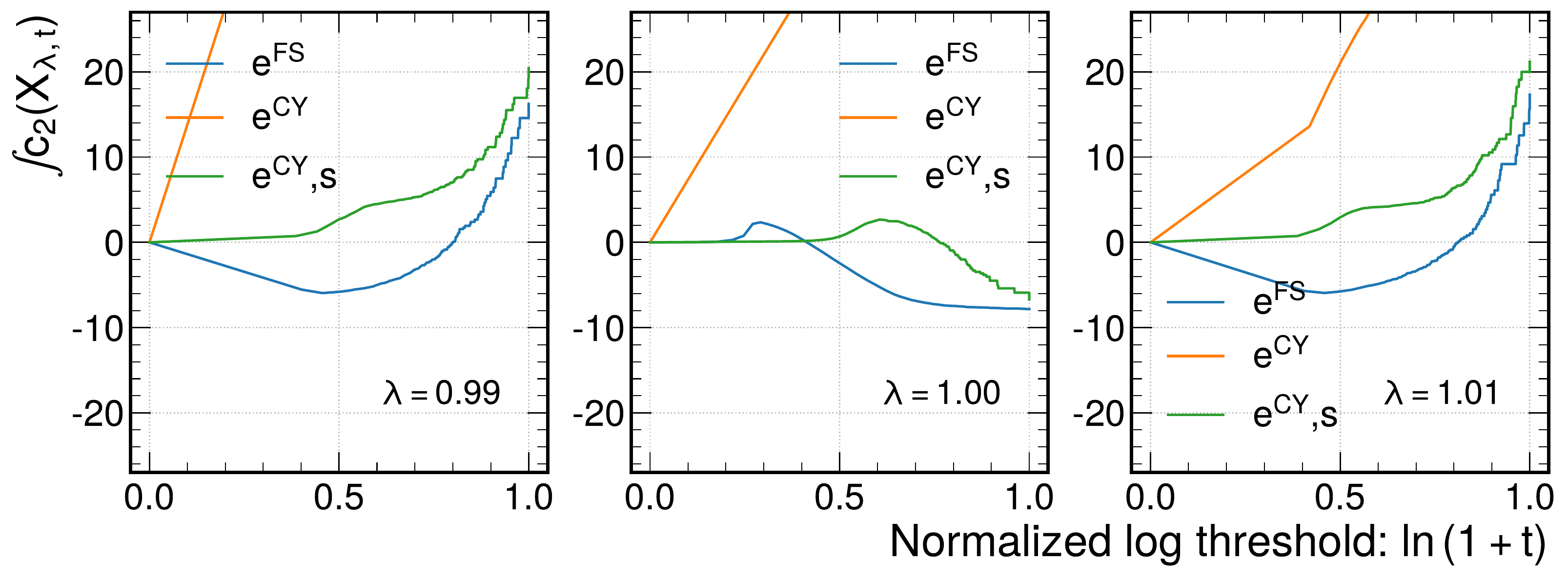}.
	\caption{Convergence plot for $c_2(X_\lambda)$ around $\lambda=1$; the spectral network results (green, ``s'' subscript) show significant improvement.}\label{fig:convl1}
\end{figure*}

\begin{table}\centering
\begin{tabular}{||ccccc||}
\hline
      $\mathbf{\lambda}$ & $~|\mathrm{Sing}~X|~$ & $~\mathrm{deg}~c_2(J_\lambda^\mathrm{FS}~)$ & ~$\mathrm{deg}~c_2(J_\lambda^\mathrm{CY})$  &~ $\mathrm{deg}~c_2(J_\lambda^\mathrm{CY})$ \\
& &  &  w/ Spectral Layer~ &~ w/o Spectral Layer~\\
      \hline\hline
      $0$ & $0$ & $24$ & $24$ & $24$ \\
      $3/4$ & $8$ & $7.99\pm 0.03$ & $9.48$ & $10.52$ \\
      $1$ & $16$ & $-7.99\pm 0.08$ & $-6.71$ & $117.59$ \\
      $3/2$ & $12$ & $0.0\pm 0.1$ & $-2.07$ & $5.90$ \\
      $3$ & $4$ & $16.00\pm 0.09$ & $15.77$ & $9.44$ \\ \hline
  \end{tabular}
    \caption{Values of the Monte Carlo approximations of the integrals of $c_2(X_\lambda)$. Note that the results computed using spectral networks are closer to the expected value. The points that were sampled in order to compute above integrals are different than the ones sampled in Figure~\ref{fig:eulerChar}.}\label{tab:eulercomparison}
\end{table}

The plots in Figure~\ref{fig:convl1} highlight a sharp contrast among the $\phi$-model neural network and the spectral neural network. As one includes points close to the singularities, the curvature for the $\phi$-model starts diverging. This suggests that as one approaches the singularity the neural network is not to be trusted. This is not the case for the spectral networks. Here, as one continues adding points to the curvature integral it keeps following the Fubini--Study. In Table~\ref{tab:eulercomparison}, we notice that spectral neural networks and Fubini--Study produce similar results for the curvature.

\section{Conclusions}

In this work we have considered two families of Calabi--Yau manifolds: the Cefal\'u family of quartics and the more broadly studied Dwork quintic family. For both of these, we have developed the algorithms to compute topological quantities derived from their corresponding Chern characters. This implementation can be easily extended to the whole CICY dataset. Our algorithms utilize some of the neural network approximation models of the \texttt{cymetric} package: the so called \texttt{PhiModel}. We also employ our own \texttt{JAX} implementation of this. 

Computation of topological quantities is a crucial fitness check for numerical Calabi--Yau metrics. At first one might think that these relatively straightforward computations automatically work out as they are metric independent. However, one has to bear in mind that the possible neural network approximations constitute a far broader set of solutions than that of globally defined K\"ahler metrics. Choosing smooth activation functions for the neural network ensures that the metric is smooth over each of the patches. Similarly, if the metric is obtained from the so-called \texttt{PhiModel}, over each patch one has $dJ=0$, satisfying some local form of K\"ahlerity. In the matching of patches, however, it is not guaranteed \textit{a priori} that the perturbation $\phi_{NN}$ respects the K\"ahler transformation rules of the seed K\"ahler potential (in our case, the Fubini--Study potential). That is an inherent issue with these numerical approximations and for the cases in which this situation is non-negligible we expect significant deviations when computing topological quantities. 

Deviations from the expected values are not necessarily due to the approximation: they might also be due to the sample size. In the smooth Calabi--Yau cases, we have observed that the neural network approximation to the flat metric behaves well over the entire manifold, and convergence to the right Euler number is achieved. In some of the singular cases, the neural network approximation gives curvature values in the vicinity of the singularities, that once weighted in the Euler number integral produce divergent results. While this is expected, it is also problematic, as increasing the number of points inevitably brings us close to the singularities, leading ultimately to unacceptable metrics. We notice that these situations can occur in the case of the $\phi$-model neural network for some values of $\lambda$. As a surprising result, we observe that for the spectral networks, the curvatures near the singularities remain in check and that the global computations match the Fubini--Study results within numerical errors. We observed that for all values of $\lambda$ considered here there is convergence for the Euler number as one increases the number of points. This suggests that in the singular cases, the spectral neural network is a good generalization of the metric in high curvature regions. The best results were obtained in cases where the number of points used for training is of similar order as the number of points used for the integration. 

Spectral networks seem to be able to provide reliable topological data even for singular manifolds. As a proof of principle we demonstrated how the spectral network predictions for the Euler number are within error bars for the Cefal\'{u} family of quartics. The spectral networks also lead to smaller sigma losses compared to a standard neural network approximation: As we see in Figure~\ref{fig:cefalu_spectral_vs_fullyconnected_at_0}, the $\sigma$ loss for the spectral network at $\lambda=0$ is below $10^{-3}$. This is consistent with the level of accuracy attained for $k=8$ in~\cite{Headrick:2009jz}. 

\begin{figure*}[htb]
    \centering
    \includegraphics[width=0.7\textwidth]{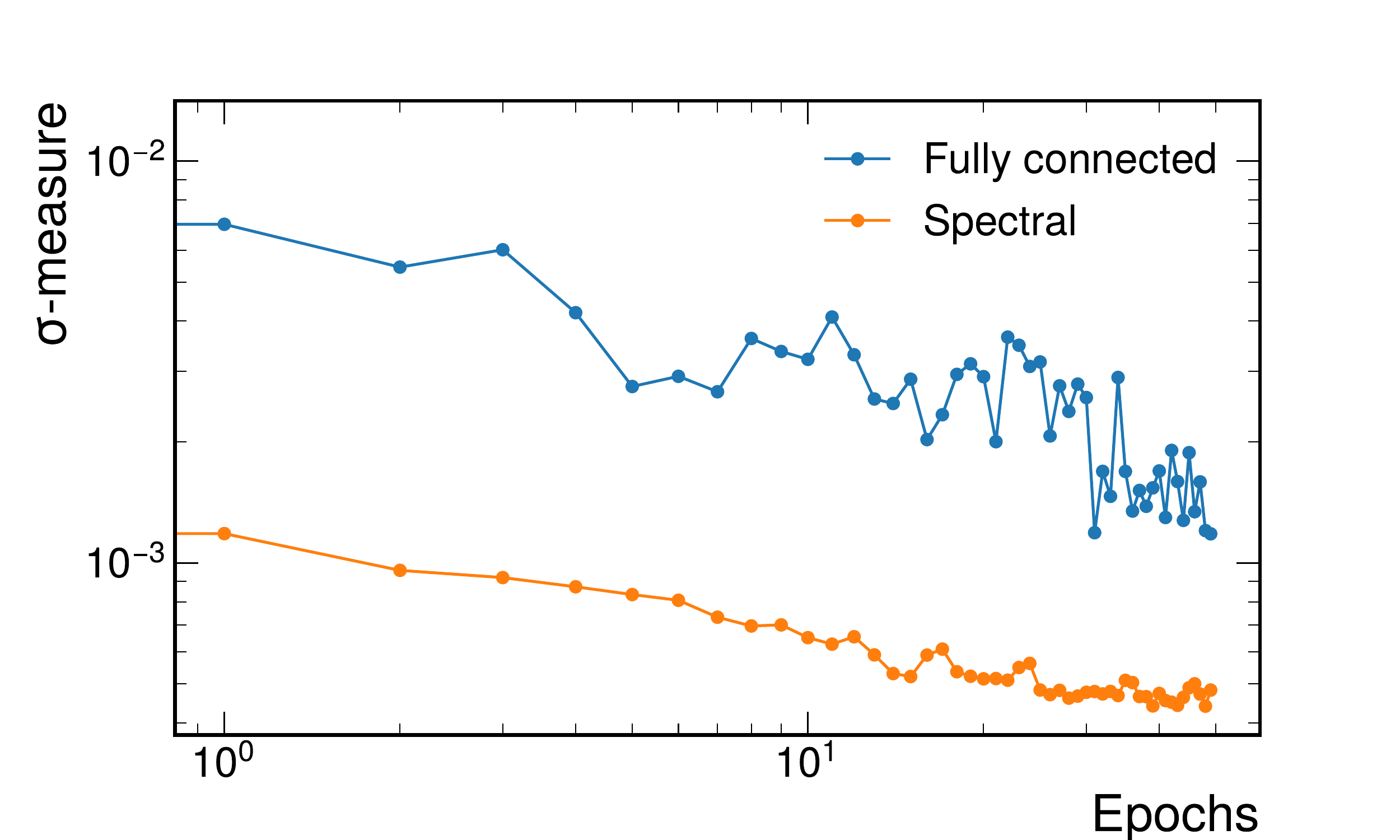}
    \vspace{5mm}
    \caption{Evolution of $\sigma$ loss at $\lambda = 0$ using fully-connected and spectral networks indicates a better performance for the latter. The $\sigma$ loss is evaluated on the validation set. This network is elaborated further in~\cite{spectralNetworks}.}\label{fig:cefalu_spectral_vs_fullyconnected_at_0}
\end{figure*}

A globally consistent Calabi--Yau metric must be able to produce the right value for any topological value on the manifold. Having numerical approximations implies having errors in the computation. A natural question is what error values are tolerable and how they can be related to the error (loss) function in the numerical approximation. Clearly, topological quantities are not sensible to flatness, but as we already highlighted, they are a crucial check for the global consistency of the metric approximation. This global consistency if of utmost importance, particularly in computation of ``global'' quantities such as the Yukawa couplings for a given string compactification model. 

Our work makes use of the \texttt{JAX} $\int_Q c_3$,
where the third Chern form $c_3$ is derived from the curvature two-form on a Hermitian manifold. Using a three-layer densely connected neural network with 64 units in each layer as the $\phi$-approximant,\footnote{The numerical experiments were performed on an Intel 16-Core Xeon Gold 5218 machine with an Nvidia A100 40GB GPU.} employing $1024$ points in the Monte Carlo integration, this computation takes $387 \text{ms} \pm 7.54\ \text{ms}$ using \texttt{JAX} compared to $4.41 \text{s} \pm 18.3\ \text{ms}$ (mean $\pm$ standard deviation over $7$ runs.)

We intend to open source our codebase as a fully-fledged package in a forthcoming publication~\cite{spectralNetworks}.

\section*{Acknowledgments}
We thank Carl Henrik Ek, Mario Garcia-Fernandez, Mathis Gerdes, Matt Headrick, and Fabian Ruehle, for helpful discussions, Oisin Kim for critical and instructive suggestions, and Katrin Wendland for discussions on the ``$\approx$''-identification subtleties in~\eqref{e:8->16} and in Figure~\ref{f:Cefalu}.
We are grateful to the anonymous referee for helpful and constructive comments on the text.
PB would like to thank the Hamilton Institute and Mathematics Department at Trinity College Dublin for their hospitality.
GB also wants to thank the 2022 IAIFI Summer School including the computational resources provided through the school.
TH is grateful to the Department of Mathematics, University of Maryland, College Park, MD and the Physics Department of the Faculty of Natural Sciences of the University of Novi Sad, Serbia, for the recurring hospitality and resources.
PB and GB are supported in part by the Department of Energy grant DE-SC0020220.
VJ is supported by the South African Research Chairs Initiative of the Department of Science and Innovation and the National Research Foundation.
CM is supported by
a Fellowship with the Accelerate Science program at the Computer Laboratory, University
of Cambridge. JT is supported by a studentship with the Accelerate Science Program. 

\appendix

\section{Numerical integration}\label{a:numIntegration}
The sampled points are not uniformly distributed with respect to the desired Ricci-flat metric~\cite{braun2008calabi, shiffman1999distribution}. Instead, the density sampled points $\rho$ is built in such a way that 
\be
    \int_{X}{\rm dVol}_\textsf{FS}\;\rho=1 ~,
    \label{eq:one}
\ee
having the condition that the sampled points are uniformly distributed with respect to the Fubini--Study metric $\rho\sim  1/\mathrm{dVol}_\textsf{FS}$, we obtain the following expression
\be
\rho=\frac{1}{\rm Vol_\textsf{CY}}\,\frac{{\rm dVol}_\textsf{CY}}{{\rm dVol}_\textsf{FS}} ~.
\ee
with ${\rm dVol}_\textsf{CY}=\Omega\wedge\bar{\Omega}$. In this fashion, integration of a given function over the Calabi--Yau one obtains 
\be
\int_X {\rm dVol}_\textsf{CY}\; f(z,\bar{z})={\rm Vol}_\textsf{CY}\int_X {\rm dVol}_\textsf{FS}\; f(z,\bar{z})\,\rho ~.
\label{eq:two}
\ee
Let's now consider the integration as a finite sum of $N_p$ points uniformly distributed with respect to the Fubini--Study metric,~\eqref{eq:one} reads
\be
    \int_{X}\rho\;{\rm dVol}_\textsf{FS}=\frac{{\rm Vol}_\textsf{FS}}{N_p}\sum_i \rho(p_i)=1 ~.
\ee
Similarly,~\eqref{eq:two} reads:
\be
\int_X {\rm dVol}_\textsf{CY}\; f(z,\bar{z})={\rm Vol}_\textsf{CY}\frac{{\rm Vol}_\textsf{FS}}{N_p}\sum_i f(p_i)\,\rho(p_i)=\frac{{\rm Vol}_\textsf{CY}}{\sum_i \rho(p_i)}\sum_i f(p_i)\,\rho(p_i) ~.
\label{eq:two2}
\ee
Special care needs to be taken when dealing with topological quantities, such as 
\be 
\int_X c_3=\int_X {\rm dVol}_\textsf{FS}\; \frac{c_3}{{\rm dVol}_\textsf{FS}}=\frac{{\rm Vol}_\textsf{FS}}{N_p}\sum_{i} \frac{c_3}{{\rm dVol}_\textsf{FS}}\rho(p_i) ~.
\ee%
Using a different expression for the Euler number (from the Chern--Gauss--Bonnet theorem)
\be
\chi=\int_X {\rm dVol}_g\; K(R_g) ~,
\ee%
where $K(R_g)$ is a function of the Riemann tensor. Then if we want to compute the Euler number for the Fubini--Study metric we get:
\be
\chi_\textsf{FS}=\frac{{\rm Vol}_\textsf{FS}}{N_p} \sum_i K(R_\textsf{FS}) ~.
\ee
For the machine learned numerical metric 
\be
\chi_\textsf{ML}=\frac{{\rm Vol}_\textsf{FS}}{N_p} \sum_i K(R_\textsf{ML}) \frac{{\rm dVol}_\textsf{ML}}{{\rm dVol}_\textsf{FS}} ~.
\ee

\section{Plurisubharmonic property (metric positivity)}\label{sec:psh}
In order for $J_\phi = J_\text{FS} + \partial\overline{\partial}\phi$ to define a K\"ahler metric, the corresponding Riemannian metric must be positive definite, that is, $\phi\in C^\infty(X)$ must be $J_\text{FS}$-psh function, that is, for every point $p\in X$, there is a neighborhood $U$ of $p$, s.t. locally $J\vert_U = \partial\overline{\partial}(\psi + \phi)$, where $\psi+\phi$ is psh on $U$. Equivalently, $\phi$ is $J_\text{FS}$-psh if we have:
\begin{gather}
    J_\text{FS} + \partial\overline{\partial}\phi \geq 0 ~,
\end{gather}
in the sense of currents. The strict positivity is guaranteed by non-degeneracy of $J_\phi$. We may numerically check this, by computing the eigenvalues of the Hermitian matrix of coefficients of $J_\phi$. Then, $J_\phi>0$ iff all of the eigenvalues $\lambda$ of its matrix of coefficients are positive.

\begin{figure*}[htb]
    \centering
    \includegraphics[width=0.8\textwidth]{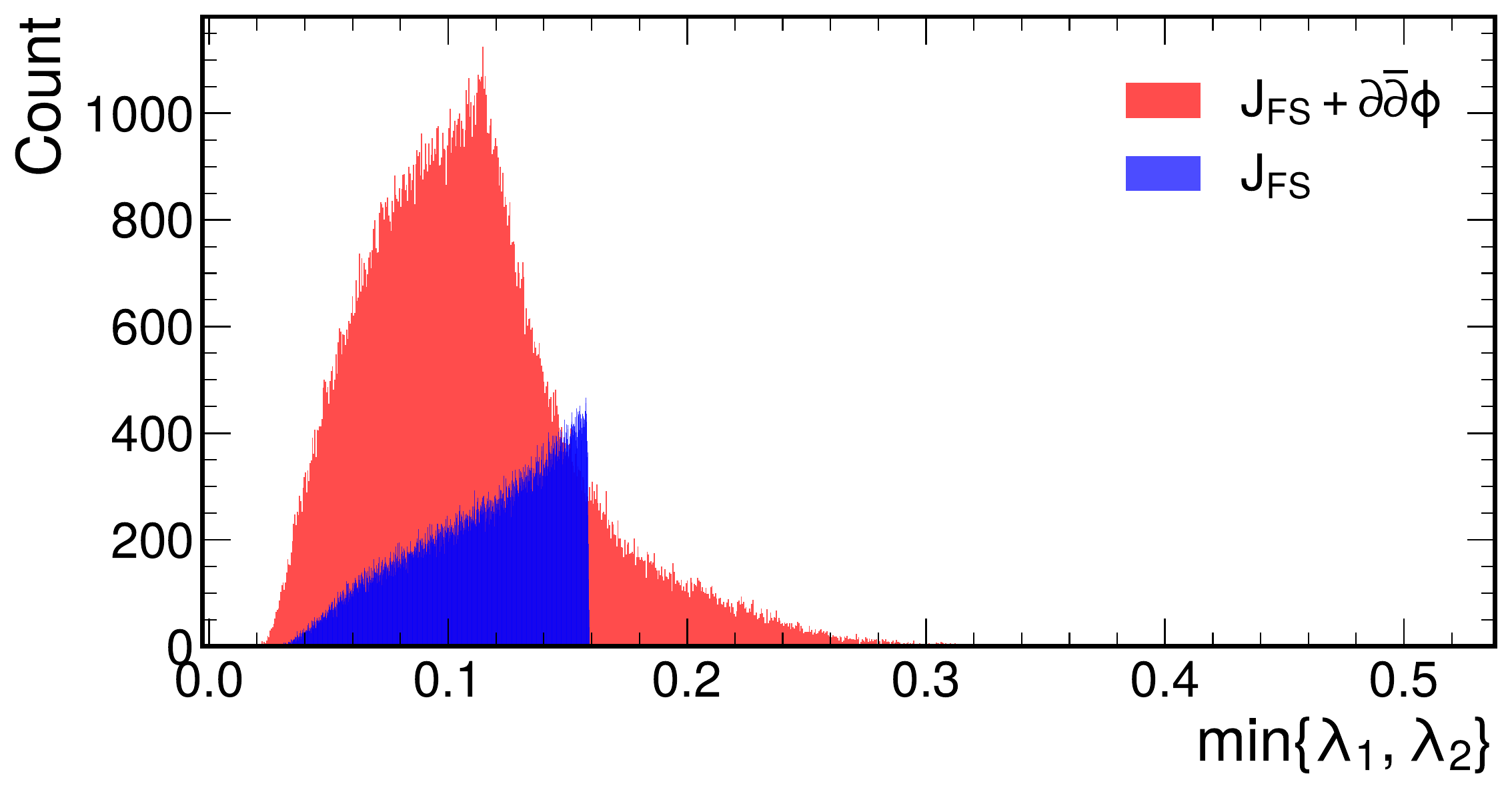}
    \caption{Distribution of the minimum of the eigenvalues over $X_\lambda$ for $\lambda = 0.99$.}\label{fig:psh_0.99}
\end{figure*}

The numerical result for $\lambda = 0.99$ using a spectral network $\phi$ compared against $J_\text{FS}$ is shown on Figure~\ref{fig:psh_0.99}. We observe a similar positivity of eigenvalues for different values of $\lambda$ in the Cefal\'{u} pencil. However, it must be noted that although this is a necessary condition for $\phi$ to be $J_\text{FS}$-plurisubharmonic, this is not sufficient.
\section{Classical volume terms dominate}
As an additional crosscheck we must ensure that the numerical quantities obtained are the most relevant even after including quantum corrections. Writing the \kae form in a basis for $H^{(1,1)}(X,\ZZ)$ as $J=t^i J_i$ we obtain the volume as 
\be 
V=\int_X J \wedge J \wedge J = k_{ijk} t^i t^j t^k ~,
\ee
where $k_{ijk}$ are the triple intersection numbers. Take the \kae potential for the quintic 
\be
K(z,\bar{z})=\frac{t}{2\pi}{\rm log}(z\bar{z})
\ee
in this manner, $J(z,\bar{z})=t J_0$, and the volume 
\be
V=t^3 V_0\,, \quad V_0=\int_X J_0\wedge J_0 \wedge J_0 ~.
\ee
The Kahler potential for the IIB flux compactification on the Calabi--Yau is given (with the first $\alpha^\prime$ corrections is given by~\cite{Cicoli:2008va}
\be
K=-2\,{\rm log}\left(V-\frac{\chi(X)\,\zeta(3)}{4(4\pi)^3 g_s^{3/2}}\right)-{\rm log}(S+\bar{S})-{\rm log}\left(-i\int_X \Omega \wedge \bar{\Omega}\right) ~.
\ee
The term $\chi(X)\,\zeta(3)/4(4\pi)^3 g_s^{3/2}$ is the first $\alpha^\prime$ correction term. Note that it does not scale with $t$ as the Euler number is topological. In general, $\alpha^\prime$ does not have a $t$ dependence. Note that $g\sim t$, and hence $g^{-1}\sim t^{-1}$. In this manner, the Christoffel symbols $\Gamma \sim g^{-1}\partial g$ and therefore the Riemann tensors are independent of $t$.

\bibliographystyle{JHEP}
\bibliography{ref}

\end{document}